\documentclass{fundam}
\usepackage{amsmath,amssymb}
\usepackage{graphicx}
\usepackage{url}
\usepackage{hyperref}
\usepackage{algorithm}
\usepackage{algpseudocode}
\usepackage{enumitem}

\newcommand{\Con}{\mathrm{Con}}
\newcommand{\Sub}{\mathrm{Sub}}

\newcommand{\FICSL}{\mathcal{FICSL}}
\newcommand{\FICSLFCP}{\mathcal{FICSL}^{\mathrm{FCP}}}
\newcommand{\Real}{\mathrm{Real}}
\newcommand{\VHFGS}{\mathcal{VH}\text{-}\mathcal{FGSL}}
\newcommand{\ISO}{\mathrm{ISO}}

\newcommand{\rank}{\mathrm{rank}}
\newcommand{\ports}{\mathrm{ports}}

\newcommand{\irank}{\mathrm{irank}}

\newcommand{\BRep}{\mathrm{BRep}}

\algrenewcommand\algorithmicrequire{\textbf{Input:}}
\algrenewcommand\algorithmicensure{\textbf{Output:}}
\algnewcommand\algorithmicoracle{\textbf{Oracle:}}
\algnewcommand\Oracle{\item[\algorithmicoracle]}
\algrenewcommand\algorithmiccomment[1]{// #1}

\begin{document}

\title{Distributional Learning of Graph Languages Generated by Fixed-Interface Clause Systems}

\author{
Takayoshi Shoudai\\
Department of Computer Science and Engineering,
Fukuoka Institute of Technology\\
Fukuoka 811-0295, Japan\\
\texttt{shodai@fit.ac.jp}
\and
Satoshi Matsumoto\\
Department of Mathematical Sciences,
Tokai University\\
Hiratsuka 259-1292, Japan\\
\texttt{matsumoto@tokai.ac.jp}
\and
Yusuke Suzuki\\
Graduate School of Information Sciences,
Hiroshima City University\\
Hiroshima 731-3194, Japan\\
\texttt{y-suzuki@hiroshima-cu.ac.jp}
\and
Tomoyuki Uchida\\
Graduate School of Information Sciences,
Hiroshima City University\\
Hiroshima 731-3194, Japan\\
\texttt{uchida@hiroshima-cu.ac.jp}
}

\runninghead{T. Shoudai, S. Matsumoto, Y. Suzuki, T. Uchida}{Distributional Learning of Fixed-Interface Graph Languages}

\maketitle

\begin{abstract}
Distributional learning provides a useful framework for studying the learnability of structured languages from positive data.
In this paper, we extend this framework to graph languages generated by fixed-interface clause systems.

We give an explicit formulation of fixed-interface clause systems (FICSs) and study the corresponding learning problem under positive presentations and membership queries.
Within this setting, we consider a bounded class of graph languages satisfying the finite context property (FCP) under a bounded-degree assumption.
The bounds are expressed by the degree bound $\Delta$ together with five structural parameters $m,s,t,w$, and $d$: these control the degree of generated graphs, the number of clauses, the number of variable hyperedges, the number of body atoms, interface ranks, and local head-frame complexity, respectively.

The learning algorithm constructs hypotheses from ordered boundary representations induced by the observed positive examples.
These representations make explicit the interface information needed to compare contexts and to test candidate clauses by membership queries.
We prove the correctness of the algorithm step by step: target contexts eventually appear in the observed sample, target clauses are reconstructed over the corresponding predicate representatives, and spurious non-fact clauses are eventually excluded by membership queries.
As a consequence, for every fixed parameter tuple, the target language is identifiable in the limit from positive data and membership queries.

We also analyze the complexity of the hypothesis update process.
Using the bounded-degree assumption, the fixed interface-rank bound, and the finite head-frame condition, we show that only polynomially many ordered boundary representations, predicate symbols, clause candidates, and membership queries are needed at each stage.
Hence the learner has polynomial-time update on $\FICSLFCP_{\Delta}(m,s,t,w,d)$, the class of graph languages generated by such bounded FICSs with the finite context property.

Overall, the paper provides a parameterized reformulation of distributional learning for interface-based graph languages in a fixed-interface setting, making the roles of boundary information, clause structure, and structural parameters explicit.
\end{abstract}

\noindent\textbf{Keywords:} distributional learning; graph languages; fixed-interface clause systems; finite context property; membership queries; polynomial-time update

\section{Introduction}\label{sec:introduction}
Distributional learning has been one of the central approaches in algorithmic learning theory for establishing learnability from positive data.
It was initiated by Clark and Eyraud \cite{clark2007} for subclasses of context-free grammars and was subsequently extended to broader grammar classes and to more general structured objects \cite{kasprzik2011,yoshinaka2012,yoshinaka2015}.
More recently, this line of research has continued to develop on the string side, where richer non-terminal representations have been used to extend distributional learning from positive data and membership queries to larger classes of context-free languages \cite{kanazawa2023}.
A common theme throughout these studies is that learnability becomes accessible once the target objects admit an appropriate decomposition into contexts and substructures.

The present paper is concerned with graph languages.
As graph-based counterparts of grammatical formalisms for strings and trees, graph grammars have been studied in several variants.
Among graph grammar formalisms, hyperedge replacement grammars (HRGs) provide a standard context-free mechanism for graph generation: a nonterminal hyperedge is replaced by a graph fragment through designated attachment vertices \cite{Habel1992,drewes97}.
Formal graph systems (FGSs), introduced by Uchida et al. \cite{uchida1995parallel}, provide a logic-programming-based formalism in which graph patterns serve as structured objects.
Regular FGSs can be regarded as a logic-programming presentation of the interface-based generation mechanism underlying HRGs, in analogy with clause-based formalisms for strings such as elementary formal systems~\cite{Smullyan1961EFS,Arikawa1970}.
Regular and restricted variants of FGSs have been used as a basis for studying learnability properties of graph languages \cite{Hara2014,shoudai2016ilp,shoudai2023pac}.
The fixed-interface clause systems used in the present paper build on this viewpoint.
They retain the HRG-inspired interface replacement mechanism, but make boundary vertices and port orders explicit and allow repeated variable labels in clauses.
The last feature is important because repeated occurrences of the same variable label force the corresponding positions to be replaced by isomorphic copies of the same graph with interface, which is not part of the standard HRG replacement mechanism.
Thus the formalism used here is not simply a restatement of regular FGSs; rather, it is a fixed-interface, logic-programming-style extension of the HRG/regular-FGS generation mechanism, tailored to the distributional-learning argument developed below.

Several learning-theoretic results for graph-based formalisms are already known.
Okada et al. \cite{okada2007learning} established exact learning results for certain graph pattern classes in the minimally adequate teacher model.
Hara and Shoudai \cite{Hara2014} studied polynomial-time MAT learning of c-deterministic regular FGSs.
Shoudai et al. \cite{shoudai2016ilp} investigated regular FGS languages of bounded degree under positive data and membership queries in a distributional-learning setting.
More recently, Shoudai et al. \cite{shoudai2023pac} studied polynomial-time PAC learnability for bounded classes of variable-hereditary FGS languages under bounded degree and bounded treewidth assumptions.
These results indicate that graph languages admit a substantial learning theory, but they also suggest that the roles of boundary information, clause structure, and structural parameters deserve a more explicit treatment than in the earlier formulations.

The aim of this paper is to formulate and analyze a distributional-learning framework for graph languages generated by fixed-interface clause systems.
The point is not merely to restate an earlier result in different notation.
Rather, by developing this fixed-interface formulation, we make explicit the boundary structure that is already implicit in fragment-based decompositions.
Although the target language consists of plain graphs, the learning procedure works with graphs with interfaces obtained by cutting observed graphs along designated boundary vertices.
This interface structure is essential both for the definition of context candidates and for the parameterized complexity analysis of hypothesis updates.
Within this framework, we consider a bounded class of graph languages satisfying the finite context property under a bounded-degree assumption.
The bounds are given explicitly by the degree bound $\Delta$ and the structural parameter tuple $(m,s,t,w,d)$: the former controls the generated graph class through the maximum degree, while the latter controls, among other things, the number of clauses, the number of variable hyperedges, the number of body atoms, the interface ranks, and the finite family of local head-frame types.
This parameterized formulation makes it possible to isolate which structural restrictions are responsible for learnability and which are responsible for polynomial-time update.
In particular, compared with our earlier formulation~\cite{shoudai2016ilp}, the present formulation replaces the role of a fixed Chomsky-normal-form-style bound on rule shapes by explicit parameter bounds that can be traced throughout the correctness and complexity arguments.

For this class, we present an oracle-guided learning algorithm based on ordered boundary representations induced by the observed positive examples.
The algorithm constructs hypotheses by introducing predicate representatives from these boundary representations and by admitting bounded fixed-interface clause candidates through membership-query tests.
We prove, step by step, that target contexts eventually occur in the observed sample, that target clauses are reconstructed over the corresponding predicate representatives, and that spurious non-fact clauses are eventually excluded by membership queries.
We also analyze the update complexity and prove that the learner has polynomial-time update.

This paper is an extended journal version of our earlier conference paper~\cite{shoudai2016ilp}, in which all proofs were omitted.
Beyond providing complete proofs, the present paper makes three further contributions.
First, we reformulate the framework using fixed-interface clause systems and ordered boundary representations, making the boundary structure explicit rather than leaving it implicit in fragment-based decompositions.
Second, we make explicit that the bounded treewidth assumption used in \cite{shoudai2016ilp} is not needed for the present distributional-learning argument.
The result holds under the bounded-degree assumption together with the fixed structural parameters.
Bounded treewidth was a natural restriction in our earlier work \cite{shoudai2017} on graph-pattern matching.
In that setting, tree decompositions were used to obtain polynomial-time algorithms for partial $k$-tree patterns.
In contrast, the present distributional-learning framework does not use tree decompositions in the construction of context candidates or in the membership-query filtering argument.
This observation clarifies why the bounded treewidth assumption imposed in the earlier formulation~\cite{shoudai2016ilp} is not essential for the present learning argument.
Third, we introduce an explicit parameter tuple controlling the generated graph class and the clause-system structure, allowing the restrictions responsible for learnability and polynomial-time update to be traced throughout the correctness and complexity arguments.
At the same time, since the construction is based on context decomposition and membership-query filtering, the paper can also be read as a graph-structured generalization of the corresponding string-based distributional-learning framework \cite{kanazawa2023}.

The remainder of the paper is organized as follows.
Section~\ref{sec:preliminaries} introduces graphs with interfaces, graph patterns, and fixed-interface clause systems.
Section~\ref{sec:learning-model} defines the learning model.
Section~\ref{sec:main-results} presents the learning algorithm, proves its correctness, and establishes the parameterized polynomial-time update bound.

\section{Preliminaries}\label{sec:preliminaries}
Throughout the paper, all graphs are finite, undirected, and labeled, unless explicitly stated otherwise.
Thus, a graph is a tuple $G=(V,E,\phi,\psi)$, where $V$ is a finite set of vertices, $E\subseteq\{\{u,v\}\mid u,v\in V,\ u\neq v\}$ is a finite set of undirected edges, and $\phi$ and $\psi$ are vertex- and edge-labeling functions, respectively.
For a graph $G$, we write $V(G)$ and $E(G)$ for its vertex set and edge set, respectively.
For a finite set $X$, we denote by $|X|$ its cardinality.
For a finite ordered tuple $\iota$, we denote by $|\iota|$ its length.

\subsection{Graphs with interfaces and ordered boundary representations}

We first recall the notion of a graph with interface and introduce ordered boundary representations.
Interfaces specify the vertices through which a graph fragment can be composed with another fragment.
Ordered boundary representations will later be used to enumerate, from positive examples, the finitely many context candidates needed by the learning algorithm.

\begin{definition}[Graph with interface]
Let $\Sigma_V$ and $\Sigma_E$ be finite alphabets of vertex labels and edge labels, respectively.
A graph with interface is a tuple $P=(V,E,\phi,\psi,\iota)$, where
\begin{enumerate}[label=\textup{(\arabic*)}]
\item $V$ is a finite set of vertices,
\item $E \subseteq \bigl\{\{u,v\}\mid u,v\in V,\ u\neq v\bigr\}$ is a finite set of edges,
\item $\phi:V \to \Sigma_V$ is a vertex-labeling function,
\item $\psi:E \to \Sigma_E$ is an edge-labeling function, and
\item $\iota=(v_1,\ldots,v_r)$ is an ordered list of distinct vertices in $V$.
\end{enumerate}
The vertices appearing in $\iota$ are called the \emph{interface vertices} of $P$, and the integer $r$ is called the \emph{interface rank} of $P$.
\end{definition}

\begin{remark}
Every graph $(V,E,\phi,\psi)$ is identified with the graph with the empty interface obtained by regarding it as $(V,E,\phi,\psi,())$.
\end{remark}

\begin{definition}[Subgraph with interface]\label{def:subgraph-with-interface}
Let $G=(V,E,\phi,\psi)$ be a graph.
A \emph{subgraph with interface} of $G$ is a graph with interface $K=(V_K,E_K,\phi_K,\psi_K,\iota_K)$ such that $V_K\subseteq V,\: E_K\subseteq E$, $\phi_K=\phi|_{V_K},\: \psi_K=\psi|_{E_K}$, and $\iota_K=(b_1,\ldots,b_r)$ is an ordered list of distinct vertices of $V_K$.
The vertices $b_1,\ldots,b_r$ are called the \emph{boundary vertices} of $K$, and $r$ is called the \emph{boundary rank} of $K$.
\end{definition}

\begin{definition}[Ordered boundary representation]\label{def:ordered-boundary-representation}
Let $G=(V,E,\phi,\psi)$ be a graph, and let $K=(V_K,E_K,\phi_K,\psi_K,\iota_K)$ be a subgraph with interface of $G$, where $\iota_K=(b_1,\ldots,b_r)$.
Let $B_K=\{b_1,\ldots,b_r\}$.
The \emph{boundary edge set} of $K$ is $E_B(K)=\{e\in E_K\mid e\cap B_K\neq\emptyset\}$.
The \emph{ordered boundary representation} of $K$ is the pair $(\beta,E_B(K))$, where $\beta=(b_1,\ldots,b_r)$.
The order of the vertices in $\beta$ is part of the representation.
Thus, the same boundary vertex set with different orders gives different ordered boundary representations.
\end{definition}

\begin{definition}[Valid ordered boundary specification]\label{def:valid-ordered-boundary-specification}
Let $G=(V,E,\phi,\psi)$ be a graph.
An \emph{ordered boundary specification} in $G$ is a pair $(\beta,E_B)$, where $\beta=(b_1,\ldots,b_r)$ is an ordered list of distinct vertices of $G$, and $E_B\subseteq E$.
Let $B=\{b_1,\ldots,b_r\}$.
We say that $(\beta,E_B)$ is \emph{valid} in $G$ if every edge in $E_B$ is incident with at least one vertex in $B$, that is, $e\cap B\neq\emptyset\:\text{for every } e\in E_B$.

For a valid ordered boundary specification $(\beta,E_B)$, define $S_G(\beta,E_B)=\{v\in V\setminus B \mid \{b,v\}\in E_B \text{ for some } b\in B\}$.
Let $I_G(\beta,E_B)$ be the set of all vertices of $G-B$ that are reachable in $G-B$ from some vertex in $S_G(\beta,E_B)$.
Here $G-B$ denotes the graph obtained from $G$ by deleting all vertices in $B$ and all edges incident with them.

The subgraph with interface determined by $(\beta,E_B)$ is $K_G(\beta,E_B)=(V_K,E_K,\phi_K,\psi_K,\beta)$, where $V_K=B\cup I_G(\beta,E_B)$ and $E_K=E_B \cup\bigl\{\{u,v\}\in E\mid u,v\in I_G(\beta,E_B)\bigr\}$.
The label functions are the restrictions $\phi_K=\phi|_{V_K}$, $\psi_K=\psi|_{E_K}$.
\end{definition}

\begin{remark}
In this paper, the term ``boundary'' refers to an ordered interface used for composition, and the boundary vertices of a specification in $G$ are not required to form a separator of $G$.
An arbitrary pair $(\beta,E_B)$ need not be a valid ordered boundary specification.
In the construction below, such pairs are enumerated as candidates, and only valid ones are used to construct subgraphs with interfaces.

In the learning construction below, contexts are treated as fragments obtained by cutting positive examples along ordered interfaces.
Such fragments are represented by valid ordered boundary specifications in the sense of Definition~\ref{def:valid-ordered-boundary-specification}.
Thus the context candidates considered by the learner are not arbitrary subgraphs with interfaces; they are the subgraphs with interfaces obtained from ordered boundary vertices and chosen boundary edge sets by the construction of Definition~\ref{def:valid-ordered-boundary-specification}.
\end{remark}

\begin{lemma}\label{lem:boundary-specification-unique}
Let $G=(V,E,\phi,\psi)$ be a graph.
Every valid ordered boundary specification $(\beta,E_B)$ in $G$ determines the unique subgraph with interface $K_G(\beta,E_B)$.
\end{lemma}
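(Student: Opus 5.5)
The plan is to check that the construction in Definition~\ref{def:valid-ordered-boundary-specification} is well defined and deterministic, and that its output is a subgraph with interface of $G$ in the sense of Definition~\ref{def:subgraph-with-interface}. Uniqueness then follows because every component of $K_G(\beta,E_B)$ is defined explicitly as a function of $G$, $\beta$, and $E_B$ alone, with no choices involved.

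\textbf{Step 1: the vertex set.} We go through the components in order. The set $B$ is determined by $\beta$. The set $S_G(\beta,E_B)$ is defined by a condition on $E_B$ and $B$, so it is determined. The graph $G-B$ is determined by $G$ and $B$. Reachability in $G-B$ from $S_G(\beta,E_B)$ defines a unique set, namely the union of the vertex sets of the connected components of $G-B$ that meet $S_G(\beta,E_B)$. Hence $I_G(\beta,E_B)\subseteq V\setminus B$ and $V_K=B\cup I_G(\beta,E_B)$ are uniquely determined, and $V_K\subseteq V$.

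\textbf{Step 2: the edge set.} We show $E_K\subseteq E$ and that each edge of $E_K$ has both endpoints in $V_K$.
\begin{itemize}
\item Edges with both ends in $I_G(\beta,E_B)$ clearly satisfy both conditions.
\item Take $e=\{b,v\}\in E_B$. By validity, $e$ meets $B$, so we may assume $b\in B$. If $v\in B$, we are done. Otherwise $v\in S_G(\beta,E_B)$ by definition, and $v$ is trivially reachable from itself in $G-B$, so $v\in I_G(\beta,E_B)\subseteq V_K$.
\end{itemize}
This is the one place where validity is used, and it is the only mild obstacle: one must confirm that the zero-length path counts as reachability, so that $S_G(\beta,E_B)\subseteq I_G(\beta,E_B)$. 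Therefore $(V_K,E_K)$ is a genuine graph, a subgraph of $G$.

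\textbf{Step 3: labels and interface.} The labelings $\phi|_{V_K}$ and $\psi|_{E_K}$ are restrictions of fixed functions, so they are unique. The interface $\beta$ consists of distinct vertices of $B\subseteq V_K$. Hence $K_G(\beta,E_B)$ satisfies every clause of Definition~\ref{def:subgraph-with-interface}.

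\textbf{Conclusion.} Since all five components are functions of $(G,\beta,E_B)$, any two constructions from the same valid specification coincide. This gives the uniqueness asserted in the lemma.
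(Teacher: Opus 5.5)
Your proof is correct and follows the paper's route: each component ($B$, $S_G(\beta,E_B)$, $G-B$, $I_G(\beta,E_B)$, $V_K$, $E_K$, the restricted labelings, and the interface $\beta$) is an explicit function of $(G,\beta,E_B)$, so $K_G(\beta,E_B)$ is unique. Your Step~2 goes a little further than the paper by checking that the output really is a subgraph with interface (every edge of $E_B$ has both endpoints in $V_K$, using validity and $S_G(\beta,E_B)\subseteq I_G(\beta,E_B)$); the paper invokes validity but leaves this well-formedness implicit.
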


\begin{proof}
Let $\beta=(b_1,\ldots,b_r)$ and put $B=\{b_1,\ldots,b_r\}$.
Since $(\beta,E_B)$ is valid, every edge in $E_B$ is an edge of $G$ incident with at least one vertex in $B$.
The set $S_G(\beta,E_B)$ is determined uniquely by $G$, $B$, and $E_B$.
The graph $G-B$ is also uniquely determined by $G$ and $B$.
Hence the set $I_G(\beta,E_B)$ of vertices reachable in $G-B$ from $S_G(\beta,E_B)$ is uniquely
determined.

Therefore the vertex set $V_K=B\cup I_G(\beta,E_B)$
is uniquely determined.
The edge set $E_K=E_B\cup\bigl\{\{u,v\}\in E\mid u,v\in I_G(\beta,E_B)\bigr\}$ is also uniquely determined.
The label functions $\phi_K=\phi|_{V_K}$,$\psi_K=\psi|_{E_K}$ are uniquely determined by the label functions of $G$, and the interface is exactly the ordered tuple $\beta$.

Thus $K_G(\beta,E_B)=(V_K,E_K,\phi_K,\psi_K,\beta)$ is uniquely determined.
\end{proof}

\begin{lemma}\label{lem:boundary-specification-decision}
Let $G=(V,E,\phi,\psi)$ be a graph.
Given an ordered tuple $\beta=(b_1,\ldots,b_r)$ of vertices and a set $E_B$, it can be decided in time $O(|V|+|E|)$ whether $(\beta,E_B)$ is a valid ordered boundary specification in $G$.
If it is valid, the corresponding subgraph with interface $K_G(\beta,E_B)$ can be constructed in time $O(|V|+|E|)$.
\end{lemma}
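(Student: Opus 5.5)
We establish both claims with a direct linear-time procedure that follows Definition~\ref{def:valid-ordered-boundary-specification} step by step; the uniqueness of the output is already guaranteed by Lemma~\ref{lem:boundary-specification-unique}. We assume the standard representation of $G$: adjacency lists, vertices indexed by $1,\ldots,|V|$, and each edge stored as a record with its label, so that membership of an edge in $E$ can be checked by a pointer or index lookup. We also assume $|E_B|\le |E|$; if $E_B$ is given as an arbitrary set, we first check in time $O(|E_B|)$ that each element is an edge of $G$, and reject otherwise.

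\textbf{Step 1 (checking $\beta$ and validity).} Allocate a Boolean array $\mathrm{inB}[1..|V|]$ initialised to false, in $O(|V|)$ time. Scan $\beta$: reject if some $b_i$ is not a vertex of $G$ or is already marked (non-distinctness); otherwise mark it. This costs $O(r)\subseteq O(|V|)$. Then scan $E_B$: check that each element is an edge of $G$, which is possible by marking the edge records, and that at least one of its endpoints is marked in $\mathrm{inB}$. This costs $O(|E_B|)\subseteq O(|E|)$. The pair $(\beta,E_B)$ is valid exactly when all these checks pass.

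\textbf{Step 2 (constructing $S_G$ and $I_G$).} While scanning $E_B$, collect every endpoint $v$ with $\mathrm{inB}[v]=$ false into a list $S$, deduplicating with a second mark array. This yields $S_G(\beta,E_B)$ in $O(|E|)$ time. Next, run a breadth-first search from all vertices of $S$ simultaneously, skipping any neighbour $u$ with $\mathrm{inB}[u]=$ true. This is exactly a search in $G-B$ performed without building $G-B$ explicitly. Each adjacency list is traversed at most once, so the search costs $O(|V|+|E|)$. The set of visited vertices is $I_G(\beta,E_B)$, marked in an array $\mathrm{inI}$.

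\textbf{Step 3 (assembling $K_G$).} Set $V_K=B\cup I$, which takes $O(|V|)$. Let $E_K$ consist of $E_B$ together with every edge $\{u,v\}\in E$ satisfying $\mathrm{inI}[u]\wedge\mathrm{inI}[v]$; one scan of $E$ suffices, and no duplicates arise, because every edge of $E_B$ has an endpoint in $B$ while $B\cap I=\emptyset$. The labels $\phi_K$ and $\psi_K$ are copied by restriction, and the interface is $\beta$ itself. Altogether the construction runs in $O(|V|+|E|)$ time.

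The only point that needs care is avoiding hidden superlinear costs. Membership tests ``$e\in E$'' and ``$v\in B$'' must take constant time, which is why we use mark arrays rather than set searches. The search in $G-B$ must be simulated on $G$ directly, since building $G-B$ explicitly would cost an extra pass; that pass is still linear, though, so even the explicit construction would be acceptable.
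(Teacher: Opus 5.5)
Your proposal is correct and follows essentially the same route as the paper. Both mark the boundary vertices to check distinctness, scan $E_B$ to verify incidence and collect $S_G(\beta,E_B)$, run a BFS/DFS in $G-B$ from these vertices, and assemble $V_K$ and $E_K$ with one pass over $E$. Your added implementation details (constant-time mark arrays, simulating $G-B$ on $G$ directly, and noting that $E_B$ cannot overlap the internal edges since $B\cap I_G(\beta,E_B)=\emptyset$) make explicit what the paper's proof leaves implicit.
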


\begin{proof}
First, we check whether $b_1,\ldots,b_r$ are vertices of $G$ and are
pairwise distinct. This can be done in $O(r)$ time using marking on the
vertex set. Next, let $B=\{b_1,\ldots,b_r\}$.
For every edge $e \in E_B$, we check whether $e \in E$ and whether $e \cap B \neq \emptyset$.
Using a standard adjacency representation, this takes $O(|E_B|)$ time after the graph has been stored in a form that allows us to test whether a given unordered pair of vertices is an edge.
Since $E_B\subseteq E$, we have $|E_B|\leq |E|$.
Thus this validity check is bounded by $O(|V|+|E|)$.
If the check fails, then $(\beta,E_B)$ is not valid.

Suppose that the check succeeds.
We compute $S_G(\beta,E_B)=\{v\in V\setminus B\mid\{b,v\}\in E_B \text{ for some } b\in B\}$.
This is done by scanning the edges in $E_B$, and therefore takes $O(|E_B|)$ time.
We then perform a breadth-first search or depth-first search in the graph $G-B$, starting from all vertices in $S_G(\beta,E_B)$.
This computes $I_G(\beta,E_B)$ in time $O(|V|+|E|)$.
Finally, we construct $V_K=B\cup I_G(\beta,E_B)$ and $E_K=E_B\cup\bigl\{\{u,v\}\in E\mid u,v\in I_G(\beta,E_B)\bigr\}$ by scanning the edge set $E$ once.
This also takes $O(|V|+|E|)$ time.
The label functions are obtained by restricting $\phi$ and $\psi$, which adds only linear time.

Therefore validity can be decided, and the corresponding subgraph with interface can be constructed when valid, in time $O(|V|+|E|)$.
\end{proof}

\begin{lemma}\label{lem:number-of-boundary-specifications}
Let $G=(V,E,\phi,\psi)$ be a graph of maximum degree at most $\Delta$, and put $N=|V|$.
For every fixed $r\geq 0$, the number of ordered boundary specifications $(\beta,E_B)$ of boundary rank $r$ is at most $N^r 2^{r\Delta}$.
\end{lemma}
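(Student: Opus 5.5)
The plan is a direct counting argument that treats the two components of an ordered boundary specification separately and multiplies the bounds. Before counting, I would fix the interpretation. Taken literally, Definition~\ref{def:valid-ordered-boundary-specification} allows $E_B$ to be an arbitrary subset of $E$, which would give up to $2^{|E|}$ choices and make the stated bound false. So I would read the lemma as counting the valid specifications, in which every edge of $E_B$ is incident with $B$. These are the only ones used to build subgraphs with interfaces, by Lemma~\ref{lem:boundary-specification-unique}. I would state this restriction explicitly at the start of the proof.

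First, count the choices of $\beta$. An ordered boundary tuple $\beta=(b_1,\ldots,b_r)$ is a sequence of $r$ distinct vertices of $G$. There are $N(N-1)\cdots(N-r+1)\le N^r$ of them, and for $r=0$ only the empty tuple, which is consistent with $N^0=1$.

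Second, fix $\beta$ and put $B=\{b_1,\ldots,b_r\}$. Validity forces $E_B\subseteq E(B)$, where $E(B)=\{e\in E\mid e\cap B\neq\emptyset\}$. Every edge in $E(B)$ is incident with some $b_i$, and each $b_i$ has degree at most $\Delta$. Hence
\[
|E(B)|\le\sum_{i=1}^{r}\deg_G(b_i)\le r\Delta .
\]
Edges joining two boundary vertices are counted twice in this sum, which only helps the inequality. Consequently the number of admissible sets $E_B$ for this $\beta$ is at most $2^{|E(B)|}\le 2^{r\Delta}$.

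Finally, multiply over the choices of $\beta$ to obtain at most $N^r2^{r\Delta}$ specifications of boundary rank $r$. I expect no step to be a real obstacle. The only delicate point is the interpretive one above: the bound holds for valid specifications, or equivalently for pairs whose edge set is confined to edges incident with $B$, and not for arbitrary pairs $(\beta,E_B)$ with $E_B\subseteq E$.
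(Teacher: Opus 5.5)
Your proposal is correct and matches the paper's proof: you bound the ordered tuples $\beta$ by $N^r$, note that at most $r\Delta$ edges are incident with $B$, and multiply by the resulting $2^{r\Delta}$ choices of $E_B$. Your interpretive remark is also well taken. Taken literally, the definition of an ordered boundary specification allows any $E_B\subseteq E$, which would falsify the bound. The paper's proof makes the same restriction to edge sets incident with $B$, that is, to valid specifications, but only implicitly, when it says $E_B$ is chosen as a subset of the incident edges.
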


\begin{proof}
First, we count the number of possible ordered boundary tuples $\beta=(b_1,\ldots,b_r)$ of distinct vertices of $G$.
This number is $N(N-1)\cdots(N-r+1)\leq N^r$.
Fix such a tuple $\beta$, and let $B=\{b_1,\ldots,b_r\}$.
Since $G$ has maximum degree at most $\Delta$, the number of edges of $G$ incident with at least one vertex of $B$ is at most $r\Delta$.
The boundary edge set $E_B$ is chosen as a subset of these incident edges.
Therefore the number of possible choices of $E_B$ is at most $2^{r\Delta}$.
Multiplying the number of choices of $\beta$ and $E_B$, we obtain the upper bound $N^r 2^{r\Delta}$.
This proves the lemma.
\end{proof}

\begin{lemma}\label{lem:boundary-specification-enumeration-time}
Let $G=(V,E,\phi,\psi)$ be a graph of maximum degree at most $\Delta$, and put $N=|V|$.
For a fixed boundary-rank bound $w$, all valid ordered boundary specifications of rank at most $w$, together with their corresponding subgraphs with interfaces, can be enumerated in time $O\bigl(\Delta\,2^{w\Delta} N^{w+1}\bigr)$.
In particular, if $w$ and $\Delta$ are fixed constants, this time is polynomial in $N$.
\end{lemma}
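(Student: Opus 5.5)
The plan is to enumerate, for each rank $r=0,1,\ldots,w$, all ordered boundary specifications of rank $r$ by brute force. Each candidate is tested for validity and, when valid, turned into a subgraph with interface using Lemma~\ref{lem:boundary-specification-decision}. The total time is then the number of candidates times the cost per candidate. Combining Lemma~\ref{lem:number-of-boundary-specifications} with the $O(|V|+|E|)$ bound of Lemma~\ref{lem:boundary-specification-decision} does not immediately give the stated exponent, so the per-candidate cost has to be stated carefully.

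\textbf{Enumeration.} Enumerate ordered tuples $\beta=(b_1,\ldots,b_r)$ of distinct vertices by nested loops; there are at most $N^r$ of them. For each tuple $\beta$, collect the incident edge set $E(B)$, that is, the edges meeting $B=\{b_1,\ldots,b_r\}$. Using adjacency lists this takes $O(r\Delta)$ time and yields at most $r\Delta$ edges. Then enumerate all subsets $E_B\subseteq E(B)$, at most $2^{r\Delta}$ of them. Every such pair $(\beta,E_B)$ is automatically valid, since each edge of $E_B$ meets $B$ by construction. Conversely, every valid specification of rank $r$ arises this way, because validity forces $E_B\subseteq E(B)$. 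So the enumeration is complete and produces no duplicates. This recovers the count of Lemma~\ref{lem:number-of-boundary-specifications}.

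\textbf{Cost per candidate.} For each candidate we construct $K_G(\beta,E_B)$ as in the proof of Lemma~\ref{lem:boundary-specification-decision}. Computing $S_G$ and running the BFS in $G-B$ costs $O(|V|+|E|)$. Since the maximum degree is at most $\Delta$, we have $|E|\le \Delta N/2$, so this is $O(\Delta N)$. The total over all candidates of rank $r$ is therefore $O(N^r 2^{r\Delta}\cdot \Delta N)=O(\Delta\,2^{r\Delta}N^{r+1})$. Summing over $r=0,\ldots,w$ is a geometric sum in both $N^r$ and $2^{r\Delta}$, which is dominated by its top term up to a constant factor. This gives $O(\Delta\,2^{w\Delta}N^{w+1})$. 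The $O(r\Delta)$ overhead for listing $E(B)$ and the $O(r)$ distinctness checks are absorbed into this bound.

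\textbf{Main obstacle.} The step needing care is charging the validity test and the construction at $O(\Delta N)$ per candidate, rather than the cruder $O(|V|+|E|)$ with $|E|$ left unbounded. This requires the bounded-degree edge bound $|E|\le\Delta N/2$. It also requires that the edge-membership tests in $E_B$ are unnecessary, since the subsets are drawn from the adjacency lists directly. For $N\ge 2$ the geometric sum over $r$ gives at most a factor of $2$. The degenerate cases $N\le1$ or $\Delta=0$ are handled trivially, by reading the bound as $O(\max(1,\Delta)\cdots)$ or by noting constant-time work. Polynomiality in $N$ for fixed $w,\Delta$ then follows immediately.
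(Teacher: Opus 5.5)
Your proposal is correct and follows essentially the same route as the paper. You bound the rank-$r$ candidates by $N^r2^{r\Delta}$, charge $O(|V|+|E|)=O(\Delta N)$ per candidate via the bounded-degree edge bound, and sum over $r\le w$ with the top term dominating. Two small remarks: your care about the degenerate case $\Delta=0$ (reading the factor $\Delta$ as $\max(1,\Delta)$) is a point the paper glosses over, but your claim that the two earlier lemmas do not directly combine is overstated, since substituting $|E|=O(\Delta N)$ into them is exactly the paper's argument.
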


\begin{proof}
Fix $r\leq w$. By Lemma~\ref{lem:number-of-boundary-specifications}, the number of candidate ordered boundary specifications of rank $r$ is at most $N^r 2^{r\Delta}$.
For each candidate $(\beta,E_B)$, Lemma~\ref{lem:boundary-specification-decision} shows that validity can be decided and, if valid, the corresponding subgraph with interface can be constructed in time $O(|V|+|E|)$.
Since $G$ has maximum degree at most $\Delta$, we have $|E|=O(\Delta N)$.
Hence the time per candidate is $O(\Delta N)$.

Therefore, for rank $r$, the total time is bounded by $O\bigl(N^r 2^{r\Delta}\cdot \Delta N\bigr)$.
Summing over $r=0,1,\ldots,w$, we obtain
\[
\sum_{r=0}^{w}
O\bigl(N^r 2^{r\Delta}\cdot \Delta N\bigr)
=
O\bigl(\Delta\,2^{w\Delta} N^{w+1}\bigr),
\]
because $w$ is fixed and the largest term occurs at $r=w$ up to a constant factor depending only on $w$ and $\Delta$.
Thus all valid ordered boundary specifications of rank at most $w$, and their corresponding subgraphs with interfaces, can be enumerated within the claimed time bound.
\end{proof}

\begin{corollary}\label{cor:sample-boundary-enumeration-time}
Let $D_n=\{G_1,\ldots,G_n\}$ be a finite set of positive examples.
Let
\[
S_n=\sum_{i=1}^{n}|V(G_i)|.
\]
Assume that every $G_i$ has maximum degree at most $\Delta$.
For fixed $w$ and $\Delta$, all valid ordered boundary specifications of rank at most $w$ arising from the graphs in $D_n$, together with their corresponding subgraphs with interfaces, can be enumerated in time polynomial in $S_n$.
More precisely, the total time is bounded by
\[
O\left(
\Delta\,2^{w\Delta}
\sum_{i=1}^{n}|V(G_i)|^{w+1}
\right)
\leq
O\left(
\Delta\,2^{w\Delta} S_n^{w+1}
\right).
\]
\end{corollary}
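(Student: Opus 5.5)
The corollary follows by applying Lemma~\ref{lem:boundary-specification-enumeration-time} separately to each positive example and summing the resulting bounds. For each $i$, the graph $G_i$ has maximum degree at most $\Delta$. Lemma~\ref{lem:boundary-specification-enumeration-time} therefore enumerates all valid ordered boundary specifications of rank at most $w$ in $G_i$, together with their subgraphs with interfaces $K_{G_i}(\beta,E_B)$, in time $O\bigl(\Delta\,2^{w\Delta}|V(G_i)|^{w+1}\bigr)$.

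The overall procedure processes $G_1,\ldots,G_n$ in turn. Its output is the disjoint union, over $i$, of the specifications arising from $G_i$, each tagged with the index $i$ of the graph it comes from. This tagging is needed because, by Lemma~\ref{lem:boundary-specification-unique}, a specification determines its subgraph with interface only relative to the ambient graph. Tagging and the loop over $i$ add only $O(n)$ overhead, and $n\le S_n$ provided every example is nonempty. Summing the per-graph bounds gives the first expression
\[
O\Bigl(\Delta\,2^{w\Delta}\sum_{i=1}^{n}|V(G_i)|^{w+1}\Bigr).
\]
The hidden constant is uniform in $i$, since it depends only on $w$ and $\Delta$.

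For the inequality, set $a_i=|V(G_i)|\ge 0$ and $p=w+1\ge 1$. It suffices to show $\sum_i a_i^{p}\le\bigl(\sum_i a_i\bigr)^{p}$. This holds because each $a_i\le S_n$, so
\[
\sum_i a_i^{p}\le \sum_i a_i\,S_n^{p-1}=S_n^{p}.
\]
Alternatively, one can expand the multinomial $(\sum_i a_i)^p$ and note that all cross terms are nonnegative. This yields the bound $O\bigl(\Delta\,2^{w\Delta}S_n^{w+1}\bigr)$. Since $w$ and $\Delta$ are fixed, this is a polynomial in $S_n$.

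The main point needing care is that the constant of Lemma~\ref{lem:boundary-specification-enumeration-time} is uniform over all graphs of maximum degree at most $\Delta$. This uniformity is what makes summing over $i$ legitimate. It holds because that lemma's bound depends only on $N$, $w$ and $\Delta$. The other point is the convexity inequality above; both steps are routine.
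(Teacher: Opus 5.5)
Your proposal is correct and follows the same route as the paper: apply Lemma~\ref{lem:boundary-specification-enumeration-time} to each $G_i$, sum the per-graph bounds, and use $\sum_i |V(G_i)|^{w+1}\le S_n^{w+1}$. Your extra remarks fill in details the paper leaves implicit: that the hidden constant is uniform, the $O(n)$ bookkeeping, and an explicit justification of the power-sum inequality via $|V(G_i)|\le S_n$.
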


\begin{proof}
Apply Lemma~\ref{lem:boundary-specification-enumeration-time} to each graph
$G_i$. If $N_i=|V(G_i)|$, then the time required for $G_i$ is $O\bigl(\Delta\,2^{w\Delta} N_i^{w+1}\bigr)$.
Summing over $i=1,\ldots,n$, the total time is $O\left(\Delta\,2^{w\Delta}\sum_{i=1}^{n} N_i^{w+1}\right)$.
Since
\[
\sum_{i=1}^{n} N_i^{w+1}
\leq
\left(\sum_{i=1}^{n} N_i\right)^{w+1}
=
S_n^{w+1},
\]
we obtain $O\left(\Delta\,2^{w\Delta} S_n^{w+1}\right)$.
For fixed $w$ and $\Delta$, this is polynomial in $S_n$.
\end{proof}

\subsection{Graph patterns with variable hyperedges}

\begin{definition}[Graph pattern]
Let $\Sigma_V$ and $\Sigma_E$ be finite alphabets of vertex labels and edge labels, respectively, and let $X$ be an infinite set of variable labels equipped with a rank function
\[
\rank:X \to \mathbb{N}.
\]
A \emph{graph pattern with variable hyperedges} is a tuple 
\[
G=(V,E,H,\phi,\psi,\lambda,\ports,\iota),
\]
where
\begin{enumerate}[label=\textup{(\arabic*)}]
\item $V$ is a finite set of vertices,
\item $E \subseteq \bigl\{\{u,v\}\mid u,v\in V,\ u\neq v\bigr\}$ is a finite set of ordinary edges,
\item $H$ is a finite set of hyperedges,
\item $\phi:V \to \Sigma_V$ is a vertex-labeling function,
\item $\psi:E \to \Sigma_E$ is an edge-labeling function,
\item $\lambda:H \to X$ is a variable-labeling function,
\item for each $h \in H$, $\ports(h)$ is an ordered list of distinct vertices of
      length $\rank(\lambda(h))$, and
\item $\iota$ is an ordered list of distinct vertices in $V$.
\end{enumerate}
The ordered list $\iota$ is called the \emph{interface} of $G$.
A graph pattern with variable hyperedges is said to be \emph{ground} if $H=\emptyset$.
In this case, it is identified with the graph with interface $(V,E,\phi,\psi,\iota)$.
\end{definition}

\begin{example}
We give an example of a graph pattern with variable hyperedges in Figure~\ref{fig:gpattern}.
A variable hyperedge is drawn as a box with lines to its ports.
The order of the ports is indicated by the digits attached to these lines.
The vertices $v_1,v_2,v_3$ form the interface of the graph pattern.
\end{example}

\begin{figure}[tb]
 \begin{center}
  \includegraphics[scale=0.48]{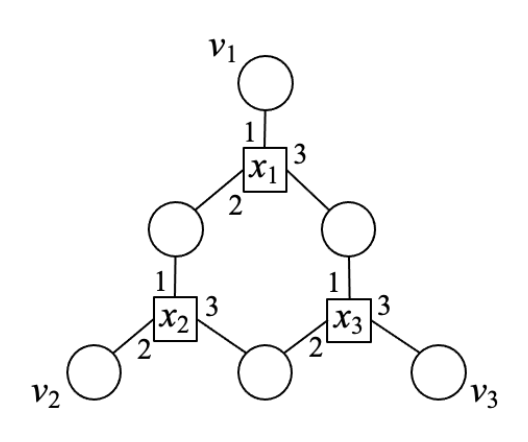}
  \caption{
  A graph pattern $G=(V,E,H,\varphi,\psi,\lambda,ports,\iota)$ with variable hyperedges.
  In this example, $E=\emptyset$, $H=\{h_1,h_2,h_3\}$, $\lambda(h_i)=x_i$ for $i=1,2,3$, and $\iota=(v_1,v_2,v_3)$.
  The port order of each hyperedge is indicated by the digits attached to the incident lines; in particular, each variable $x_i$ has rank $3$.
  }
  \label{fig:gpattern}
 \end{center}
\end{figure}

\begin{definition}[Star graph pattern]
A graph pattern with variable hyperedges $G=(V,E,H,\phi,\psi,\lambda,\ports,\iota)$ is called a \emph{star graph pattern} if
\begin{enumerate}[label=\textup{(\arabic*)}]
\item $E=\emptyset$,
\item $H=\{h\}$ for some hyperedge $h$,
\item the interface $\iota$ coincides with\/ $\ports(h)$, and
\item $V$ is exactly the set of vertices occurring in $\ports(h)$.
\end{enumerate}
\end{definition}

\begin{definition}[Isomorphism of graph patterns]
Let $G$ and $G'$ be graph patterns with variable hyperedges, where $G=(V,E,H,\phi,\psi,\lambda,\ports,\iota)$ and $G'=(V',E',H',\phi',\psi',\lambda',\ports',\iota')$.
An isomorphism from $G$ to $G'$ is a pair $(\pi,\tau)$ consisting of bijections $\pi:V \to V'$ and $\tau:H \to H'$ such that
\begin{enumerate}[label=\textup{(\arabic*)}]
\item $\{u,v\}\in E$ if and only if $\{\pi(u),\pi(v)\}\in E'$,
\item $\phi(v)=\phi'(\pi(v))$ for every $v\in V$,
\item $\psi(\{u,v\})=\psi'(\{\pi(u),\pi(v)\})$ for every $\{u,v\}\in E$,
\item $\lambda(h)=\lambda'(\tau(h))$ for every $h\in H$,
\item if $\ports(h)=(v_1,\ldots,v_r)$, then $\ports'(\tau(h))=(\pi(v_1),\ldots,\pi(v_r))$,
\item if $\iota=(u_1,\ldots,u_m)$ and $\iota'=(u'_1,\ldots,u'_m)$, then $\pi(u_i)=u'_i \quad (i=1,\ldots,m)$.
\end{enumerate}
If there exists an isomorphism from $G$ to $G'$, then we write $G \cong G'.$
\end{definition}

\begin{definition}[Binding and substitution]\label{def:binding-substitution}
Let $K=(V_K,E_K,\phi_K,\psi_K,\iota_K)$ be a ground graph pattern with interface, and let $x \in X$ with $\rank(x)=|\iota_K|$.
A \emph{binding} for $x$ is an expression $x := K$.

Let $G=(V,E,H,\phi,\psi,\lambda,\ports,\iota)$ be a graph pattern with variable hyperedges, and let $h \in H$ satisfy $\lambda(h)=x$.
Replacing the hyperedge occurrence $h$ with $K$ means removing $h$, taking a fresh copy of $K$, and identifying the $i$-th vertex of $\ports(h)$ with the $i$-th interface vertex of the copied $K$ for each $i=1,\ldots,\rank(x)$.
If $G$ contains several hyperedges labeled with the same variable $x$, then a binding $x:=K$ is applied simultaneously to all such hyperedges.
In other words, all occurrences of the same variable label are replaced by isomorphic copies of the same ground graph pattern $K$.

A \emph{substitution} is a finite set $\theta=\{x_1:=K_1,\ldots,x_m:=K_m\}$ such that the variables $x_1,\ldots,x_m$ are pairwise distinct and each $K_i$ is a ground graph pattern with interface satisfying $\rank(x_i)=|\iota_{K_i}|$.
For a graph pattern $G$, we write $G\theta$ for the graph pattern obtained by applying all bindings in $\theta$ simultaneously.
\end{definition}
\subsection{Clause Systems with Fixed Interfaces}
\label{subsec:fixed-interface-clause-systems}

We now introduce the clause-system formalism used in the learning algorithm.
The graph patterns defined in the previous subsection provide the local building blocks of clauses. Variable hyperedges indicate positions at which graphs with interfaces may later be substituted.
The role of fixed-interface clauses is to make this substitution structure explicit. In each clause, variable hyperedges occurring in the head pattern are linked to star graph patterns in the body.
Thus the body atoms specify the subproblems associated with the interfaces of the variable hyperedges.
This formulation is tailored to the later learning construction, where contexts are represented by ordered boundary representations and candidate clauses are tested by membership queries.

\begin{definition}[Atom and clause]
Let $\Pi$ be a finite set of predicate symbols, and let
\[
\irank:\Pi \to \mathbb{N}
\]
be an interface-rank function.
An \emph{atom} is an expression of the form $p(G)$, where $p \in \Pi$ and $G$ is a graph pattern with variable hyperedges such that $|\iota|=\irank(p)$, where $\iota$ is the interface of $G$.
For an atom $p(G)$ and a substitution $\theta$, we define $p(G)\theta := p(G\theta)$.

A \emph{clause} is an expression of the form
\[
A \leftarrow B_1,\ldots,B_m,
\]
where $m \ge 0$ and $A,B_1,\ldots,B_m$ are atoms.
The atom $A$ is called the \emph{head} of the clause, and the sequence $B_1,\ldots,B_m$ is called the \emph{body} of the clause.
If $m=0$, then the clause is called a \emph{fact}.
A finite set of clauses is called a \emph{graph pattern clause system}.
\end{definition}

\begin{definition}[Fixed-interface clause]\label{def:fixed-interface-clause}
Let $A\leftarrow B_1,\ldots,B_m$ be a clause, and write $A=p(G_A)$, where
\[
G_A=(V_A,E_A,H_A,\phi_A,\psi_A,\lambda_A,\ports_A,\iota_A)
\]
is the head pattern of $A$.
The clause $A\leftarrow B_1,\ldots,B_m$ is called a \emph{fixed-interface clause} if the following conditions hold:
\begin{enumerate}[label=\textup{(\arabic*)}]
\item $G_A$ is a graph pattern with variable hyperedges,
\item every body atom has a star graph pattern,
\item for each variable hyperedge occurrence $h\in H_A$, there exists exactly one body atom whose graph pattern is a star graph pattern $G_h=(V_h,E_h,H_h,\phi_h,\psi_h,\lambda_h,\ports_h,\iota_h)$ with $H_h=\{h'\}$ satisfying $\lambda_h(h')=\lambda_A(h)$ and $|\iota_h|=\rank(\lambda_A(h))$.
Conversely, every body atom arises in this way from exactly one variable hyperedge occurrence in the head pattern.
\end{enumerate}
\end{definition}

\begin{definition}[Fixed-interface clause system (FICS)]
A graph pattern clause system is called a \emph{fixed-interface clause system} if every clause in it is a fixed-interface clause.
\end{definition}

\begin{remark}
The fixed-interface clause systems considered in this paper make explicit, in graph-with-interface notation, the structural ingredients already present in regular FGSs~\cite{uchida1995parallel,shoudai2016ilp}: designated interface vertices, star-shaped body patterns, and the correspondence between variable hyperedges in the head pattern and body atoms through variable labels.
In particular, every regular FGS rule with unary predicate symbols can be viewed as a fixed-interface clause of the kind considered here, after making its interface vertices and port order explicit.
Conversely, fixed-interface clauses satisfying the syntactic restrictions of regular FGSs give regular-FGS-style rules~\cite{uchida1995parallel,shoudai2016ilp}.
The present paper, however, uses the fixed-interface formulation as the primary formalism, since the learning argument depends only on the interface structure, bounded clause shape, and membership-query filtering, rather than on the full regular-FGS presentation.
\end{remark}

\begin{example}
Figure~\ref{fig:fics-repeated-variable} shows a fixed-interface clause system in which the same variable label may occur in several variable hyperedges.
In such a clause, all occurrences of the same variable label are replaced by isomorphic copies of the same graph with interface under a substitution.
\begin{figure}[t]
\centering
\includegraphics[scale=0.48]{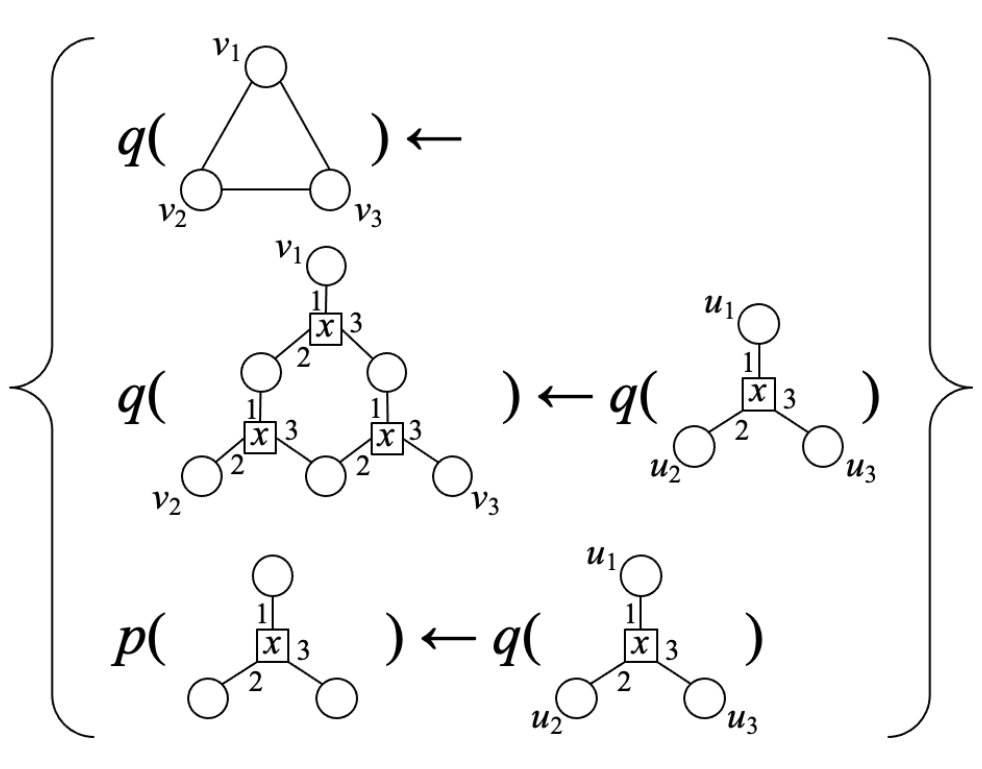}
\caption{A fixed-interface clause system with repeated variable labels. Vertex names are local to each displayed clause; identical variable labels are replaced by isomorphic copies of the same graph with interface.}
\label{fig:fics-repeated-variable}
\end{figure}
\end{example}

\begin{definition}[Derivation]
Let $\Gamma=\{\gamma_1,\ldots,\gamma_t\}$ be a graph pattern clause system.
A \emph{goal} is an expression of the form
\[
\leftarrow B_1,\ldots,B_m,
\]
where $m \ge 0$ and $B_1,\ldots,B_m$ are atoms.
The case $m=0$ is called the \emph{empty goal}.

Let $\mathfrak{g}=\ \leftarrow A_1,\ldots,A_k$ be a nonempty goal, and let $\gamma=A \leftarrow B_1,\ldots,B_q$ be a clause in $\Gamma$.
Suppose that $A_j=p(G)$ and $A=p(H)$ for some $j \in \{1,\ldots,k\}$, and that there exists a substitution $\theta$ such that $H\theta \cong G$.
Then we say that the goal
\[
\mathfrak{g}'=\ \leftarrow A_1,\ldots,A_{j-1},B_1\theta,\ldots,B_q\theta,A_{j+1},\ldots,A_k
\]
is obtained from $\mathfrak{g}$ by one \emph{derivation step} using $\gamma$ at $A_j$.

A \emph{derivation} from a goal $\mathfrak{g}_0$ is a finite or infinite sequence $\mathfrak{g}_0,\mathfrak{g}_1,\mathfrak{g}_2,\ldots$ such that each $\mathfrak{g}_{i+1}$ is obtained from $\mathfrak{g}_i$ by one derivation step.
A finite derivation ending with the empty goal is called a \emph{refutation}.
\end{definition}

\begin{definition}[Generated language]
Let $\Gamma$ be a graph pattern clause system, and let $p$ be a unary predicate symbol with $\irank(p)=0$.
The language generated by $(\Gamma,p)$ is defined by
\[
L(\Gamma,p)=\left\{\, G \;\middle|\;
\text{$G$ is a graph and the goal $\leftarrow p(G)$ has a refutation in $\Gamma$}
\,\right\}.
\]
\end{definition}

\begin{definition}[Head frame]\label{def:head-frame}
Let $G=(V,E,H,\phi,\psi,\lambda,\ports,\iota)$ be a graph pattern with variable hyperedges.
The \emph{head frame} of $G$ is the structure obtained from $G$ by forgetting the variable-labeling function $\lambda$ and retaining $(V,E,H,\phi,\psi,\ports,\iota)$.
Thus, the head frame records the ordinary labeled graph structure, the variable-hyperedge occurrences, their ordered port lists, and the interface, but it does not record which variable labels are assigned to the variable hyperedges.
If $G$ is ground, then its head frame has no variable-hyperedge occurrences; in particular, facts are also covered by this notion.

Two head frames are said to have the same frame type if they are isomorphic as graph-pattern structures preserving vertex labels, edge labels, variable-hyperedge occurrences, port orders, and interfaces.
\end{definition}

\begin{definition}[Bounded FICS]\label{def:bounded-fics}
Let $m,s,t,w,d$ be nonnegative integers.
A fixed-interface clause system $\Gamma$ is called an \emph{$(m,s,t,w,d)$-bounded fixed-interface clause system} if the following conditions hold:
\begin{enumerate}[label=\textup{(\arabic*)}]
\item the number of clauses in $\Gamma$ is at most $m$,
\item for every variable label $x$ occurring in $\Gamma$, we have $\rank(x)\le w$,
\item for every clause $A \leftarrow B_1,\ldots,B_\ell$ in $\Gamma$,
      \begin{enumerate}[label=\rm(\alph*)]
      \item the number of variable hyperedges in the head pattern of $A$ is at most $s$,
      \item the number $\ell$ of body atoms is at most $t$,
      \item every graph pattern occurring in the clause has maximum vertex degree at most $d$, and
      \item the head frame of the head pattern of $A$, including the ground case where $A$ is the head of a fact, belongs to a fixed finite family $\mathcal H_d$ of head-frame types, where $\mathcal H_d$ depends only on $d$ and the fixed label alphabets.
      \end{enumerate}
\end{enumerate}
\end{definition}

\begin{remark}
The finite head-frame condition in Definition~\ref{def:bounded-fics} is part of the bounded clause-shape assumption.
The bounds on the number of variable hyperedges, body atoms, interface ranks, and vertex degrees do not by themselves bound the number of possible ordinary head frames.
For example, a head pattern may contain arbitrarily long ordinary paths whose vertices are neither interface vertices nor ports of variable hyperedges.

The learning algorithm enumerates bounded fixed-interface clause candidates over the current predicate representatives.
Hence, after the predicate basis has stabilized, the eventual soundness proof relies on there being only finitely many clause candidates.
The finite family $\mathcal H_d$ ensures this finiteness.
This assumption restricts only the finite set of local clause frames from which hypotheses are formed; graphs generated by substituting arbitrary graphs with interfaces for variable hyperedges are not bounded in size by this condition.

For example, one may take $\mathcal H_d$ to be any prescribed finite collection of head-frame types satisfying the degree bound $d$, or the collection of all such types whose number of ordinary vertices and variable-hyperedge occurrences are bounded by fixed constants.
In the present learning theorem, $\mathcal H_d$ is treated as part of the parameterized description of the target class; the result does not depend on a particular canonical choice of this family.
\end{remark}

\begin{definition}[Generated graph class]
Let $m,s,t,w,d,\Delta$ be nonnegative integers.
We denote by
\[
\FICSL_{\Delta}(m,s,t,w,d)
\]
the class of all graph languages $L(\Gamma,p)$ such that
\begin{enumerate}[label=\textup{(\arabic*)}]
\item $\Gamma$ is an $(m,s,t,w,d)$-bounded fixed-interface clause system,
\item $p$ is a unary predicate symbol of $\Gamma$ with $\irank(p)=0$, and
\item every graph in $L(\Gamma,p)$ has maximum degree at most $\Delta$.
\end{enumerate}
\end{definition}

\begin{definition}[Membership problem]
Let $\Gamma$ be a fixed-interface clause system and let $p$ be a unary predicate symbol of $\Gamma$ with $\irank(p)=0$.
The membership problem for $(\Gamma,p)$ is the following decision problem:

\medskip
\textsc{Membership Problem for $(\Gamma,p)$}

\smallskip
\emph{Instance:} A graph $G$.

\emph{Question:} Does $G$ belong to $L(\Gamma,p)$\,?
\end{definition}

\begin{proposition}[Membership bound]\label{prop:membership-bound}
Let $\Gamma$ be an $(m,s,t,w,d)$-bounded fixed-interface clause system, and let $p$ be a unary predicate symbol of $\Gamma$ with $\irank(p)=0$.
Assume that every graph in $L(\Gamma,p)$ has maximum degree at most $\Delta$.
Then the membership problem for $(\Gamma,p)$ is solvable in
\[
O\!\left(\xi(m,s,t,w,\Delta)\, N^{2s(w+1)} \cdot \ISO(N)\right),
\]
where $N$ is the number of vertices of the input graph, $\xi$ is a function depending only on $m,s,t,w,\Delta$, and $\ISO(N)$ denotes the time required to test isomorphism, preserving vertex labels, edge labels, and interfaces, of two graphs with interfaces on at most $N$ vertices.
\end{proposition}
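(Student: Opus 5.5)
We decide membership by a memoized dynamic program over the subgraphs with interfaces of the input graph $G$. The states are pairs $(q,K)$, where $q\in\Pi$ has $\irank(q)\le w$ and $K=K_G(\beta,E_B)$ is determined by a valid ordered boundary specification of rank $\irank(q)$ (Definition~\ref{def:valid-ordered-boundary-specification}). Writing $N=|V(G)|$, Lemma~\ref{lem:number-of-boundary-specifications} bounds the number of such $K$ by $\sum_{r\le w}N^r2^{r\Delta}=O(2^{w\Delta}N^w)$, and Lemma~\ref{lem:boundary-specification-enumeration-time} enumerates them in polynomial time. Since $\Gamma$ has at most $m$ clauses, at most $m(t+1)$ predicate symbols occur in it, so the number of states is bounded by a function of $m,t,w,\Delta$ times $N^w$.

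\textbf{Key structural claim.} We first show that derivations stay inside this state space. Consider a refutation of $\leftarrow p(G)$ in which a clause $\gamma=q(G_A)\leftarrow B_1,\ldots,B_\ell$ is applied to an atom $q(K)$ with $G_A\theta\cong K$. Each variable hyperedge $h$ of $G_A$ is then replaced by a copy of a ground pattern $K_h$, and the resulting body atom $B_i\theta$ is $q_i(K_h)$ with the ports of $h$ as interface. We must check that $K_h$, embedded in $G$, coincides with $K_G(\beta_h,E_{B,h})$, where $\beta_h$ is the image of $\ports(h)$ and $E_{B,h}$ is the set of edges of the copy incident with the ports.

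This is the delicate step. The replaced copy might contain vertices not connected to its ports, and it might touch other parts of $G$ outside the ports. Gluing identifies only port vertices, so the copy meets the rest of $G$ only at $\beta_h$, and every vertex of the copy that is reachable from $\beta_h$ inside the copy belongs to $I_G$. Components of the copy not attached to any port have to be handled separately: I would either argue that in a refutation each such component is accounted for as a separate isomorphic piece, or, more simply, allow states to be unions of $K_G(\beta,E_B)$ with the connected components of $G$ avoiding $\beta$. That costs only an extra factor bounded by the number of components, which is absorbed by the polynomial bound. With this, every atom arising in a refutation has the form $q(K)$ for a state $(q,K)$.

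\textbf{Transitions.} For a state $(q,K)$ and a clause with head $q(G_A)$, we guess how $G_A$ embeds. This means choosing the images in $V(G)$ of the at most $s$ hyperedges' port tuples, each of length at most $w$, and each hyperedge's boundary edge set. That gives at most $N^{sw}2^{sw\Delta}$ choices. For each choice, the fragments $K_h$ are determined by Lemma~\ref{lem:boundary-specification-unique}. The remainder of $K$ after deleting the interiors of the $K_h$ must then be isomorphic to the ground part of the head frame, which we check with one interface-preserving isomorphism test on at most $N$ vertices. Repeated variable labels impose the further constraint that $K_h\cong K_{h'}$ whenever $\lambda(h)=\lambda(h')$, which costs at most $s^2$ further $\ISO(N)$ calls. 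Finally we require each child state $(q_i,K_h)$ to be true.

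The children are strictly smaller than the parent unless a clause replaces $K$ by itself, as in a chain rule. Such cycles are handled by computing least fixed points: we iterate until no new state becomes true, which requires at most the number of states rounds.

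\textbf{Counting.} One round costs (number of states) $\times$ $m$ $\times$ $N^{sw}2^{sw\Delta}$ $\times$ $O(s^2)\ISO(N)$, and there are at most as many rounds as states. Collecting the powers of $N$ gives $N^{w}\cdot N^{w}\cdot N^{sw}$. This is bounded by $N^{2s(w+1)}$ for $s\ge1$, using generous slack for the $2^{r\Delta}$ edge-set choices and the component bookkeeping. For $s=0$ only facts apply and the cost is a single $\ISO(N)$ test. All remaining factors depend only on $m,s,t,w,\Delta$ and form $\xi$. The answer is whether the state $(p,G)$ with empty interface is derivable.

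The main obstacle, as noted, is justifying that every substituted fragment is exactly a boundary-determined subgraph $K_G(\beta,E_B)$. That requires care with port-disconnected parts and with how the chosen $E_B$ fixes the cut.
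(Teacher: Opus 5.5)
Your plan is the paper's in substance. Both compute a least fixed point over pairs (predicate, boundary-determined fragment $K_G(\beta,E_B)$ of rank at most $w$), with each clause instance fixed by at most $s$ fragment choices. The paper builds instances by substituting members of $\Sub_w(G)$ for the head variables and identifying realized heads up to isomorphism; you build them top-down from each state by guessing port images. Two steps fail as written.

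First, your count $N^{w}\cdot N^{w}\cdot N^{sw}=N^{2w+sw}$ is \emph{not} bounded by $N^{2s(w+1)}$ when $s=1$ and $w\ge 3$: at $w=3$ it is $N^{9}$ against $N^{8}$. The extra $N^{w}$ comes from charging a full pass per round. Fire each ground instance only once, by counter-based Horn propagation, to get $O(\xi\,N^{w+sw}\ISO(N))$; then $w+sw\le 2s(w+1)$ does hold for $s\ge 1$.

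Second, your transition test is not sufficient. Checking the remainder against the ground part, plus $K_h\cong K_{h'}$ for equal labels, does not rule out guessed fragments with overlapping interiors. Take a head consisting of one vertex $v$ with two rank-$1$ hyperedges attached at $v$. Both can be matched to the same pendant edge at $v$, so a star with one leaf is accepted although the clause forces two leaves. You must require pairwise disjoint interiors and a consistent split of port--port edges. Alternatively, test $\Real(G_A,\theta)\cong K$ directly for the induced $\theta$, which is what the paper's isomorphism-based identification of realized heads does.

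On the structural claim, your patch does not cost only a factor equal to the number of components. A substituted fragment, for instance one bound to a rank-$0$ variable, may contain an arbitrary subset of the components of $G$ avoiding $\beta$. Unions with single components miss such fragments, and unions with all subsets are exponentially many. The paper does not treat this case either. It simply asserts that fragments in $\Sub_w(G)$ suffice, which tacitly assumes every component of a substituted fragment meets its interface. Under that assumption your reachability argument is exactly the missing justification, and you should drop the patch. Note, however, that $(p,G)$ is then not one of your states, since $K_G((),\emptyset)$ is the empty graph. Add it as a single extra state, or obtain it as a realized head as the paper does.
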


\begin{proof}
The corresponding membership bound in \cite{shoudai2023pac} is stated for the class $\VHFGS_{k,\Delta}(m,s,t,r,w,d)$.
However, the counting argument used there depends on the bounded-degree assumption and on the bounded ranks of interfaces, and does not make essential use of the treewidth bound $k$.
Therefore, in the present fixed-interface setting, we verify the required membership bound directly.

Let $G=(V,E,\phi,\psi)$ be an input graph with $N=|V|$.
We show that the membership problem for $(\Gamma,p)$ can be solved by reducing it to a finite derivation problem over suitably instantiated ground clauses.
Since $\Gamma$ is a fixed-interface clause system, every body atom has a star graph pattern.
Thus, in a derivation of $\leftarrow p(G)$, every variable hyperedge occurring in a head pattern is instantiated by a graph with interface.
By the boundedness assumption on $\Gamma$, every variable label occurring in $\Gamma$ has rank at most $w$.
Hence, when we test membership of the input graph $G$, it is sufficient to consider subgraphs with interface of $G$ whose boundary rank is at most $w$.

Let $\Sub_w(G)$ denote the set of all subgraphs with interfaces of $G$ obtained from valid ordered boundary specifications of rank at most $w$.
We estimate the size of $\Sub_w(G)$.
Fix $r\leq w$. The number of ordered boundary tuples $\beta=(b_1,\ldots,b_r)$ of distinct vertices of $G$ is at most $N^r$. For a fixed such tuple, put $B=\{b_1,\ldots,b_r\}$.
Since every graph in $L(\Gamma,p)$ has maximum degree at most $\Delta$, we only need to consider input graphs of maximum degree at most $\Delta$; if the input graph has maximum degree greater than $\Delta$, then it is immediately rejected.
Hence the number of edges of $G$ incident with vertices in $B$ is at most $r\Delta$.
Therefore the number of possible boundary edge sets $E_B$ is at most $2^{r\Delta}$.
For each candidate pair $(\beta,E_B)$, validity can be checked and, when valid, the corresponding subgraph with interface $K_G(\beta,E_B)$ can be constructed.
Consequently, for fixed $r$, the number of candidates is at most $N^r 2^{r\Delta}$.
Summing over $r=0,1,\ldots,w$, we obtain
\[
|\Sub_w(G)|
\leq
\sum_{r=0}^{w} N^r 2^{r\Delta}
\leq
\xi_1(w,\Delta)\,N^{w+1}
\]
for some function $\xi_1$ depending only on $w$ and $\Delta$.
The exponent $w+1$ is a harmless coarse bound; in fact the preceding estimate gives a slightly sharper polynomial bound.

Now consider a clause of $\Gamma$.
By the definition of an $(m,s,t,w,d)$-bounded fixed-interface clause system, the head pattern of every clause contains at most $s$ variable hyperedges.
Thus, once the underlying clause is fixed, the number of possible substitutions that replace all head variables by members of
$\Sub_w(G)$ is at most $|\Sub_w(G)|^s$.
Since $\Gamma$ has at most $m$ clauses, the total number of instantiated ground clauses that need to be considered is at most $m\,|\Sub_w(G)|^s$.

From these instantiated clauses, we construct a finite ground clause set $\Gamma_G$.
Each atom occurring in $\Gamma_G$ has a ground graph with interface built from a subgraph with interface of $G$, and two such atoms are identified up to isomorphism preserving vertex labels, edge labels, and interfaces.
Hence every required equality test between instantiated atoms can be carried out in time $\ISO(N)$.
The number of distinct ground atoms that can occur in $\Gamma_G$ is bounded by a constant multiple of $|\Sub_w(G)|^s$, since each instantiated head or body atom is determined, up to interface-preserving isomorphism, by a clause of $\Gamma$ and by the choices of substitutions for at most $s$ variable hyperedges.

We then compute the least set of derivable ground atoms from $\Gamma_G$ by iteratively applying modus ponens.
At each application step, one chooses a ground clause and checks whether all atoms in its body have already been derived.
Since every clause body has at most $t$ atoms, and since the total number of ground clauses is at most $m|\Sub_w(G)|^s$, the total number of such checks is bounded by $\xi_2(m,s,t)\,|\Sub_w(G)|^{2s}$ for some function $\xi_2$ depending only on $m,s,t$.
Each check requires only isomorphism tests on graphs with interfaces of size at most $N$, and therefore costs at most $\ISO(N)$ time.

Consequently, the whole derivation procedure is completed in time
\[
O\!\left(
\xi_2(m,s,t)\,|\Sub_w(G)|^{2s}\cdot \ISO(N)
\right).
\]
Substituting the bound
$
|\Sub_w(G)|\leq \xi_1(w,\Delta)\,N^{w+1}
$
gives
\[
O\!\left(
\xi(m,s,t,w,\Delta)\, N^{2s(w+1)}\cdot \ISO(N)
\right),
\]
where $\xi$ is a function depending only on $m,s,t,w,\Delta$.

Finally, $\leftarrow p(G)$ is refutable in $\Gamma$ if and only if the corresponding ground atom is derivable from $\Gamma_G$.
Hence the membership problem for $(\Gamma,p)$ is solvable within the claimed time bound.
\end{proof}

\begin{corollary}\label{cor:membership-bound}
Let $m,s,t,w,d,\Delta$ be fixed constants.
Then the membership problem for $(\Gamma,p)$ is solvable in polynomial time in the size of the input graph.
\end{corollary}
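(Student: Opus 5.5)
This corollary follows from Proposition~\ref{prop:membership-bound} once we control the factor $\ISO(N)$. When $m,s,t,w,\Delta$ are fixed constants, the factor $\xi(m,s,t,w,\Delta)$ is a constant and $N^{2s(w+1)}$ is a polynomial of fixed degree. So the only remaining task is to show that every isomorphism test actually performed in the membership procedure runs in time polynomial in $N$.

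\textbf{Step 1.} I first check how $\ISO(N)$ is used in the proof of Proposition~\ref{prop:membership-bound}. The ground atoms of $\Gamma_G$ are instantiations of head patterns of $\Gamma$, with members of $\Sub_w(G)$ substituted for at most $s$ variable hyperedges. Equality of two such atoms is needed in two places: when checking that a derived body atom matches a head atom, and when matching an instantiated clause head against the goal $p(G)$. I will restrict these comparisons so that they are comparisons of members of $\Sub_w(G)$ of bounded interface rank, together with the bounded head frame. Concretely, each ground atom will be indexed by its head-frame type in $\mathcal H_d$ and by the tuple of chosen subgraphs with interfaces of $G$, which are identified via their ordered boundary specifications $(\beta,E_B)$.

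\textbf{Step 2.} I then argue that identifying atoms through these specifications, rather than by abstract isomorphism, is a polynomial-time operation. Two instantiations are compared by comparing the specifications. Each specification is computed in $O(|V|+|E|)=O(\Delta N)$ time by Lemma~\ref{lem:boundary-specification-decision}. The head frame itself has constant size, since $\mathcal H_d$ is a fixed finite family.

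The repeated-variable condition still requires genuine isomorphism tests: all occurrences of the same variable must receive isomorphic copies of the same graph with interface. For these tests I plan to use that the graphs have maximum degree at most $\Delta$ and that the interfaces are ordered. Isomorphism of bounded-degree labeled graphs is decidable in polynomial time by Luks' algorithm, with the degree of the polynomial depending only on $\Delta$. Hence $\ISO(N)\le N^{c(\Delta)}$. Labels and ordered interfaces can be encoded into the graph by attaching distinct gadgets, which raises the degree by at most a constant depending on $w$ and the alphabets.

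\textbf{Step 3.} Combining these, the total running time is $O\bigl(\xi\,N^{2s(w+1)+c(\Delta,w)}\bigr)$, which is polynomial in the size of the input graph. Inputs of maximum degree greater than $\Delta$ are rejected in linear time.

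The main obstacle is Step~2: justifying that $\ISO(N)$ is polynomial. General graph isomorphism is not known to be polynomial, so the argument must use the bounded-degree assumption through Luks' theorem. The alternative would be to treat the corollary as conditional on $\ISO(N)$ being polynomial. My first attempt will be the Luks route, with a careful check that the label and interface gadgets preserve bounded degree.
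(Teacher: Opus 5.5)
Your proposal is correct and takes the same route as the paper: with the parameters fixed, the bound of Proposition~\ref{prop:membership-bound} leaves only $\ISO(N)$ to control, which Luks' bounded-valence isomorphism algorithm makes polynomial, and your gadget encoding spells out the paper's remark that labels and bounded-rank ordered interfaces impose only a constant-size constraint. Your Steps~1--2, which re-index atoms by ordered boundary specifications, are not used in the paper and are not needed, since a polynomial $\ISO(N)$ plugged into the proposition's bound already gives the corollary.
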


\begin{proof}
By Proposition~\ref{prop:membership-bound}, the membership problem is solvable in
\[
O\!\left(\xi(m,s,t,w,\Delta)\, N^{2s(w+1)}\cdot \ISO(N)\right).
\]
Since $m,s,t,w,\Delta$ are fixed, the factor $\xi(m,s,t,w,\Delta)$ is a constant and the exponent $2s(w+1)$ is fixed.
Moreover, by Luks' algorithm~\cite{luks1982boundedvalence}, isomorphism of bounded-valence graphs is testable in polynomial time.
The ordered interfaces have bounded rank, so preserving them only imposes a fixed-size constraint.
Hence the membership problem is solvable in polynomial time in the size of the input graph.
\end{proof}
\section{Learning Model}\label{sec:learning-model}

We now fix the learning model. The learner receives a positive presentation of an unknown target language and has access to membership queries for that language.
Its task is to output a sequence of fixed-interface clause systems that eventually generate exactly the target language.
We also use the standard polynomial-time update requirement, measured with respect to the cumulative size of the observed positive examples.

\begin{definition}[Positive presentation]
Let $L$ be a graph language, that is, a set of graphs. An infinite sequence $G_1,G_2,\ldots$ of graphs is called a \emph{positive presentation} of $L$ if every $G_i$ belongs to $L$ and every graph in $L$ appears at least once in the sequence.
\end{definition}

\begin{definition}[Membership query]
Let $L$ be a graph language.
A \emph{membership query} for $L$ is a query of the form $\text{``Is } G \in L \text{?''}$ where $G$ is a graph.
The answer is either \textsc{yes} or \textsc{no}.
\end{definition}

\begin{definition}[Learner]
A \emph{learner} is an algorithm that, given a finite initial segment $G_1,\ldots,G_n$ of a positive presentation of a target language $L_*$ and access to membership queries for $L_*$, outputs a hypothesis $(\Gamma_n,p_n)$, where $\Gamma_n$ is a fixed-interface clause system and $p_n$ is a predicate symbol of $\Gamma_n$ with $\irank(p_n)=0$.
\end{definition}

\begin{definition}[Identification in the limit]
Let $\mathcal{C}$ be a class of graph languages.
We say that $\mathcal{C}$ is \emph{identifiable in the limit from positive data and membership queries} if there exists a learner $\mathcal{A}$ such that, for every target language $L_* \in \mathcal{C}$ and every positive presentation $G_1,G_2,\ldots$ of $L_*$, the sequence of hypotheses $(\Gamma_1,p_1),(\Gamma_2,p_2),\ldots$ output by $\mathcal{A}$ satisfies the following condition: there exists a positive integer $N$ such that $L(\Gamma_n,p_n)=L_*$ for all $n \ge N$.
\end{definition}

\begin{definition}[Polynomial update]\label{def:poly-update}
A learner $\mathcal{A}$ is said to have polynomial-time update if there exists a polynomial $P_{\mathrm{upd}}$ such that, for every target language $L_*$, every positive presentation $G_1,G_2,\ldots,$ and every $n \ge 1$, the time required by $\mathcal{A}$ to output $(\Gamma_n,p_n)$ after reading $G_1,\ldots,G_n$ is at most $P_{\mathrm{upd}}\!\left(\sum_{i=1}^n |V(G_i)|\right)$.
\end{definition}

\begin{remark}
Membership queries are treated as oracle calls. Thus, in the polynomial-time update requirement, the cost of answering a membership query is not charged to the learner.
We count the number of membership queries and require that the graphs submitted as queries have size polynomially bounded in the cumulative size of the observed positive examples.
\end{remark}
\section{Main Results}\label{sec:main-results}
\subsection{Learning algorithm}\label{subsec:learning-algorithm}

Before presenting the algorithm, we explain the role of the two finite sets $\mathcal F$ and $\mathcal R$. Both consist of ordered boundary representations extracted from the positive examples observed so far.
The set $\mathcal F$ serves as the predicate basis of the current hypothesis: for each $\rho\in\mathcal F\cup\{\rho_\emptyset\}$, the construction introduces a predicate symbol $p_\rho$.
The set $\mathcal R$ is used as a residual test set for clause admission.
For an ordered boundary representation $\rho$, we write $\mathrm{can}(\rho)$ for the corresponding graph with interface.
The observation table records, for $\rho\in\mathcal F$ and $\sigma\in\mathcal R$, whether $\mathrm{can}(\rho)\odot\mathrm{can}(\sigma)$ belongs to the target language.
Using this table and membership queries, the learner constructs $\Gamma(\mathcal F,\mathcal R)$ by admitting exactly those bounded fixed-interface clause candidates that pass the clause-admission test.

Algorithm~\ref{alg:learning} updates $\mathcal F$ only when the current hypothesis fails to cover the observed positive examples, while $\mathcal R$ is recomputed from the current sample at every stage.
Thus even when the predicate basis is unchanged, the hypothesis may change because the clause-admission tests are re-executed against the enlarged residual set.

\begin{definition}[Composition along interfaces]
Let $G=(V_G,E_G,\phi_G,\psi_G,\iota_G)\:\text{and}\:H=(V_H,E_H,\phi_H,\psi_H,\iota_H)$ be graphs with interfaces.
If $|\iota_G|\ne|\iota_H|$, then $G\odot H$ is undefined.
Otherwise, let $|\iota_G|=|\iota_H|=r$, and write $\iota_G=(u_1,\ldots,u_r)\:\text{and}\:\iota_H=(v_1,\ldots,v_r)$.

Let $\widetilde{G}$ and $\widetilde{H}$ be disjoint copies of $G$ and $H$, respectively, and glue them by identifying the copy of $u_i$ with the copy of $v_i$ for each $i=1,\ldots,r$.
The composition is defined only if the following label-compatibility conditions hold: $\phi_G(u_i)=\phi_H(v_i)\:(i=1,\ldots,r)$, and whenever two ordinary edges, one from $G$ and one from $H$, become the same edge after the identifications, their edge labels coincide.
If these conditions hold, the resulting labeled graph is denoted by $G\odot H$.
Its vertex labels and edge labels are induced from those of $G$ and $H$; the compatibility conditions ensure that these induced
labeling functions are well-defined. If one of the compatibility conditions fails, then $G\odot H$ is undefined.
\end{definition}

\begin{remark}
In the composition $G\odot H$, the interface vertices are used only for gluing.
Hence the result is regarded as a graph with interface only when an interface is specified separately.
In particular, when we use $G\odot H$ in the definitions of contexts and substructures below, it will usually be treated simply as a graph.
\end{remark}

\begin{definition}[Context of a predicate]\label{def:context}
Let $\Gamma$ be a fixed-interface clause system, let $p$ be the designated start predicate symbol of $\Gamma$ with $\irank(p)=0$, and let $q$ be a predicate symbol of $\Gamma$.
Let $G$ be a graph with interface such that $|\iota_G|=\irank(q)$.
We say that $G$ is a \emph{context} of $q$ with respect to $(\Gamma,p)$ if, for every graph with interface $K$ such that $|\iota_K|=\irank(q)$,
\[
\leftarrow q(K) \text{ has a refutation in } \Gamma
\quad\Longleftrightarrow\quad
\leftarrow p(G\odot K) \text{ has a refutation in } \Gamma,
\]
whenever the composition $G\odot K$ is defined.
\end{definition}

\begin{definition}[Finite context property (FCP)]\label{def:finite-context-property}
Let $\Gamma$ be a fixed-interface clause system, and let $p$ be a unary predicate symbol of $\Gamma$ with $\irank(p)=0$.
We say that $(\Gamma,p)$ has the \emph{finite context property} if, for every unary predicate symbol $q$ of $\Gamma$, there exists a graph with interface $G_q$ such that $G_q$ is a context of $q$ with respect to $(\Gamma,p)$.
\end{definition}

\begin{definition}[Context-learnable graph languages]\label{def:context-learnable-graph-languages}
Let $m,s,t,w,d,\Delta$ be nonnegative integers.
We denote by
\[
\FICSLFCP_{\Delta}(m,s,t,w,d)
\]
the class of all graph languages $L(\Gamma,p)$ such that
\begin{enumerate}[label=\textup{(\arabic*)}]
\item $\Gamma$ is an $(m,s,t,w,d)$-bounded fixed-interface clause system,
\item $p$ is a unary predicate symbol of $\Gamma$ with $\irank(p)=0$,
\item $(\Gamma,p)$ has the finite context property,
\item every graph in $L(\Gamma,p)$ has maximum degree at most $\Delta$.
\end{enumerate}
\end{definition}

\begin{remark}[Interface-degree-safe clause systems]\label{rem:interface-degree-safe}
The bounded-degree condition, namely condition~(4) of Definition~~\ref{def:context-learnable-graph-languages}, is a semantic condition on the generated language. We describe here a useful syntactic sufficient condition for it.

For a fixed-interface clause system $\Gamma$, define the predicate dependency graph of $\Gamma$ as the directed graph whose vertices are the predicate symbols of $\Gamma$, and in which there is an edge from $p$ to $q$ if $\Gamma$ contains a clause with head predicate $p$ and a body atom with predicate $q$.
Let $A \leftarrow B_1,\ldots,B_m$ be a fixed-interface clause of $\Gamma$, and let $p$ be the predicate symbol of the head atom $A$.
For a body atom $B_i$ with predicate symbol $q_i$, we call $B_i$ recursive in this clause if $p$ and $q_i$ belong to the same strongly connected component of the predicate dependency graph of $\Gamma$.

Let $H$ be the head pattern of $A$.
We call the clause \emph{interface-degree-safe} if every vertex of $H$ that belongs both to the interface of $H$ and to the port list of a variable hyperedge corresponding to a recursive body atom satisfies the following two conditions:
it occurs in the port list of at most one such recursive variable hyperedge, and it is incident with no ordinary edge of $H$.
Thus, an interface vertex that is passed through a recursive call cannot also accumulate new ordinary edges at the same derivation step, nor can it be shared by several recursive components. Non-recursive body atoms may still be attached at interface vertices; such attachments do not by themselves propagate degree growth along a recursive cycle. A fixed-interface clause system is called \emph{interface-degree-safe} if all its clauses are interface-degree-safe.

For interface-degree-safe systems, recursive derivations cannot accumulate unbounded degree at vertices passed through interfaces, and hence the generated graphs have maximum degree bounded by a constant depending only on the clause system.
In the sequel, however, the learning argument uses only the existence of a fixed degree bound $\Delta$, not this particular sufficient condition.
\end{remark}

\begin{example}
Figure~\ref{fig:fcp-example} illustrates the finite context property for a simple fixed-interface clause system.
The start predicate is $p$, with $\irank(p)=0$, while $q$ and $r$ are auxiliary unary predicate symbols with interface rank $2$.
The predicate $r$ generates path-like interface graphs, $q$ combines such interface graphs, and $p$ derives graphs with the empty interface from $q$-structures.
For each auxiliary predicate $s\in\{q,r\}$, the derivability of $s(K)$ can be tested, for every graph with interface $K$, by placing $K$ into a suitable context $C_s$ and asking whether the resulting graph is derivable from the start predicate $p$.
Thus the auxiliary predicates are represented by contexts with respect to $p$, as required by the finite context property.

\begin{figure}[t]
  \centering
  \includegraphics[scale=0.48]{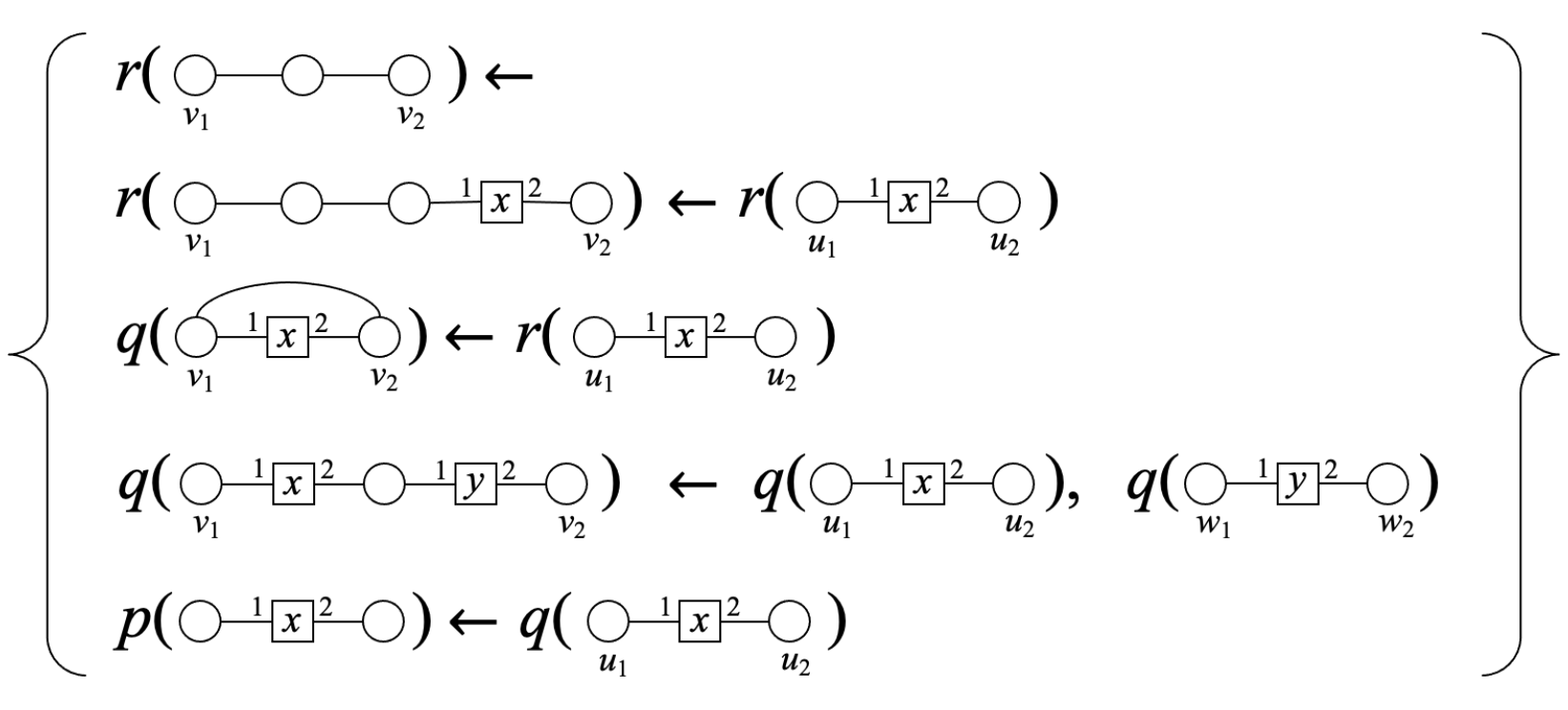}
  \caption{
  A fixed-interface clause system illustrating the finite context property.
  The start predicate $p$ has interface rank $0$, whereas the auxiliary
  predicates $q$ and $r$ have interface rank $2$.
  Vertex names are local to each displayed clause and indicate interface positions only.
  }
  \label{fig:fcp-example}
\end{figure}

Both fixed-interface clause systems illustrated in Figures~\ref{fig:fics-repeated-variable} and~\ref{fig:fcp-example} satisfy the interface-degree-safety condition of Remark~\ref{rem:interface-degree-safe}.
\end{example}

\begin{definition}[Context candidates]\label{def:context-candidates}
Let $D$ be a nonempty finite set of graphs.
We define $\Con(D)$ to be the set of all graphs with interface $G$ for which there exists a graph with interface $H$ such that $G\odot H$ is defined and $G\odot H\in D$.
\end{definition}

\begin{definition}[Characteristic sample]\label{def:characteristic-sample}
Let $(\Gamma,p)$ be a fixed-interface clause system with the finite context property.
A finite subset $D \subseteq L(\Gamma,p)$ is called \emph{characteristic for contexts} if, for every unary predicate symbol $q$ of $\Gamma$, the set $\Con(D)$ contains a context of $q$ with respect to $(\Gamma,p)$.
\end{definition}

\begin{theorem}[Identification in the limit]\label{thm:correctness}
For every target language $L_* \in \FICSLFCP_{\Delta}(m,s,t,w,d)$, Algorithm~\ref{alg:learning} identifies $L_*$ in the limit from positive data and membership queries.
\end{theorem}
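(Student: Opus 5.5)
The plan is to follow the standard distributional-learning template of Clark--Eyraud and Yoshinaka, adapted to ordered boundary representations. Fix a target $(\Gamma_*,p_*)$ witnessing $L_*\in\FICSLFCP_{\Delta}(m,s,t,w,d)$ and a positive presentation $G_1,G_2,\ldots$ of $L_*$. The argument has three stages. First, the sample eventually becomes characteristic. Second, once it is, the hypothesis generates at least $L_*$, so $\mathcal F$ freezes. Third, spurious clauses are eliminated, so the hypothesis generates at most $L_*$.

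\emph{Stage one: the sample becomes characteristic.} For each predicate $q$ of $\Gamma_*$, the finite context property gives a context $G_q$ of rank $\irank(q)\le w$. Choose some $K_q$ with $G_q\odot K_q\in L_*$; this is possible whenever $q$ is useful, and useless predicates can be discarded without changing $L_*$.

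Since the presentation enumerates $L_*$, there is a stage $n_0$ after which $D_n$ contains all the finitely many graphs $G_q\odot K_q$, together with, for each clause of $\Gamma_*$, one instantiated use of that clause in a refutation. Inside $G_q\odot K_q$, both $G_q$ and $K_q$ arise as $K_G(\beta,E_B)$ for a valid ordered boundary specification of rank $\le w$. Take $\beta$ to be the glued interface, and take $E_B$ to be the boundary edges contributed by the respective side. By Lemma~\ref{lem:boundary-specification-unique} and Corollary~\ref{cor:sample-boundary-enumeration-time}, they are therefore enumerated by the learner.

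One caveat: $K_G(\beta,E_B)$ only retains the components of $G-B$ reachable from $E_B$. So I would choose the witnesses $K_q$ to be those realized by the construction, or argue that the contexts obtained this way still satisfy Definition~\ref{def:context}. This bookkeeping is a potential technical snag.

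\emph{Stage two: completeness and stabilization of $\mathcal F$.} For each target clause $p_*(H)\leftarrow q_1(\cdot),\ldots$, I would map every target predicate $q$ to $p_{\rho_q}$, where $\rho_q$ is a representative in $\mathcal F$ with the appropriate substructure (or context) behaviour, and $p_*$ to $p_{\rho_\emptyset}$. The head frame of the translated clause lies in $\mathcal H_d$, and its counts obey $s,t,w$, so it is among the enumerated candidates.

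The clause passes every admission test. The reason is that any membership query it triggers has the form $\mathrm{can}(\sigma)\odot(\text{instance})$ for $\sigma\in\mathcal R$, and by the context property the answer coincides with derivability in $\Gamma_*$, which holds because the target clause is sound. Induction on refutation length in $\Gamma_*$ then shows $L_*\subseteq L(\Gamma_n,p_n)$. Consequently the hypothesis covers every observed example, the update rule for $\mathcal F$ is never triggered again, and $\mathcal F$ changes only finitely often. Between updates, $\mathcal F$ can change only when coverage fails, and each change adds representatives from a fixed finite pool relative to $\Gamma_*$. This needs an argument that the learner does not add unboundedly many useless representatives; the plan is to have it add only the canonical ones at each failure.

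\emph{Stage three: soundness.} Once $\mathcal F$ is frozen, the set of clause candidates is finite, because predicates are finite, $\mathcal H_d$ is finite, and variable labels are counted up to renaming. For each candidate $c$ that is spurious, meaning that adding it lets some $G\notin L_*$ be derived from $p_{\rho_\emptyset}$, I would exhibit a witness residual $\sigma_c$. Here $\sigma_c$ is a context of the head predicate whose composition with some instance built from the body representatives falls outside $L_*$.

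Because $\mathcal R$ is recomputed from the growing sample at every stage and $L_*$ is eventually fully enumerated, each $\sigma_c$ eventually lies in $\mathcal R$. From then on $c$ fails its membership query. After the maximum of these finitely many stages, the surviving clauses are all sound, in the sense that each admitted clause preserves the invariant ``$p_\rho(K)$ derivable $\Rightarrow \mathrm{can}(\rho)$-context composition $\in L_*$''. Induction on derivations then gives $L(\Gamma_n,p_n)\subseteq L_*$.

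I expect this last step to be the main obstacle. A spurious clause might only produce wrong graphs through interaction with other admitted clauses and repeated variable labels. So I would first prove the local invariant by induction, with the admission test phrased exactly as a one-step check against contexts. Only then would I use the finiteness of candidates to fix a uniform convergence stage.
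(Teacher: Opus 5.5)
The gap is in Stage two. Your reconstruction argument presupposes $\rho_q\in\mathcal F$. But Algorithm~\ref{alg:learning} resets $\mathcal F$ to $\BRep_w(D_n)$ only at a stage where the current hypothesis fails to cover $D_n$. So ``once the sample is characteristic, the hypothesis generates at least $L_*$'' does not follow. A basis $\mathcal F$ fixed at an earlier, non-characteristic stage may already yield hypotheses that cover every example from then on. In that case no update ever happens again, $\mathcal F$ never contains the target context representatives, and Stage two gives no completeness.

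The paper handles this as a separate case with a different argument:
\begin{itemize}
\item With $\mathcal F$ frozen, there are finitely many clause candidates.
\item Since $\mathcal R=\BRep_w(D_n)$ only grows, the non-fact admission tests only get stricter, and facts do not depend on $\mathcal R$. So $\Gamma_n$ is descending in a finite set and stabilizes at some $\Gamma_{n_4}$.
\item Every $G\in L_*$ lies in some $D_j$ with $j>n_4$. Since the update test does not fire at stage $j$, $G\in L(\Gamma_{j-1},p_{\rho_\emptyset})=L(\Gamma_{n_4},p_{\rho_\emptyset})$.
\end{itemize}
Completeness in this case comes from the update test plus stabilization, not from reconstruction. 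Soundness, which needs only a frozen $\mathcal F$, then finishes the case.

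Your worry about ``unboundedly many useless representatives'' rests on a misreading. $\mathcal F$ is replaced wholesale, not extended, and after one update at a characteristic stage no further update occurs. Your proposed remedy of having the learner add only canonical representatives would change the algorithm, so it would not prove the theorem as stated.

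Second, in Stages two and three you invert the roles of $\mathcal F$ and $\mathcal R$. In Algorithm~\ref{alg:construct-gamma}, $\rho\in\mathcal F$ is always the context: the head query is $\mathrm{can}(\rho)\odot\Real(H,\{x:=\mathrm{can}(\sigma_x)\}_{x\in X(\gamma)})$. The residuals $\sigma_x\in\mathcal R$ are fillers for the body variables. So the witness that eventually rejects a bad clause is not a head context drawn from $\mathcal R$. It is a family $(K_x)_{x\in X(\gamma)}$ with $\mathrm{can}(\rho_i)\odot K_{x_i}\in L_*$ for all $i$ and $\mathrm{can}(\rho)\odot\Real(H,\{x:=K_x\}_{x\in X(\gamma)})\notin L_*$. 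Each $K_x$ eventually enters $\mathcal R$ precisely because $\mathrm{can}(\rho_i)\odot K_{x_i}$ is a positive example that must be presented.

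Spuriousness must also be defined locally in this way, as in Definition~\ref{def:spurious-clause}. Your global notion (``adding it lets some $G\notin L_*$ be derived'') depends on the other admitted clauses and need not yield a one-step witness. With the local notion, the invariant you sketch at the end goes through exactly as in Lemma~\ref{lem:eventual-language-inclusion}.

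Your Stage one caveat, by contrast, is legitimate. The paper's Lemma~\ref{lem:eventual-boundary-representative} asserts that $K_X(\beta,E_B)$ is the copy of $C_q$ without treating components of $C_q-B$ that have no edge to $B$. The same issue affects recovering the fillers $K_x$ from $\mathcal R$.
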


\begin{remark}
The proof of the theorem will be given stepwise.
The main ingredients are:
\begin{enumerate}[label=\textup{(\arabic*)}]
\item every target context eventually appears in $\Con(D_n)$,
\item the hypothesis eventually contains predicate representatives corresponding to all target predicates,
\item every target clause is eventually reconstructed in $\Gamma(\mathcal F,\mathcal R)$, and
\item spurious clauses are excluded by membership queries.
\end{enumerate}
\end{remark}

Let $G_{\emptyset}:=(\emptyset,\emptyset,\phi_{\emptyset},\psi_{\emptyset},())$ denote the empty graph with the empty interface, and let $p_{\emptyset}$ be the unary predicate symbol corresponding to $G_{\emptyset}$.

\begin{definition}[Realization of a graph pattern]\label{def:realization-head-pattern}
Let $G=(V,E,H,\phi,\psi,\lambda,\ports,\iota)$ be a graph pattern with variable hyperedges, and let $\theta=\{x_1:=K_1,\ldots,x_m:=K_m\}$ be a substitution, where each $K_i$ is a graph with interface and $\rank(x_i)=|\iota_{K_i}|$.
Assume that every variable label occurring in $G$ belongs to $\{x_1,\ldots,x_m\}$.

The \emph{realization} of $G$ by $\theta$, denoted by $\Real(G,\theta)$, is the graph with interface obtained from $G$ by simultaneously replacing every hyperedge labeled $x_i$ by a fresh copy of $K_i$, identifying the $j$-th vertex of the port list of that hyperedge with the $j$-th interface vertex of the copy of $K_i$ for each $j=1,\ldots,\rank(x_i)$, and finally keeping the interface $\iota$ of $G$ as the interface of the resulting graph.

Equivalently, if $G\theta$ denotes the graph pattern obtained by applying $\theta$ to $G$, then $\Real(G,\theta)$ is the graph with interface underlying $G\theta$.
When $G$ is the head pattern of a clause, we also refer to $\Real(G,\theta)$ as the realization of the head pattern by $\theta$.
\end{definition}

Throughout Algorithms~\ref{alg:construct-obrep}--\ref{alg:learning}, we write $\mathrm{Oracle}_{L_*}(G)$ for the result of the membership query ``$G\in L_*$?'', which returns \textbf{true} or \textbf{false}.
For an ordered boundary representation $\rho$, we write $\mathrm{rank}(\rho)$ for the length of its ordered boundary.
Whenever a representative graph with interface is needed, we write $\mathrm{can}(\rho)$ for the subgraph with interface determined by $\rho$.
The finite family $\mathcal H_d$ is fixed throughout the construction and is used as a global parameter when constructing $\Gamma(F,R)$ in Algorithm~\ref{alg:construct-gamma}.

\begin{algorithm}[t]
\caption{Construction of ordered boundary representations}
\label{alg:construct-obrep}
\begin{algorithmic}[1]
\Require A finite positive sample $D$ and an interface-rank bound $w$
\Ensure A finite set $\mathrm{BRep}_w(D)$ of ordered boundary representations

\State $\mathrm{BRep}_w(D)\leftarrow\emptyset$

\ForAll{$X=(V_X,E_X,\phi_X,\psi_X,())\in D$}
    \For{$r=0,1,\ldots,w$}
        \State $\mathcal B_r(X)\leftarrow
        \{(b_1,\ldots,b_r) \mid b_1,\ldots,b_r
        \text{ are distinct vertices of }X\}$
        \ForAll{$\beta=(b_1,\ldots,b_r)\in\mathcal B_r(X)$}
            \State $B\leftarrow\{b_1,\ldots,b_r\}$
            \State $\mathrm{Inc}_X(B)\leftarrow
            \{e\in E_X\mid e\cap B\neq\emptyset\}$
            \ForAll{$E_B\subseteq \mathrm{Inc}_X(B)$}
                \State \Comment{$(\beta,E_B)$ is valid since $E_B\subseteq \mathrm{Inc}_X(B)$}
                \State $K\leftarrow K_X(\beta,E_B)$
                \State $\rho\leftarrow(\beta,E_B)$
                \State $\mathrm{can}(\rho)\leftarrow K$
                \State Add $\rho$ to $\mathrm{BRep}_w(D)$
            \EndFor
        \EndFor
    \EndFor
\EndFor

\State \Return $\mathrm{BRep}_w(D)$
\end{algorithmic}
\end{algorithm}

\begin{algorithm}[t]
\caption{Construction of the observation table
$\mathrm{Obs}_{\mathcal F,\mathcal R}$}
\label{alg:construct-observation-table}
\begin{algorithmic}[1]
\Require Finite sets $\mathcal F$ and $\mathcal R$ of ordered boundary representations
\Ensure A Boolean table $\mathrm{Obs}_{\mathcal F,\mathcal R}$ on
$\mathcal F\times\mathcal R$

\State Initialize
$\mathrm{Obs}_{\mathcal F,\mathcal R}(\rho,\sigma)\leftarrow\mathrm{false}$
for all $(\rho,\sigma)\in\mathcal F\times\mathcal R$

\ForAll{$\rho\in\mathcal F$}
    \ForAll{$\sigma\in\mathcal R$}
        \State Let $G\leftarrow\mathrm{can}(\rho)$ and
        $K\leftarrow\mathrm{can}(\sigma)$
        \If{$G\odot K$ is defined \textbf{and}
        $\mathrm{Oracle}_{L_*}(G\odot K)$}
            \State
            $\mathrm{Obs}_{\mathcal F,\mathcal R}(\rho,\sigma)
            \leftarrow\mathrm{true}$
        \EndIf
    \EndFor
\EndFor

\State \Return $\mathrm{Obs}_{\mathcal F,\mathcal R}$
\end{algorithmic}
\end{algorithm}

\begin{algorithm}[p]
\caption{Construction of $\Gamma(\mathcal F,\mathcal R)$}
\label{alg:construct-gamma}
\begin{algorithmic}[1]
\Require Finite sets $\mathcal F$ and $\mathcal R$ of ordered boundary representations
\Ensure A fixed-interface clause system $\Gamma(\mathcal F,\mathcal R)$

\State $\Gamma(\mathcal F,\mathcal R)\leftarrow\emptyset$

\ForAll{$\rho\in\mathcal F\cup\{\rho_\emptyset\}$}
    \State Introduce $p_\rho$ with
    $\mathrm{irank}(p_\rho)=\mathrm{rank}(\rho)$
\EndFor

\State Construct $\mathrm{Obs}_{\mathcal F,\mathcal R}$ by Algorithm~\ref{alg:construct-observation-table}

\State $\mathcal C\leftarrow$ the set of all bounded fixed-interface clause candidates over $\mathcal F\cup\{\rho_\emptyset\}$ whose head frames belong to $\mathcal H_d$

\ForAll{$\gamma=p_\rho(H)\leftarrow
p_{\rho_1}(H_1),\ldots,p_{\rho_\ell}(H_\ell)\in\mathcal C$}
    \State \Comment{Clause-admission test for $\gamma$}
    \If{$\ell=0$}
        \State Let $G\leftarrow\mathrm{can}(\rho)$
        \If{$G\odot H$ is defined \textbf{and}
        $\mathrm{Oracle}_{L_*}(G\odot H)$}
            \State Add $\gamma$ to $\Gamma(\mathcal F,\mathcal R)$
        \EndIf
    \Else
        \State Let $x_i$ be the variable label of $H_i$ for each $i=1,\ldots,\ell$
        \State Let $X(\gamma)=\{x_i\mid 1\leq i\leq \ell\}$
        \State $\mathrm{admissible}\leftarrow\mathrm{true}$

        \ForAll{$(\sigma_x)_{x\in X(\gamma)}\in\mathcal R^{X(\gamma)}$}
            \If{$\mathrm{Obs}_{\mathcal F,\mathcal R}(\rho_i,\sigma_{x_i})
            =\mathrm{true}$ for all $i=1,\ldots,\ell$}
                \State Let $K_x\leftarrow\mathrm{can}(\sigma_x)$
                for each $x\in X(\gamma)$
                \If{$K\leftarrow
                \Real\bigl(H,\{x:=K_x\}_{x\in X(\gamma)}\bigr)$
                is defined}
                    \If{$\neg\mathrm{Oracle}_{L_*}
                    \bigl(\mathrm{can}(\rho)\odot K\bigr)$}
                        \State $\mathrm{admissible}\leftarrow\mathrm{false}$
                        \State \textbf{break}
                    \EndIf
                \EndIf
            \EndIf
        \EndFor

        \If{$\mathrm{admissible}$}
            \State Add $\gamma$ to $\Gamma(\mathcal F,\mathcal R)$
        \EndIf
    \EndIf
\EndFor

\State \Return $\Gamma(\mathcal F,\mathcal R)$
\end{algorithmic}
\end{algorithm}

\begin{remark}\label{rem:recomputation}
At each stage $n$, the clause set $\Gamma(\mathcal F,\mathcal R)$ is recomputed from the current sets $\mathcal F$ and $\mathcal R$.
In particular, even when $\mathcal F$ is not updated, the residual set $\mathcal R$ is updated to $\mathrm{BRep}_w(D_n)$, and the observation table $\mathrm{Obs}_{\mathcal F,\mathcal R}$ and the clause-admission tests in Algorithm~\ref{alg:construct-gamma} are recomputed with respect to the new $\mathcal R$.
Thus the hypothesis at stage $n$ is determined by the current pair $(\mathcal F,\mathcal R)$, not merely by the current predicate basis $\mathcal F$.
\end{remark}

\begin{algorithm}[t]
\caption{Learning algorithm}
\label{alg:learning}
\begin{algorithmic}[1]
\Require A positive presentation $G_1,G_2,\ldots$ of the target language
$L_*$, an interface-rank bound $w$, and access to membership queries for
$L_*$
\Ensure A sequence of hypotheses $(\Gamma_n,p_{\rho_\emptyset})$

\State $\mathcal F\leftarrow\emptyset$
\State $\mathcal R\leftarrow\emptyset$

\For{$n=1,2,3,\ldots$}
    \State $D_n\leftarrow\{G_1,\ldots,G_n\}$

    \State Construct $\Gamma'=\Gamma(\mathcal F,\mathcal R)$ by Algorithm~\ref{alg:construct-gamma}

    \If{$D_n\nsubseteq L(\Gamma',p_{\rho_\emptyset})$}
        \State Construct $\mathcal F=\mathrm{BRep}_w(D_n)$ by Algorithm~\ref{alg:construct-obrep}
    \EndIf

    \State Construct $\mathcal R=\mathrm{BRep}_w(D_n)$ by Algorithm~\ref{alg:construct-obrep}

    \State Construct $\Gamma_n=\Gamma(\mathcal F,\mathcal R)$ by Algorithm~\ref{alg:construct-gamma}

    \State Output $(\Gamma_n,p_{\rho_\emptyset})$
\EndFor
\end{algorithmic}
\end{algorithm}
\subsection{Identification in the limit}\label{subsec:correctness}

We now prove Theorem~\ref{thm:correctness}.
The proof is organized around the eventual appearance of target contexts and the eventual elimination of spurious clauses.

First, we show that every relevant target context eventually appears among the ordered boundary representations extracted from the observed positive examples.
This yields predicate representatives for all target predicates at sufficiently late update stages.
Next, we prove that every target clause is then reconstructed over these representatives, and that refutations in the target system can be simulated in the reconstructed hypothesis.
The remaining part of the proof establishes eventual soundness.
Once the predicate basis is fixed, every spurious non-fact clause is eventually exposed by some residual representative and is rejected by a membership query.
Hence, after sufficiently many examples, the hypotheses generate no graph outside the target language. Combining this eventual soundness with the simulation of target refutations gives identification in the limit.

\begin{lemma}\label{lem:eventual-context}
Assume that $L_* = L(\Gamma_*,p_*) \in \FICSLFCP_{\Delta}(m,s,t,w,d)$, and let $G_1,G_2,\ldots$ be a positive presentation of $L_*$.
For each $n \ge 1$, let $D_n=\{G_1,\ldots,G_n\}$.
Let $q$ be a unary predicate symbol of $\Gamma_*$ such that there exists a graph with interface $K_q$ for which $\leftarrow q(K_q)$ has a refutation in $\Gamma_*$.
Then there exist a context $C_q$ of $q$ with respect to $(\Gamma_*,p_*)$ and a positive integer $n_q$ such that $C_q \in \Con(D_n)$ for every $n \ge n_q$.
\end{lemma}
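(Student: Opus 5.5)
The plan is to combine the finite context property with the fact that a positive presentation eventually lists every graph of $L_*$. Since $(\Gamma_*,p_*)$ has the FCP (Definition~\ref{def:finite-context-property}), there is a graph with interface $G_q$ that is a context of $q$ with respect to $(\Gamma_*,p_*)$. I will take $C_q:=G_q$ and exhibit one graph of $L_*$ that decomposes as $C_q\odot H$ for some $H$. The natural choice is $H:=K_q$. Since $\leftarrow q(K_q)$ has a refutation in $\Gamma_*$, Definition~\ref{def:context} gives that $\leftarrow p_*(C_q\odot K_q)$ has a refutation, provided $C_q\odot K_q$ is defined. Hence $G^\sharp:=C_q\odot K_q$ is a graph in $L(\Gamma_*,p_*)=L_*$.

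Next I use the positive presentation. Every graph of $L_*$ occurs somewhere in $G_1,G_2,\ldots$, so there is an index $n_q$ with $G_{n_q}=G^\sharp$. Then $G^\sharp\in D_n$ for every $n\ge n_q$, because $D_n=\{G_1,\ldots,G_n\}$ is increasing in $n$. By Definition~\ref{def:context-candidates}, the witness $H=K_q$ shows $C_q\in\Con(D_n)$ for all $n\ge n_q$. Membership in $D_n$ is up to isomorphism, which is harmless because $\Con$ and $L_*$ are closed under isomorphism. A side remark: both sides of the decomposition have interface rank $\irank(q)\le w$, since the ranks of predicates in a bounded FICS are tied to variable ranks. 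This bound is not needed for the present lemma, only later when $C_q$ must be captured by $\BRep_w(D_n)$.

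The main obstacle is the definedness of $C_q\odot K_q$. The context condition is only required "whenever the composition is defined", and $\odot$ can fail on label-compatibility grounds: the interface vertex labels of the two pieces may disagree, or two overlapping edges may carry different labels. I will first try to choose $C_q$ and $K_q$ jointly so that the composition is defined.

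A fallback is available if that choice cannot be made directly. Suppose $q$ occurs in some refutation of a graph $G\in L_*$, which is the relevant case for the later lemmas. Cutting $G$ at the subderivation rooted at the $q$-atom writes $G=C\odot K$ with $\leftarrow q(K)$ refutable, and this composition is defined because it is literally a decomposition of $G$. I would then argue that the FCP context can be taken compatible with the interface labels of $K$. Concretely, among all contexts of $q$ supplied by the FCP, I would pick one whose boundary labels match those of $K$, and check compatibility at the boundary. If this step resists, I would strengthen the choice of $K_q$ in the statement: pick $K_q$ among the refutable $q$-structures so that $C_q\odot K_q$ is defined. The rest of the argument is unchanged.
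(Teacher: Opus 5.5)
Your main argument is exactly the paper's proof: take the FCP context $C_q$, note $C_q\odot K_q\in L_*$, wait until this graph is presented, and read off $C_q\in\Con(D_n)$ with witness $K_q$. The paper simply asserts that $\leftarrow p_*(C_q\odot K_q)$ is refutable and never checks that $C_q\odot K_q$ is defined. You are right that this is the crux. However, none of your fallbacks closes the gap, and with the definitions as written it cannot be closed.

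The FCP supplies one context with no control over its interface. Because the context condition is imposed only where $\odot$ is defined, a graph with interface that composes with no refutable $q$-structure is already a context when none of its compositions lands in $L_*$. So:
\begin{enumerate}[label=\textup{(\roman*)}]
\item there is nothing to ``choose jointly'' on the $C_q$ side;
\item nothing guarantees that any context has boundary labels matching a refutable $K$, and matching vertex labels would not suffice anyway, since edges between interface vertices must also agree;
\item restricting to $q$ that occur in refutations does not help.
\end{enumerate}

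\textbf{Counterexample.} Let $\Gamma_*$ consist of:
\begin{enumerate}[label=\textup{(\arabic*)}]
\item $p_*(H_0)\leftarrow q(\cdot),r(\cdot)$, where $H_0$ is one $a$-labelled vertex $u$ carrying a rank-$1$ hyperedge with ports $(u)$ and a rank-$0$ hyperedge;
\item the fact $q(A)\leftarrow$, where $A$ is a single $a$-vertex serving as its own interface;
\item rank-$0$ clauses making $r$ generate $k\ge 0$ isolated $b$-vertices.
\end{enumerate}
Then $L_*=\{a+kb\mid k\ge 0\}$, where $a+kb$ denotes one $a$-vertex plus $k$ isolated $b$-vertices, and $q$ is used in every refutation.

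The FCP holds. A $b$--$b$ edge with one endpoint as interface is a context of $q$: no refutable $q$-structure has a $b$-labelled interface, and every composition with it contains an edge. A single $a$-vertex is a context of $r$, and $G_\emptyset$ is one of $p_*$.

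Yet no context of $q$ ever lies in $\Con(D_n)$:
\begin{itemize}
\item A rank-$1$ graph $C$ with $a$-labelled interface fails at $K=A$ if $C\odot A=C\notin L_*$. Otherwise it fails at $K$ equal to $A$ plus an isolated $b$, which is not refutable for $q$ but gives $C\odot K\in L_*$.
\item Every rank-$1$ member of $\Con(D_n)$ is edgeless, $\{a,b\}$-labelled, and has at most one $a$-vertex. If its interface is $b$-labelled, it fails at $K$ a single $b$-vertex when it contains an $a$, and at $K$ a $b$-vertex plus an isolated $a$ when it does not.
\end{itemize}

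So the statement is false under the paper's definitions. It needs an extra non-degeneracy hypothesis: that $q$ has a context $C_q$ and a refutable $K_q$ with $C_q\odot K_q$ defined, or a definition of context that builds this in. This is essentially your last fallback, and with it your first paragraph is a complete proof.
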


\begin{proof}
Since $(\Gamma_*,p_*)$ has the finite context property, there exists a graph with interface $C_q$ that is a context of $q$ with respect to $(\Gamma_*,p_*)$.
By assumption, $\leftarrow q(K_q)$ has a refutation in $\Gamma_*$.
Since $C_q$ is a context of $q$ with respect to $(\Gamma_*,p_*)$, it follows from Definition~\ref{def:context} that $\leftarrow p_*(C_q \odot K_q)$ also has a refutation in $\Gamma_*$.
Hence $C_q \odot K_q \in L_*$.

Since $G_1,G_2,\ldots$ is a positive presentation of $L_*$, there exists a positive integer $n_q$ such that $C_q \odot K_q \in D_n$ for every $n \ge n_q$.
By Definition~\ref{def:context-candidates}, this implies $C_q \in \Con(D_n)$ for every $n \ge n_q$.
\end{proof}

\begin{lemma}\label{lem:eventual-boundary-representative}
Assume that $L_* = L(\Gamma_*,p_*) \in \FICSLFCP_{\Delta}(m,s,t,w,d)$, and let $G_1,G_2,\ldots$ be a positive presentation of $L_*$.
For each $n\ge 1$, let $D_n=\{G_1,\ldots,G_n\}$.
Let $q$ be a unary predicate symbol of $\Gamma_*$ such that there exists a graph with interface $K_q$ for which $\leftarrow q(K_q)$ has a refutation in $\Gamma_*$.
Then there exist a context $C_q$ of $q$ with respect to $(\Gamma_*,p_*)$, an ordered boundary representation $\rho_q$, and a positive integer $n_q$ such that, for every $n\ge n_q$, $\rho_q\in \mathrm{BRep}_w(D_n)$ and $\mathrm{can}(\rho_q)=C_q$.
\end{lemma}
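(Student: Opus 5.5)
We would derive the statement from Lemma~\ref{lem:eventual-context}, checking that the context it provides is reproduced exactly by Algorithm~\ref{alg:construct-obrep}. Fix $q$ and $K_q$ as in the hypothesis. Let $C_q$ be a context of $q$ and $n_q$ the index given by Lemma~\ref{lem:eventual-context}. Then the graph $X:=C_q\odot K_q$ lies in $L_*$, and $X\in D_n$ for all $n\ge n_q$. Since $D_n\subseteq D_{n+1}$ and $\mathrm{BRep}_w(D)$ is monotone in $D$, it suffices to find a single $\rho_q$ with $\rho_q\in\mathrm{BRep}_w(\{X\})$ and $\mathrm{can}(\rho_q)=C_q$. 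Throughout, $\mathrm{can}(\rho_q)=C_q$ is read up to isomorphism preserving labels and interfaces, as for all graphs with interfaces in the paper.

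\emph{Step 1 (rank bound).} We will show $\irank(q)\le w$. If $q=p_*$ this is trivial, since the rank is $0$. Otherwise $q$ occurs as a body atom of some clause, because derivability of $q$-atoms matters only through such occurrences. By Definition~\ref{def:fixed-interface-clause}(3), that body atom is a star pattern whose interface length is $\rank(x)$ for a variable label $x$ of $\Gamma_*$. By Definition~\ref{def:bounded-fics}(2), $\rank(x)\le w$. Hence $r:=|\iota_{C_q}|\le w$, and the outer loop of Algorithm~\ref{alg:construct-obrep} reaches rank $r$.

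\emph{Step 2 (choice of $\rho_q$).} Write $\iota_{C_q}=(u_1,\ldots,u_r)$. Let $b_i$ be the vertex of $X$ obtained by gluing $u_i$ to the $i$-th interface vertex of $K_q$, and set $\beta=(b_1,\ldots,b_r)$ and $B=\{b_1,\ldots,b_r\}$. Let $E_B$ be the set of edges of the copy of $C_q$ in $X$ that are incident with $B$. Then $E_B\subseteq \mathrm{Inc}_X(B)$, so the specification $(\beta,E_B)$ is valid and is enumerated by the algorithm. Put $\rho_q=(\beta,E_B)$.

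\emph{Step 3 (identifying $K_X(\beta,E_B)$ with $C_q$).} In $X-B$ there is no edge between a non-interface vertex of the $C_q$-copy and a non-interface vertex of the $K_q$-copy, because the two copies meet only in $B$. Hence every vertex reachable in $X-B$ from $S_X(\beta,E_B)$ lies in the $C_q$-copy. Conversely, every vertex of $C_q-B$ whose component contains a neighbour of $B$ is reachable. Edges of the $C_q$-copy either meet $B$, and then lie in $E_B$, or join two vertices of $I_X(\beta,E_B)$. Together with Lemma~\ref{lem:boundary-specification-unique}, this shows that $K_X(\beta,E_B)$ coincides with $C_q$ minus its components that avoid $B$ entirely.

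\emph{Main obstacle.} The difficulty is precisely this residue: a context may contain connected components disjoint from its interface, and these are invisible to ordered boundary representations. A further subtlety is that an edge between two boundary vertices could come from $K_q$ rather than $C_q$. We would first handle the detached components by a normalization. Among the contexts of $q$ we choose one, $C_q$, in which every component meets the interface. Our first attempt at justifying this is to re-apply Lemma~\ref{lem:eventual-context} with $C_q$ replaced by such a normalized context. If the paper's semantics do not directly allow dropping detached components while preserving the context property, we would instead absorb them by enlarging $E_B$ minimally: we would show that detached components force a fixed isomorphism type in every positive example containing the context, and fall back on the paper's convention that contexts are exactly the fragments produced by Definition~\ref{def:valid-ordered-boundary-specification}, as stated in the remark after it. The boundary-edge ambiguity is resolved by taking $E_B$ to be exactly the edges contributed by $C_q$, which the enumeration over all subsets of $\mathrm{Inc}_X(B)$ permits.
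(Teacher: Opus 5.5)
Your Steps~2--3 use the same construction as the paper: the glued interface $\beta$, and $E_B$ the edges of the copy of $C_q$ meeting $B$. Your Step~3 is more accurate than the paper's proof, which simply asserts that every non-interface vertex of the copy of $C_q$ is reachable in $X-B$ from the endpoints of $E_B$.

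\textbf{The gap.} None of your proposed repairs for detached components works.
\begin{itemize}
\item Enlarging $E_B$ cannot help. For every $E_B\subseteq\mathrm{Inc}_X(B)$, the set $I_X(\beta,E_B)$ is grown from neighbours of $B$ along edges that are retained in $E_K$. So every component of $\mathrm{can}(\rho)$ meets the interface.
\item Dropping the detached components of a context does not preserve the context property. It changes $C\odot K$ by a disjoint union, which can change membership in $L_*$. The existence of a ``normalized'' context is therefore exactly what remains unproved.
\item The remark after Definition~\ref{def:valid-ordered-boundary-specification} only describes which fragments the learner enumerates. It imposes nothing on the contexts supplied by the FCP, so appealing to it assumes what is to be shown.
\end{itemize}

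\textbf{A counterexample.} The statement in fact fails without an extra hypothesis.
\begin{itemize}
\item Let $p_*$ have one clause. Its head pattern is a triangle $T$ plus a vertex $v$ carrying a rank-$1$ hyperedge labelled $x$, and its body is $r(\cdot)$ on the star pattern of $x$.
\item Let the rank-$1$ predicate $r$ derive exactly the paths whose interface vertex is an endpoint. Use a fact for a single vertex, and a clause adding an edge $\{a,b\}$ with a rank-$1$ hyperedge at $b$.
\item Then $L_*=\{T\sqcup P_k\mid k\ge 1\}$, and the FCP holds. The empty graph is a context of $p_*$. The triangle $T$ together with a pendant edge $\{a,c\}$ and interface $(a)$ is a context of $r$.
\item No connected graph $C$ is a context of $r$. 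Take $K$ a single interfaced vertex: $r(K)$ is refutable, while $C\odot K=C$ is connected and hence not in $L_*$.
\item Every rank-$1$ $\mathrm{can}(\rho)$ is connected, so no $\rho_q$ as in the statement exists for $q=r$.
\end{itemize}
You therefore need an additional assumption, for instance that $q$ has a context all of whose components meet its interface. This holds, for example, when all graphs of $L_*$ are connected and $\irank(q)\ge 1$.

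\textbf{A smaller point.} Your Step~1 does not prove $\irank(q)\le w$. A predicate occurring only in clause heads has unconstrained interface rank under Definition~\ref{def:bounded-fics}. You must either read $w$ as bounding all interface ranks, or restrict to $p_*$ and predicates occurring in bodies. The paper simply asserts this bound.
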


\begin{proof}
By Lemma~\ref{lem:eventual-context}, there exist a context $C_q$ of $q$ with respect to $(\Gamma_*,p_*)$ and a positive integer $n_q$ such that $C_q\in \Con(D_n)$ for every $n\ge n_q$.
Fix $n\ge n_q$.
Since $C_q\in \Con(D_n)$, there exists a graph with interface $H_q$ such that $C_q\odot H_q\in D_n$.
Let $X=C_q\odot H_q$.
By the canonical copy of $C_q$ in $X$, we mean the image in $X$ of the fresh copy of $C_q$ used in the construction of the composition.
Write the interface of $C_q$ as $\iota_{C_q}=(b_1,\ldots,b_r)$, and use the same symbols for the corresponding interface vertices of its canonical copy in $X$.

Let $B=\{b_1,\ldots,b_r\}$.
Choose $E_B$ to be the set of all edges of the canonical copy of $C_q$ in $X$ that are incident with at least one vertex of $B$.
Then every edge in $E_B$ is an edge of $X$ incident with a vertex of $B$, and hence $(\beta,E_B)$ is a valid ordered boundary specification in $X$, where $\beta=(b_1,\ldots,b_r)$.
Since the composition $C_q\odot H_q$ is formed from disjoint copies by identifying only the corresponding interface vertices, deleting the vertices in $B$ separates the non-interface part of the copy of $C_q$ from the non-interface part of the copy of $H_q$.
Therefore the vertices of the copy of $C_q$ outside the interface are exactly the vertices reachable in $X-B$ from the non-interface endpoints of the edges in $E_B$.
Consequently, the subgraph with interface $K_X(\beta,E_B)$ constructed by Definition~\ref{def:valid-ordered-boundary-specification} is precisely the canonical copy of $C_q$ in $X$.

Let $\rho_q=(\beta,E_B)$ be this ordered boundary representation.
Since $C_q$ has interface rank at most $w$, Algorithm~\ref{alg:construct-obrep} enumerates $\rho_q$ when it is applied to $D_n$.
Hence $\rho_q\in \mathrm{BRep}_w(D_n)$.
Moreover, by the preceding paragraph and by the construction of Algorithm~\ref{alg:construct-obrep}, the canonical representative associated with $\rho_q$ is precisely the canonical copy of $C_q$.
Therefore $\mathrm{can}(\rho_q)=C_q$.
This proves the lemma.
\end{proof}

\begin{definition}[Update stage]\label{def:update-stage}
Let $(\Gamma_n,p_{\rho_\emptyset})$ be the $n$-th hypothesis output by Algorithm~\ref{alg:learning}.
We say that $n$ is an \emph{update stage} if Algorithm~\ref{alg:learning} executes the update of the predicate basis at stage $n$, that is, if it constructs $\mathcal F=\mathrm{BRep}_w(D_n)$ by Algorithm~\ref{alg:construct-obrep}.
\end{definition}

\begin{lemma}\label{lem:eventual-predicate-representatives}
Assume that $L_* = L(\Gamma_*,p_*) \in \FICSLFCP_{\Delta}(m,s,t,w,d)$, and let $G_1,G_2,\ldots$ be a positive presentation of $L_*$.
For each $n\ge 1$, let $D_n=\{G_1,\ldots,G_n\}$.
If $D_n$ is characteristic for contexts with respect to $(\Gamma_*,p_*)$ and $n$ is an update stage, then for every unary predicate symbol $q$ of $\Gamma_*$ such that there exists a graph with interface $K_q$ for which $\leftarrow q(K_q)$ has a refutation in $\Gamma_*$, the hypothesis $(\Gamma_n,p_{\rho_\emptyset})$ contains a predicate symbol $p_{\rho_q}$ such that $\mathrm{can}(\rho_q)=C_q$, where $C_q$ is a context of $q$ with respect to $(\Gamma_*,p_*)$.
\end{lemma}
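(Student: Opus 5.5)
The plan is to combine the definition of a characteristic sample with the extraction argument of Lemma~\ref{lem:eventual-boundary-representative}, and then read off the predicate symbols introduced by Algorithm~\ref{alg:construct-gamma}. Fix a unary predicate symbol $q$ of $\Gamma_*$ with some $K_q$ such that $\leftarrow q(K_q)$ has a refutation in $\Gamma_*$. Since $D_n$ is characteristic for contexts (Definition~\ref{def:characteristic-sample}), $\Con(D_n)$ contains a context $C_q$ of $q$ with respect to $(\Gamma_*,p_*)$. By Definition~\ref{def:context-candidates}, there is a graph with interface $H_q$ with $X:=C_q\odot H_q\in D_n$.

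Next I repeat the cutting argument from the proof of Lemma~\ref{lem:eventual-boundary-representative}, but for the fixed $n$. Let $\beta=(b_1,\ldots,b_r)$ be the images in $X$ of the interface vertices of $C_q$. Let $E_B$ be the edges of the canonical copy of $C_q$ that are incident with $B$. Then $(\beta,E_B)$ is valid, and $K_X(\beta,E_B)$ is the canonical copy of $C_q$. The rank $r$ is $\irank(q)$, which is bounded by $w$ because the body atoms are star patterns whose interfaces equal the ports of variables of rank at most $w$. The start predicate has rank $0$; if $q$ occurs only as a head, I still appeal to the bounded-shape assumptions as implicitly bounding $\irank(q)$ by $w$. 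Consequently Algorithm~\ref{alg:construct-obrep}, run on $D_n$, enumerates $\rho_q:=(\beta,E_B)$ and sets $\mathrm{can}(\rho_q)=C_q$. So $\rho_q\in\mathrm{BRep}_w(D_n)$.

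Since $n$ is an update stage (Definition~\ref{def:update-stage}), Algorithm~\ref{alg:learning} sets $\mathcal F=\mathrm{BRep}_w(D_n)$ at stage $n$. It then sets $\mathcal R=\mathrm{BRep}_w(D_n)$ and computes $\Gamma_n=\Gamma(\mathcal F,\mathcal R)$. The first loop of Algorithm~\ref{alg:construct-gamma} introduces $p_\rho$ for every $\rho\in\mathcal F\cup\{\rho_\emptyset\}$, in particular $p_{\rho_q}$ with $\irank(p_{\rho_q})=\mathrm{rank}(\rho_q)=\irank(q)$. Hence $p_{\rho_q}$ is a predicate symbol of the hypothesis.

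The main obstacle is the second step. It needs the justification that $\irank(q)\le w$, so that $\rho_q$ falls within the enumerated ranks. It also needs the careful point that $\mathrm{can}(\rho_q)$ equals $C_q$ up to interface-preserving isomorphism, which is the sense of ``$=$'' used throughout. If $C_q$ had isolated non-interface components not reached from $E_B$, the reachability construction would miss them. I would handle this by choosing the context $C_q$ (or replacing it by an equivalent one) as in Lemma~\ref{lem:eventual-boundary-representative}, accepting the same convention that lemma uses.
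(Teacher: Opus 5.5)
Your argument is the paper's argument. The characteristic sample gives $C_q\in\Con(D_n)$; cutting $X=C_q\odot H_q$ along the interface gives $\rho_q\in\BRep_w(D_n)$ with $\mathrm{can}(\rho_q)=C_q$; the update stage puts $\rho_q$ into $\mathcal F$; and the first loop of Algorithm~\ref{alg:construct-gamma} introduces $p_{\rho_q}$. Re-running the cut at the given $n$ is actually more accurate than the paper's appeal to Lemma~\ref{lem:eventual-boundary-representative}, whose statement only supplies \emph{some} context and \emph{some} $n_q$.

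However, the two obstacles you flag are genuine gaps, they are also present in the paper's proof, and your patches do not close them. The first is the reachability step. The set $I_X(\beta,E_B)$ contains only vertices reachable from $S_X(\beta,E_B)$ in $X-B$. So every component of $C_q-B$ with no edge of $C_q$ to $B$ is lost, and for rank $0$ all of $C_q$ is lost. You also cannot switch to an ``equivalent'' context. The hypothesis only guarantees \emph{some} context in $\Con(D_n)$, and deleting such components need not preserve the context property.

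Concretely, let $\Gamma_*$ consist of the clause $p_*(H_0)\leftarrow q(S_x)$ together with the facts $q(K_1)\leftarrow$ and $q(K_2)\leftarrow$. Here $H_0$ has an isolated $a$-labeled vertex, a $b$-labeled vertex $v$, and one hyperedge with a rank-$1$ variable label $x$ and port list $(v)$. $K_1$ is a single $b$-vertex, and $K_2$ is an edge joining a $b$-vertex to a $c$-vertex, each with its $b$-vertex as interface. This system has the finite context property. A short case check shows that every context of $q$ lying in some $\Con(D)$ with $D\subseteq L_*$ is, up to isomorphism, an interface $b$-vertex plus an isolated $a$-vertex. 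Yet no rank-$1$ $\mathrm{can}(\rho)$ built from graphs of $L_*$ is a context of $q$ at all. So at an update stage with $D_n$ characteristic, the conclusion fails.

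The rank issue is similar. The bound $\irank(q)\le w$ follows from Definition~\ref{def:fixed-interface-clause} only when $q$ occurs in some body atom, since then $\irank(q)=\rank(x)\le w$; and $\irank(p_*)=0$. A predicate occurring only in clause heads has its rank bounded by the finite family $\mathcal H_d$, not by $w$, so there is no implicit bound to appeal to.

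What is missing is therefore extra hypotheses, not conventions:
\begin{itemize}
\item restrict $q$ to $p_*$ and predicates occurring in body atoms, which are the only ones used in refutations of goals $\leftarrow p_*(G)$;
\item require that the context supplied by the characteristic sample have every non-interface vertex joined to its interface by a path inside $C_q$. This holds automatically, for example, when the graphs of $L_*$ are connected and $\irank(q)\ge 1$.
\end{itemize}
Under these assumptions your cutting argument goes through verbatim and gives $\mathrm{can}(\rho_q)\cong C_q$.
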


\begin{proof}
Since $D_n$ is characteristic for contexts, Definition~\ref{def:characteristic-sample} implies that, for every such unary predicate symbol $q$, the set $\Con(D_n)$ contains a context $C_q$ of $q$ with respect to $(\Gamma_*,p_*)$.
By the preceding lemma, there exists an ordered boundary representation $\rho_q\in \mathrm{BRep}_w(D_n)$ such that $\mathrm{can}(\rho_q)=C_q$.
Since $n$ is an update stage, Algorithm~\ref{alg:learning} sets $\mathcal F=\mathrm{BRep}_w(D_n)$ at stage $n$.
Hence $\rho_q\in\mathcal F$.
By Algorithm~\ref{alg:construct-gamma}, for each $\rho\in \mathcal F\cup\{\rho_\emptyset\}$, the construction introduces a predicate symbol $p_\rho$ with $\mathrm{irank}(p_\rho)=\mathrm{rank}(\rho)$.
Therefore $p_{\rho_q}$ occurs in $\Gamma_n$.
\end{proof}

\begin{definition}[Spurious clause candidate]\label{def:spurious-clause}
Let
\[
\gamma \;=\; p_\rho(H) \leftarrow
p_{\rho_1}(H_1),\ldots,p_{\rho_\ell}(H_\ell)
\]
be a fixed-interface clause candidate with $\ell\geq 1$.
For each $i=1,\ldots,\ell$, let $x_i$ be the unique variable label occurring in the star graph pattern $H_i$, and put $X(\gamma)=\{x_i\mid 1\leq i\leq \ell\}$.
We say that $\gamma$ is \emph{spurious with respect to $L_*$} if there exist graphs with interface $K_x \; (x\in X(\gamma))$ such that $\mathrm{can}(\rho_i)\odot K_{x_i}\in L_*\: (i=1,\ldots,\ell)$, the realization $\Real\bigl(H,\{x:=K_x\}_{x\in X(\gamma)}\bigr)$ is defined, and $\mathrm{can}(\rho)\odot\Real\bigl(H,\{x:=K_x\}_{x\in X(\gamma)}\bigr)\notin L_*$.
If no such family $(K_x)_{x\in X(\gamma)}$ exists, then $\gamma$ is called \emph{non-spurious with respect to $L_*$}.
\end{definition}

\begin{lemma}\label{lem:one-step-preservation-by-nonspurious-clause}
Let $\gamma \;=\; p_\rho(H) \leftarrow p_{\rho_1}(H_1),\ldots,p_{\rho_\ell}(H_\ell)$ be a fixed-interface clause candidate with $\ell\geq 1$.
For each $i=1,\ldots,\ell$, let $x_i$ be the unique variable label occurring in the star graph pattern $H_i$, and put $X(\gamma)=\{x_i\mid 1\leq i\leq \ell\}$.
Assume that $\gamma$ is non-spurious with respect to $L_*$.
Let $(K_x)_{x\in X(\gamma)}$ be a family of graphs with interfaces such that $\mathrm{can}(\rho_i)\odot K_{x_i}\in L_*\: (i=1,\ldots,\ell)$.
Assume that $\Real\bigl(H,\{x:=K_x\}_{x\in X(\gamma)}\bigr)$ is defined.
Then $\mathrm{can}(\rho)\odot\Real\bigl(H,\{x:=K_x\}_{x\in X(\gamma)}\bigr)\in L_*$.
\end{lemma}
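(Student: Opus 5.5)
This follows directly from Definition~\ref{def:spurious-clause} by contraposition, so the plan is to unfold that definition rather than to invoke any derivation-level machinery.

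We suppose, for contradiction, that the conclusion fails. Then
\[
\mathrm{can}(\rho)\odot\Real\bigl(H,\{x:=K_x\}_{x\in X(\gamma)}\bigr)\notin L_*
\]
for the given family $(K_x)_{x\in X(\gamma)}$. We then check, one by one, the three requirements in the definition of a spurious clause candidate.

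\begin{enumerate}[label=\textup{(\arabic*)}]
\item For each $i=1,\ldots,\ell$, the membership $\mathrm{can}(\rho_i)\odot K_{x_i}\in L_*$ holds by hypothesis.
\item The realization $\Real\bigl(H,\{x:=K_x\}_{x\in X(\gamma)}\bigr)$ is defined by hypothesis.
\item The composition with $\mathrm{can}(\rho)$ lies outside $L_*$ by the contradiction assumption.
\end{enumerate}

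Hence the family $(K_x)_{x\in X(\gamma)}$ witnesses that $\gamma$ is spurious with respect to $L_*$. This contradicts the assumption that $\gamma$ is non-spurious.

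The only point needing care, and the one I expect to be the main (small) obstacle, is the interpretation of ``$\notin L_*$'' when the composition $\mathrm{can}(\rho)\odot\Real(\cdots)$ is undefined. We read the conclusion of the lemma as asserting that the composition is defined and belongs to $L_*$. The definition of spurious is meant to cover exactly those cases where the composition is defined. If the composition could be undefined, the lemma would have to be read conditionally, namely that the composition belongs to $L_*$ whenever it is defined. I would state this convention explicitly and note that Algorithm~\ref{alg:construct-gamma} only queries defined compositions, so nothing further is lost.

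It is also worth checking that the same variable label $x$ may occur in several $H_i$. This causes no difficulty, because the family is indexed by $X(\gamma)$ rather than by $i$, exactly as in Definition~\ref{def:spurious-clause}. The substitution is therefore well-defined, and the witness matches the definition verbatim.
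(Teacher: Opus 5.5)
Your proposal is correct and is essentially the paper's own proof: assume the composition lies outside $L_*$, observe that the given family $(K_x)_{x\in X(\gamma)}$ then witnesses spuriousness in the sense of Definition~\ref{def:spurious-clause}, and conclude by contradiction. Your remark about how to read ``$\notin L_*$'' when the composition is undefined is a sensible addition that the paper does not spell out, but it does not change the argument.
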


\begin{proof}
Assume, toward a contradiction, that
\[
\mathrm{can}(\rho)\odot
\Real\bigl(H,\{x:=K_x\}_{x\in X(\gamma)}\bigr)
\notin L_*.
\]
Then the family $(K_x)_{x\in X(\gamma)}$ witnesses that $\gamma$ is spurious with respect to $L_*$, because $\mathrm{can}(\rho_i)\odot K_{x_i}\in L_*\: (i=1,\ldots,\ell)$, while $\mathrm{can}(\rho)\odot\Real\bigl(H,\{x:=K_x\}_{x\in X(\gamma)}\bigr)\notin L_*$.
This contradicts the assumption that $\gamma$ is non-spurious.
Therefore
\[
\mathrm{can}(\rho)\odot
\Real\bigl(H,\{x:=K_x\}_{x\in X(\gamma)}\bigr)
\in L_*.
\]
\end{proof}

\begin{lemma}\label{lem:eventual-target-clause-reconstruction}
Assume that $L_* = L(\Gamma_*,p_*) \in \FICSLFCP_{\Delta}(m,s,t,w,d)$, and let $G_1,G_2,\ldots$ be a positive presentation of $L_*$. For each $n\ge 1$, let $D_n=\{G_1,\ldots,G_n\}$.
Then there exists an integer $n_1$ such that, for every update stage $n\ge n_1$, the hypothesis $(\Gamma_n,p_{\rho_\emptyset})$ contains predicate symbols corresponding to all target predicate contexts, and every target clause of $\Gamma_*$ is  reconstructed in $\Gamma_n$ as a clause over those predicate representatives.
\end{lemma}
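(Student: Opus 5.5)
We prove the lemma in three steps.

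\textbf{Step 1: choosing $n_1$.} Let $Q$ be the finite set of unary predicate symbols $q$ of $\Gamma_*$ that are productive, that is, for which some $\leftarrow q(K_q)$ has a refutation in $\Gamma_*$. Non-productive predicates never occur in a refutation, so clauses whose heads or bodies use them can be ignored. For each $q\in Q$, Lemma~\ref{lem:eventual-boundary-representative} gives a context $C_q$, a representation $\rho_q$ and an index $n_q$. For $p_*$ itself we take $C_{p_*}=G_\emptyset$ and $\rho_{p_*}=\rho_\emptyset$; this is a context, since $G_\emptyset\odot K=K$ for $\irank(p_*)=0$. Put $n_1=\max_{q\in Q}n_q$, which is finite because $\Gamma_*$ has at most $m$ clauses and hence finitely many predicates. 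For $n\ge n_1$ every $C_q$ lies in $\Con(D_n)$, so $D_n$ is characteristic for contexts. Lemma~\ref{lem:eventual-predicate-representatives} then shows that at every update stage $n\ge n_1$ we have $\rho_q\in\mathcal F=\mathrm{BRep}_w(D_n)$, and $\Gamma_n$ contains $p_{\rho_q}$ with $\mathrm{can}(\rho_q)=C_q$ and $\irank(p_{\rho_q})=\irank(q)$.

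\textbf{Step 2: translating a target clause.} Take a target clause $\gamma_*=q(H)\leftarrow q_1(H_1),\ldots,q_\ell(H_\ell)$ all of whose predicates are productive. Define $\widehat\gamma=p_{\rho_q}(H)\leftarrow p_{\rho_{q_1}}(H_1),\ldots,p_{\rho_{q_\ell}}(H_\ell)$, keeping the head pattern, the star patterns and the variable labels unchanged. The bounds on $s$, $t$, $w$, $d$ and the condition that the head frame lies in $\mathcal H_d$ are properties of the patterns only, so they carry over. The interface ranks match by Step 1. Hence $\widehat\gamma$ is a bounded fixed-interface clause candidate over $\mathcal F\cup\{\rho_\emptyset\}$ and belongs to $\mathcal C$ in Algorithm~\ref{alg:construct-gamma}.

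\textbf{Step 3: $\widehat\gamma$ passes the admission test.} This is the main step.

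\emph{Facts ($\ell=0$).} The head $H$ is ground and $\leftarrow q(H)$ is refutable, so by the context property $C_q\odot H\in L_*$ whenever the composition is defined. The oracle therefore answers yes and the fact is added. If $C_q\odot H$ is undefined because of a label clash, the test fails, so one has to argue that this cannot happen. The composition in Definition~\ref{def:context} ranges only over defined compositions, and I expect the cleanest fix is to note that label compatibility at interface vertices is forced, since both sides arise from genuine refutations. This may need a small side remark.

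\emph{Non-facts ($\ell\ge1$).} Take any tuple $(\sigma_x)$ with $\mathrm{Obs}(\rho_{q_i},\sigma_{x_i})$ true for all $i$, that is, $C_{q_i}\odot K_{x_i}\in L_*$. By the context property of $C_{q_i}$, each $\leftarrow q_i(K_{x_i})$ has a refutation in $\Gamma_*$. If $K=\Real(H,\{x:=K_x\})$ is defined, then $H\theta\cong K$ for $\theta=\{x:=K_x\}$, so one derivation step with $\gamma_*$ reduces $\leftarrow q(K)$ to the goals $\leftarrow q_i(K_{x_i})$. These are refutable, so $\leftarrow q(K)$ is refutable. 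By the context property of $C_q$, $C_q\odot K\in L_*$, and the oracle never answers no. Hence $\widehat\gamma$ stays admissible and is added to $\Gamma_n$.

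The subtle point is that repeated variable labels force a single $K_x$ for all occurrences. This is consistent with the substitution semantics, because $\theta$ binds each variable once. Put differently, $\widehat\gamma$ is non-spurious in the sense of Definition~\ref{def:spurious-clause}, and Lemma~\ref{lem:one-step-preservation-by-nonspurious-clause} covers this case.

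Since Step 3 holds for every $\mathcal R$, the lemma follows for all update stages $n\ge n_1$.
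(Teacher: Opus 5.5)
Your route is the paper's: obtain eventual representatives $\rho_q$ with $\mathrm{can}(\rho_q)=C_q$, rename predicates in each target clause, and show admission through the context property. For facts this is membership of $C_q\odot H$. For non-facts the chain is: positive body observations give body refutations, these give a refutation of $\leftarrow q(K)$, and hence $C_q\odot K\in L_*$.

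Two of your additions improve on the paper's write-up. First, you map $p_*$ explicitly to $\rho_\emptyset$. The paper leaves this implicit, yet Theorem~\ref{thm:correctness} needs the simulated refutation to end in $p_{\rho_\emptyset}$. Second, you make the restriction to productive predicates explicit. On that point, your sentence ``so $D_n$ is characteristic for contexts'' is, like the corresponding sentence in the paper, literally false when $\Gamma_*$ has a non-productive predicate. If $G\in\Con(D_n)$, then $G\odot H\in L_*$ for some $H$, so $G$ cannot be a context of a predicate that derives nothing. Bypass Lemma~\ref{lem:eventual-predicate-representatives} and conclude $\rho_q\in\mathcal F=\BRep_w(D_n)$ directly from Lemma~\ref{lem:eventual-boundary-representative}.

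The step you leave open in the fact case is a genuine gap, and your suggested remedy does not work. Definition~\ref{def:context} constrains only those $K$ for which $C_q\odot K$ is defined. So nothing forces the interface labels of a fact head of $q$ to agree with those of the particular context $C_q$.

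Concretely, let $q$ have rank $1$ and the facts $q(K_a)\leftarrow$ and $q(K_b)\leftarrow$. Here $K_c$ is a single $c$-labelled vertex that is its own interface. Let $p_*$ have the clauses $p_*(P_c)\leftarrow q(S_c)$ for $c\in\{a,b\}$, where $P_c$ is a single $c$-labelled vertex carrying one rank-$1$ hyperedge with empty interface, and $S_c$ is the matching star. Then:
\begin{itemize}
\item $L_*$ consists of the two one-vertex graphs.
\item The only rank-$1$ members of $\Con(D_n)$ are $K_a$ and $K_b$.
\item Each of them is a context of $q$, because the condition is vacuous for $K$ with the other interface label.
\item No rank-$1$ graph with interface is composable with both $K_a$ and $K_b$.
\end{itemize}
So whichever context represents $q$, one reconstructed fact fails the $\ell=0$ test, and the reconstruction claim itself fails.

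The same presupposition is hidden in your non-fact step, where $C_q\odot\Real(H,\theta)$ may be undefined; Algorithm~\ref{alg:construct-gamma} does not check definedness there. The paper's proof does not close this either; it simply asserts $C_q\odot H\in L_*$. What is missing is an extra hypothesis, not a side remark. For example: every productive target predicate admits a context composable with every graph it derives, and that context is the one chosen as representative.
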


\begin{proof}
By Lemma~\ref{lem:eventual-context}, for every unary predicate symbol $q$ of $\Gamma_*$ for which there exists a graph with interface $K_q$ such that $\leftarrow q(K_q)$ has a refutation in $\Gamma_*$, there exist a context $C_q$ of $q$ with respect to $(\Gamma_*,p_*)$ and an integer $n_q$ such that $C_q\in \Con(D_n)$ for all $n\ge n_q$.

Since $\Gamma_*$ has only finitely many predicate symbols, we may choose an integer $n_1$ such that, for all $n\ge n_1$, the sample $D_n$ is characteristic for contexts.
Let $n\ge n_1$ be an update stage.
Then, by Lemma~\ref{lem:eventual-predicate-representatives}, for every relevant target predicate symbol $q$ of $\Gamma_*$, the hypothesis $(\Gamma_n,p_{\rho_\emptyset})$ contains a predicate symbol $p_{\rho_q}$ such that $\mathrm{can}(\rho_q)=C_q$, where $C_q$ is the chosen context of $q$ with respect to $(\Gamma_*,p_*)$.
Now let $q(H)\leftarrow q_1(H_1),\ldots,q_\ell(H_\ell)$ be any clause of $\Gamma_*$, where $0\le \ell\le t$.
For each predicate symbol $q_i$, let $C_{q_i}$ be the chosen context of $q_i$, and let $\rho_{q_i}$ be its corresponding representative in $\mathcal F$, so that $\mathrm{can}(\rho_{q_i})=C_{q_i}$.
Similarly, let $\rho_q\in\mathcal F$ satisfy $\mathrm{can}(\rho_q)=C_q$.
Since $n$ is an update stage, all these representatives belong to the current predicate basis $\mathcal F$.
Moreover, since $\Gamma_*$ is $(m,s,t,w,d)$-bounded, the head frame of $H$ belongs to the fixed finite family $\mathcal H_d$.
Hence Algorithm~\ref{alg:construct-gamma} considers the bounded fixed-interface clause candidate
\[
p_{\rho_q}(H)\leftarrow
p_{\rho_{q_1}}(H_1),\ldots,p_{\rho_{q_\ell}}(H_\ell).
\]

It remains to show that this candidate is admitted into $\Gamma_n$.

First suppose that $\ell=0$.
Then the target clause is $q(H)\leftarrow$.
Hence $\leftarrow q(H)$ has a refutation in $\Gamma_*$. Since $C_q$ is a context of $q$ with respect to $(\Gamma_*,p_*)$, we have $C_q\odot H\in L_*$.
Equivalently, $\mathrm{can}(\rho_q)\odot H\in L_*$.
Therefore the membership query in the $\ell=0$ case of Algorithm~\ref{alg:construct-gamma} returns \textbf{true}, and the candidate clause $p_{\rho_q}(H)\leftarrow$ is added to $\Gamma_n$.

Next suppose that $\ell\ge 1$. For each $i=1,\ldots,\ell$, let $x_i$ be the unique variable label occurring in the star graph pattern $H_i$, and put $X=\{x_i\mid 1\le i\le \ell\}$.
To prove that the candidate is admitted, it is enough to verify the non-fact admission condition of Algorithm~\ref{alg:construct-gamma}.
Thus we consider an arbitrary family $(\sigma_x)_{x\in X}\in\mathcal R^X$ for which all body observations are positive.
If no such family exists, the admission condition is satisfied vacuously.
Assume therefore that $\mathrm{Obs}_{\mathcal F,\mathcal R}(\rho_{q_i},\sigma_{x_i})=\mathrm{true}$ for all $i=1,\ldots,\ell$.
By the definition of the observation table, this means that $\mathrm{can}(\rho_{q_i})\odot\mathrm{can}(\sigma_{x_i})\in L_*$ for all $i=1,\ldots,\ell$.
Since $\mathrm{can}(\rho_{q_i})=C_{q_i}$, we have $C_{q_i}\odot\mathrm{can}(\sigma_{x_i})\in L_*$ for all $i=1,\ldots,\ell$.
Because $C_{q_i}$ is a context of $q_i$ with respect to $(\Gamma_*,p_*)$, it follows that $\leftarrow q_i(\mathrm{can}(\sigma_{x_i}))$ has a refutation in $\Gamma_*$ for every $i=1,\ldots,\ell$.
Now let $\theta=\{x:=\mathrm{can}(\sigma_x)\}_{x\in X}$.
This is a single substitution on variable labels. Hence if the same variable label occurs in several body atoms or several hyperedges of the head pattern, all such occurrences are replaced by the same graph with interface.
Assume that $K=\Real(H,\theta)$ is defined.
Applying the target clause $q(H)\leftarrow q_1(H_1),\ldots,q_\ell(H_\ell)$ in $\Gamma_*$ with the substitution $\theta$, and using the refutations of the body goals obtained above, we obtain a refutation of $\leftarrow q(K)$ in $\Gamma_*$.
Since $C_q$ is a context of $q$ with respect to $(\Gamma_*,p_*)$, this implies $C_q\odot K\in L_*$.
Equivalently, $\mathrm{can}(\rho_q)\odot K\in L_*$.
Thus every membership query required by the non-fact admission test in Algorithm~\ref{alg:construct-gamma} is answered positively.
Therefore the candidate clause $p_{\rho_q}(H)\leftarrow p_{\rho_{q_1}}(H_1),\ldots,p_{\rho_{q_\ell}}(H_\ell)$ is admitted into $\Gamma_n$.

Since the target clause was arbitrary, every target clause of $\Gamma_*$ is reconstructed in $\Gamma_n$ over the predicate representatives given by the target contexts.
\end{proof}

\begin{lemma}\label{lem:refutation-simulation}
Assume the same setting as in Lemma~\ref{lem:eventual-target-clause-reconstruction}.
Let $n$ be an update stage at which every target clause is reconstructed in $\Gamma_n$.
If $\leftarrow q(K)$ has a refutation in $\Gamma_*$, then $\leftarrow p_{\rho_q}(K)$ has a refutation in $\Gamma_n$, where $p_{\rho_q}$ is the predicate representative of $q$ used in $\Gamma_n$.
\end{lemma}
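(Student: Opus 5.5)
We argue by induction on the length of a refutation of $\leftarrow q(K)$ in $\Gamma_*$; an induction on the number of clause applications, or on the height of the associated derivation tree, is equally convenient. The first step is to normalize refutations. A refutation of a single-atom goal is a sequence of derivation steps, and since each step rewrites exactly one atom, the refutation can be reorganized as a tree. The first step applies some clause $\gamma=q(H)\leftarrow q_1(H_1),\ldots,q_\ell(H_\ell)$ of $\Gamma_*$ with a substitution $\theta$ such that $H\theta\cong K$. The remaining steps then split into independent refutations of the goals $\leftarrow q_i(H_i\theta)$, each strictly shorter than the original. Because every $H_i$ is a star graph pattern whose unique hyperedge carries label $x_i$ with interface equal to its port list, $H_i\theta$ is, up to isomorphism, the graph with interface $K_{x_i}$, where $\theta=\{x:=K_x\}$. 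Thus each subrefutation is a refutation of $\leftarrow q_i(K_{x_i})$.

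For the base case $\ell=0$, $\gamma$ is a fact $q(H)\leftarrow$ with $H\cong K$. By hypothesis $n$ is an update stage at which every target clause is reconstructed (Lemma~\ref{lem:eventual-target-clause-reconstruction}), so $p_{\rho_q}(H)\leftarrow$ belongs to $\Gamma_n$. Applying it to $\leftarrow p_{\rho_q}(K)$ with the empty substitution yields the empty goal, which is a refutation.

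For the inductive step $\ell\ge 1$, the induction hypothesis gives refutations of $\leftarrow p_{\rho_{q_i}}(K_{x_i})$ in $\Gamma_n$ for every $i$. This requires that each $q_i$ has a representative $\rho_{q_i}$. That holds because $q_i(K_{x_i})$ is refutable, so $q_i$ is one of the predicates covered by Lemma~\ref{lem:eventual-predicate-representatives}; the same lemma applies to $q$ itself. The reconstructed clause $p_{\rho_q}(H)\leftarrow p_{\rho_{q_1}}(H_1),\ldots,p_{\rho_{q_\ell}}(H_\ell)$ lies in $\Gamma_n$ and has the same head pattern $H$ as $\gamma$. The same substitution $\theta$ therefore satisfies $H\theta\cong K$, and one derivation step transforms $\leftarrow p_{\rho_q}(K)$ into $\leftarrow p_{\rho_{q_1}}(H_1\theta),\ldots,p_{\rho_{q_\ell}}(H_\ell\theta)$. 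Concatenating the refutations of the individual body atoms, each performed in place within the goal list since derivation steps act on a single atom, yields a refutation in $\Gamma_n$.

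The main obstacle is the decomposition step. We must justify that a linear refutation of a goal with several atoms can be split into independent refutations of the individual atoms; this holds because the atoms share no state, as substitutions are local to each step. We must also check that the body atoms obtained with the reconstructed clause coincide, up to isomorphism, with those arising in the target refutation, which follows because the head and body patterns are copied verbatim and only predicate symbols are renamed via $q\mapsto p_{\rho_q}$. Finally, we must confirm that the representative of each predicate is the same one fixed in Lemma~\ref{lem:eventual-target-clause-reconstruction}, so that the renaming is a well-defined function on predicates. With these points in place, the induction goes through directly.
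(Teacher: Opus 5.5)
Your proposal is correct and follows essentially the same route as the paper: induction on the refutation tree in $\Gamma_*$, applying the reconstructed counterpart of the first target clause with the same head pattern and the same substitution, then invoking the induction hypothesis on the body sub-refutations. You also spell out two points the paper leaves implicit: a linear refutation splits into independent per-atom refutations, because derivation steps leave the other goal atoms untouched; and each body predicate $q_i$ has an atom refutable in $\Gamma_*$ and therefore has a representative.
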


\begin{proof}
We prove the claim by induction on the refutation tree in $\Gamma_*$.

Consider the first target clause applied in the refutation.
By Lemma~\ref{lem:eventual-target-clause-reconstruction}, this clause has a reconstructed counterpart in $\Gamma_n$, obtained by replacing each target predicate symbol $q'$ with its corresponding predicate representative $p_{\rho_{q'}}$.
The graph patterns in the head and body are unchanged.
Thus the same substitution on variable labels used in the target derivation step can be used in the reconstructed clause.
In particular, if the same variable label occurs more than once in the clause, all its occurrences are replaced by the same graph with interface in both derivations.
Hence the head instance and the body instances produced in the reconstructed clause correspond exactly to those produced in the target clause, with predicate symbols replaced by their representatives.

The remaining body goals correspond to smaller sub-refutations in the original refutation tree. By the induction hypothesis, these body goals have refutations in $\Gamma_n$ after replacing target predicate symbols by their corresponding predicate representatives.
Therefore the whole refutation of $\leftarrow q(K)$ in $\Gamma_*$ is transformed into a refutation of $\leftarrow p_{\rho_q}(K)$ in $\Gamma_n$.
\end{proof}

\begin{lemma}\label{lem:eventual-language-inclusion}
Assume that $L_* = L(\Gamma_*,p_*) \in \FICSLFCP_{\Delta}(m,s,t,w,d)$.
Let $r_1$ be a stage after which the set $\mathcal F$ is not updated.
Then there exists a stage $r_2\ge r_1$ such that, for every $n\ge r_2$, $L(\Gamma_n,p_{\rho_\emptyset})\subseteq L_*$.
\end{lemma}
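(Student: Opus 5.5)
After stage $r_1$ the predicate basis $\mathcal F$ is fixed, so every later hypothesis $\Gamma_n$ is built from the same finite set $\mathcal C$ of bounded fixed-interface clause candidates over $\mathcal F\cup\{\rho_\emptyset\}$. This set is finite because the number of clauses, variable hyperedges and body atoms is bounded, interface ranks are at most $w$, and head frames range over the finite family $\mathcal H_d$. The plan has two parts. First, show that every spurious non-fact candidate in $\mathcal C$ is eventually rejected and stays rejected. Second, show that once only non-spurious clauses remain, every graph derived from $p_{\rho_\emptyset}$ lies in $L_*$. Fact candidates need separate handling, since they are admitted by a single membership query. Non-fact candidates that are non-spurious can be left in the hypothesis, because Lemma~\ref{lem:one-step-preservation-by-nonspurious-clause} controls them.

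\emph{Step 1: eliminating spurious clauses.} Fix a spurious candidate $\gamma=p_\rho(H)\leftarrow p_{\rho_1}(H_1),\ldots,p_{\rho_\ell}(H_\ell)$, and take a witness family $(K_x)_{x\in X(\gamma)}$ as in Definition~\ref{def:spurious-clause}. For each $i$ we have $\mathrm{can}(\rho_i)\odot K_{x_i}\in L_*$, so this graph appears in $D_n$ for all sufficiently large $n$. The key claim is that $K_{x_i}$ is then the canonical subgraph of some $\sigma\in\mathrm{BRep}_w(D_n)$. This uses the same cutting argument as in Lemma~\ref{lem:eventual-boundary-representative}, applied to the copy of $K_{x_i}$ inside the composition, with the boundary edge set taken to be the edges of that copy incident with the interface. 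Choosing these $\sigma_x$ gives $\mathrm{Obs}_{\mathcal F,\mathcal R}(\rho_i,\sigma_{x_i})=\mathrm{true}$ for all $i$, while the query $\mathrm{can}(\rho)\odot\Real(H,\{x:=K_x\})$ is answered negatively. So $\gamma$ is rejected. Since $\mathcal R=\mathrm{BRep}_w(D_n)$ is recomputed at each stage (Remark~\ref{rem:recomputation}) and $D_n$ only grows, the witness persists and the rejection is permanent. There are finitely many spurious candidates, so taking the maximum of their stages gives $r_2$.

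One technical point needs checking here. When the same variable $x$ appears in several body atoms, a single $K_x$ must be realized simultaneously with each relevant $\rho_i$. This is fine, because every $\sigma$ whose canonical subgraph is (isomorphic to) $K_x$ serves for all of them.

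\emph{Step 2: soundness.} Prove by induction on refutations in $\Gamma_n$, for $n\ge r_2$, the following invariant: if $\leftarrow p_\rho(K)$ has a refutation in $\Gamma_n$, then $\mathrm{can}(\rho)\odot K\in L_*$ whenever this composition is defined.
\begin{itemize}
\item \emph{Fact clauses.} The admission test is exactly the query $\mathrm{can}(\rho)\odot H\in L_*$, so the invariant holds for them directly.
\item \emph{Non-fact clauses.} The induction hypothesis gives $\mathrm{can}(\rho_i)\odot K_{x_i}\in L_*$ for the body instances. Since $\gamma$ is non-spurious, Lemma~\ref{lem:one-step-preservation-by-nonspurious-clause} then yields $\mathrm{can}(\rho)\odot\Real(H,\theta)\in L_*$.
\end{itemize}
Finally take $\rho=\rho_\emptyset$. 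Its canonical graph is $G_\emptyset$, and $G_\emptyset\odot K=K$, so every graph in $L(\Gamma_n,p_{\rho_\emptyset})$ belongs to $L_*$.

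The main obstacle is the identification step in Step 1. It requires that every witness $K_x$ arising as the right factor of a composition in $L_*$ is represented by an ordered boundary representation with the correct interface rank, and that its reconstruction via $K_X(\beta,E_B)$ returns exactly $K_x$. Two cases need care: when $K_x$ has vertices not reachable from boundary edges, and when $K_x$ has edges between interface vertices. I would first attempt to repeat verbatim the cutting argument of Lemma~\ref{lem:eventual-boundary-representative}, choosing the witness family to consist of minimal graphs, and fall back on restricting spuriousness to witnesses with connected non-interface parts if needed. The other point to watch is definedness of compositions in the induction, which I would handle by noting that undefined compositions impose no condition.
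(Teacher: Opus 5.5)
\paragraph{Same route as the paper, but the key step is a real gap.}
Your plan is the paper's proof. You freeze $\mathcal F$ and use finiteness of the candidate set. You put a witness family for each spurious non-fact candidate into $\mathcal R=\BRep_w(D_n)$, so that candidate is rejected from some stage on. Then you show by induction on refutations in $\Gamma_n$ that $\mathrm{can}(\rho)\odot K\in L_*$ whenever $\leftarrow p_\rho(K)$ is refutable, using the fact-admission query and Lemma~\ref{lem:one-step-preservation-by-nonspurious-clause}, and specialise to $\rho_\emptyset$. Your remarks on repeated variables and undefined compositions are fine. Edges between interface vertices cause no trouble: they can be put into $E_B$, and $\odot$ merges coinciding interface edges. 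The identification step you single out, however, is a genuine gap. The paper's proof does not close it either; it only asserts that $K_x^\gamma$ ``occurs as a subgraph with interface'' of an observed example and hence equals $\mathrm{can}(\sigma_x^\gamma)$.

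\paragraph{Why it fails, and a counterexample.}
Every non-interface vertex of $\mathrm{can}(\sigma)=K_X(\beta,E_B)$ is reachable in $X-B$ from an endpoint of an edge of $E_B$. So $K$ can equal some $\mathrm{can}(\sigma)$ only if every component of $K-\iota_K$ has an edge to an interface vertex. The cutting argument of Lemma~\ref{lem:eventual-boundary-representative} silently needs this too. In particular, the only rank-$0$ representation is $((),\emptyset)$, whose canonical graph is empty.

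This breaks the lemma itself, not just the proof. Take parameters with $s,t\ge 1$ and let $L_*=\{P_2\}$ (a single edge), generated by the fact $p_*(P_2)\leftarrow$. Let $\mathcal H_d$ contain the frame $F_0$ consisting of one isolated vertex and one hyperedge with empty port list. Nothing in the definitions forbids $\rank(x)=0$, and the bounded-size choice of $\mathcal H_d$ suggested in the paper contains $F_0$.
\begin{itemize}
\item The candidate $p_{\rho_\emptyset}(H)\leftarrow p_{\rho_\emptyset}(H_1)$ with head frame $F_0$ is spurious. Its only witness is $K_x\cong P_2$ with empty interface, so choosing minimal witnesses cannot help.
\item No residual ever makes the body observation true: rank-$0$ residuals give the empty graph, which is not in $L_*$, and positive-rank ones give undefined compositions. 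So the candidate is admitted vacuously at every stage.
\item The fact $p_{\rho_\emptyset}(P_2)\leftarrow$ is always admitted, so $\mathcal F$ never changes.
\item Yet $P_2$ plus an isolated vertex lies in every $L(\Gamma_n,p_{\rho_\emptyset})$, so no $r_2$ exists.
\end{itemize}

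\paragraph{Your fallbacks, and what would repair the statement.}
Your other fallback, counting only interface-attached witnesses, breaks Step~2. The body instances $K_{x_i}$ there are arbitrary derived graphs (here $P_2$ itself), so the one-step preservation lemma would no longer apply to them.

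The statement needs an extra hypothesis. For instance, assume all variable ranks are positive and every graph in $L_*$ is connected. Then any $K_x$ with $\mathrm{can}(\rho_i)\odot K_x\in L_*$ has every component of $K_x-\iota_{K_x}$ adjacent to the interface. Such a component meets the rest of the connected composite only through edges of $K_x$ into $\iota_{K_x}$. Under that hypothesis your cutting argument goes through verbatim.
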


\begin{proof}
Let $\mathcal F^*$ be the fixed value of $\mathcal F$ after stage $r_1$.
Consider all bounded fixed-interface clause candidates over $\mathcal F^*\cup\{\rho_\emptyset\}$ with bounds $s,w,d$, with at most $t$ body atoms, and whose head frames belong to the fixed finite family $\mathcal H_d$, as considered in Algorithm~\ref{alg:construct-gamma}.
Since $\mathcal F^*$ is finite, $\mathcal H_d$ is finite, and the parameters $s,t,w,d$ are fixed, there are only finitely many such clause candidates.

Let $\mathcal S$ be the set of non-fact clause candidates among them that are spurious with respect to $L_*$. Thus, for each $
\gamma = p_\rho(H)\leftarrow p_{\rho_1}(H_1),\ldots,p_{\rho_\ell}(H_\ell) \in \mathcal S$ with $\ell\ge 1$, let $x_i$ be the unique variable label occurring in the star graph pattern $H_i$ for each $i=1,\ldots,\ell$, and put $X(\gamma)=\{x_i\mid 1\le i\le \ell\}$.
Since $\gamma$ is spurious, there exists a family of graphs with interfaces $(K_x^\gamma)_{x\in X(\gamma)}$ such that $\mathrm{can}(\rho_i)\odot K_{x_i}^\gamma\in L_*$ for all $i=1,\ldots,\ell$, the realization $\Real(H,\{x:=K_x^\gamma\}_{x\in X(\gamma)})$ is defined, and $\mathrm{can}(\rho)\odot \Real(H,\{x:=K_x^\gamma\}_{x\in X(\gamma)})\notin L_*$.
For each $\gamma\in\mathcal S$ and each $x\in X(\gamma)$, choose an index $i$ with $x_i=x$.
Then the graph $\mathrm{can}(\rho_i)\odot K_x^\gamma$ belongs to $L_*$.
Hence it eventually appears in the positive presentation.
When it has appeared, the graph with interface $K_x^\gamma$ occurs as a subgraph with interface of an observed positive example.
Therefore its ordered boundary representation, say $\sigma_x^\gamma$, is eventually included in $\mathrm{BRep}_w(D_n)$, and we may choose it so that $\mathrm{can}(\sigma_x^\gamma)=K_x^\gamma$.
Since $\mathcal S$ is finite, there exists $r_2\ge r_1$ such that, for every $n\ge r_2$, all such representatives $\sigma_x^\gamma$ for all $\gamma\in\mathcal S$ and all $x\in X(\gamma)$ belong to the current residual set $\mathcal R=\mathrm{BRep}_w(D_n)$.

Fix $n\ge r_2$.
We first observe that no clause candidate in $\mathcal S$ is included in $\Gamma_n$.
Indeed, let $\gamma = p_\rho(H)\leftarrow p_{\rho_1}(H_1),\ldots,p_{\rho_\ell}(H_\ell) \in\mathcal S$.
Let $x_i$ be the unique variable label occurring in the star graph pattern $H_i$, and put $X(\gamma)=\{x_i\mid 1\le i\le \ell\}$.
Take the corresponding family $(\sigma_x^\gamma)_{x\in X(\gamma)}\in\mathcal R^{X(\gamma)}$.
For each $i=1,\ldots,\ell$, we have $\mathrm{can}(\rho_i)\odot \mathrm{can}(\sigma_{x_i}^\gamma)=\mathrm{can}(\rho_i)\odot K_{x_i}^\gamma
\in L_*$.
Hence the observation table satisfies $\mathrm{Obs}_{\mathcal F,\mathcal R}(\rho_i,\sigma_{x_i}^\gamma)=\mathrm{true}$ for all $i=1,\ldots,\ell$.
Moreover, $\Real(H,\{x:=\mathrm{can}(\sigma_x^\gamma)\}_{x\in X(\gamma)})=\Real(H,\{x:=K_x^\gamma\}_{x\in X(\gamma)})$ is defined, while $\mathrm{can}(\rho)\odot \Real(H,\{x:=\mathrm{can}(\sigma_x^\gamma)\}_{x\in X(\gamma)})\notin L_*$.
Therefore the membership query in Algorithm~\ref{alg:construct-gamma} returns \textbf{false} for this family, and $\gamma$ is rejected.
Thus no spurious non-fact candidate from $\mathcal S$ is included in $\Gamma_n$.
Consequently, every non-fact clause of $\Gamma_n$ is non-spurious with respect to $L_*$. In addition, every zero-body clause $p_\rho(H)\leftarrow$ included in $\Gamma_n$ is directly sound by Algorithm~\ref{alg:construct-gamma}: it is added only if $
\mathrm{can}(\rho)\odot H$ is defined and $\mathrm{Oracle}_{L_*}(\mathrm{can}(\rho)\odot H)$ returns \textbf{true}.

We now prove the following claim by induction on the length of a refutation in $\Gamma_n$.
If $\leftarrow p_\rho(K)$ has a refutation in $\Gamma_n$, then $\mathrm{can}(\rho)\odot K\in L_*$, whenever the composition is defined.

First suppose that the refutation has length one. Then the first and only clause used is a zero-body clause $p_\rho(H)\leftarrow$ of $\Gamma_n$.
Since fixed-interface clauses have one body atom corresponding to each variable hyperedge in the head pattern, the head pattern $H$ has no variable hyperedges when the body length is zero.
Thus $H$ is ground, and the goal refuted by this clause is $\leftarrow p_\rho(H)$.
Hence $K=H$.
By the zero-body admission test in Algorithm~\ref{alg:construct-gamma}, the clause was added only when $\mathrm{can}(\rho)\odot H$ is defined and $\mathrm{Oracle}_{L_*}\bigl(\mathrm{can}(\rho)\odot H\bigr)$ returns \textbf{true}.
Therefore $\mathrm{can}(\rho)\odot H\in L_*$.
Since $K=H$, we obtain $\mathrm{can}(\rho)\odot K\in L_*$.
Thus the claim holds in the base case.

Next assume that the claim holds for all goals refutable by derivations of length less than some integer $a$, and suppose that $\leftarrow p_\rho(K)$ has a refutation of length $a$ in $\Gamma_n$.
Let the first derivation step use a non-fact clause $p_\rho(H)\leftarrow p_{\rho_1}(H_1),\ldots,p_{\rho_r}(H_r)$ of $\Gamma_n$.
For each $i=1,\ldots,r$, let $x_i$ be the unique variable label occurring in the star graph pattern $H_i$, and put $X=\{x_i\mid 1\le i\le r\}$.
Let $\theta=\{x:=K_x\}_{x\in X}$ be the substitution on variable labels used in the first derivation step, where $(K_x)_{x\in X}$ is a family of graphs with interfaces, and suppose that $\Real(H,\theta)=K$.
Then each body goal $\leftarrow p_{\rho_i}(K_{x_i})$ has a refutation of length less than $a$.
By the induction hypothesis, $\mathrm{can}(\rho_i)\odot K_{x_i}\in L_*$ for all $i=1,\ldots,r$.
Since the clause used in the first step is a non-fact clause of $\Gamma_n$, it is non-spurious with respect to $L_*$. Therefore, by Lemma~\ref{lem:one-step-preservation-by-nonspurious-clause}, $\mathrm{can}(\rho)\odot \Real(H,\{x:=K_x\}_{x\in X}) \in L_*$.
Because $\Real(H,\{x:=K_x\}_{x\in X})=K$, we obtain $\mathrm{can}(\rho)\odot K\in L_*$.
This proves the induction claim.

Finally, let $G\in L(\Gamma_n,p_{\rho_\emptyset})$.
Then $\leftarrow p_{\rho_\emptyset}(G)$ has a refutation in $\Gamma_n$.
Applying the claim with $\rho=\rho_\emptyset$, we obtain $\mathrm{can}(\rho_\emptyset)\odot G\in L_*$.
Since $\mathrm{can}(\rho_\emptyset)$ is the empty graph with the empty interface, we have $\mathrm{can}(\rho_\emptyset)\odot G=G$.
Hence $G\in L_*$.

Since $G\in L(\Gamma_n,p_{\rho_\emptyset})$ was arbitrary, we conclude $L(\Gamma_n,p_{\rho_\emptyset})\subseteq L_*$.
\end{proof}

\noindent{\bf Proof of Theorem~\ref{thm:correctness}:}\\
Let $L_* = L(\Gamma_*,p_*) \in \FICSLFCP_{\Delta}(m,s,t,w,d)$, and let $G_1,G_2,\ldots$ be an arbitrary positive presentation of $L_*$.
By Lemma~\ref{lem:eventual-target-clause-reconstruction}, there exists an integer $n_1$ such that, for every update stage $n'\ge n_1$, the hypothesis $(\Gamma_{n'},p_{\rho_\emptyset})$ contains predicate representatives for all target predicate contexts, and every target clause of $\Gamma_*$ is reconstructed in $\Gamma_{n'}$ over those representatives.

We distinguish two cases.

First suppose that there exists an update stage $n_2\ge n_1$.
Then, by Lemma~\ref{lem:eventual-target-clause-reconstruction}, every target clause of $\Gamma_*$ is reconstructed in $\Gamma_{n_2}$.
Hence, by Lemma~\ref{lem:refutation-simulation}, $L_*\subseteq L(\Gamma_{n_2},p_{\rho_\emptyset})$.
Therefore $D_{n_2+1}\subseteq L_*\subseteq L(\Gamma_{n_2},p_{\rho_\emptyset})$.
Since the hypothesis used in the update test at stage $n_2+1$ is precisely the hypothesis constructed from the current pair $(\mathcal F,\mathcal R)$ at stage $n_2$, the update condition in Algorithm~\ref{alg:learning} fails at stage $n_2+1$.
Thus $\mathcal F$ is not updated at stage $n_2+1$.
Moreover, after stage $n_2$, the predicate basis $\mathcal F$ still contains the representatives of all target predicate contexts.
Although the residual set $\mathcal R$ is updated to $\BRep_w(D_n)$ at each stage, the reconstructed target clauses remain admissible for every later $\mathcal R$.
Indeed, the argument in the proof of Lemma~\ref{lem:eventual-target-clause-reconstruction} shows that genuine target clauses pass the membership-query test in Algorithm~\ref{alg:construct-gamma} for every family of residual representatives for which the body observations are positive.
Consequently, the same simulation argument as in Lemma~\ref{lem:refutation-simulation} gives $L_*\subseteq L(\Gamma_{n_2+1},p_{\rho_\emptyset})$.
It follows that $D_{n_2+2}\subseteq L_*\subseteq L(\Gamma_{n_2+1},p_{\rho_\emptyset})$, and hence no update is performed at stage $n_2+2$.
Repeating this argument inductively, no update is performed at any stage after $n_2$, and for every $n\ge n_2$, $L_*\subseteq L(\Gamma_n,p_{\rho_\emptyset})$.
In particular, $n_2$ is a stage after which the set $\mathcal F$ is not updated.
Now apply Lemma~\ref{lem:eventual-language-inclusion} to this stage $n_2$.
There exists a stage $n_3\ge n_2$ such that, for every $n\ge n_3$, $L(\Gamma_n,p_{\rho_\emptyset})\subseteq L_*$.

Combining the two inclusions, we obtain $L(\Gamma_n,p_{\rho_\emptyset})=L_*$ for every $n\ge n_3$.

It remains to consider the second case, where there is no update stage $n'\ge n_1$.
Then $n_1$ is a stage after which the set $\mathcal F$ is not updated.
Since $\mathcal F$ is fixed, the structural parameters $s,t,w,d$ are fixed, and the head frames are restricted to the fixed finite family $\mathcal H_d$, there are only finitely many bounded fixed-interface clause candidates over $\mathcal F\cup\{\rho_\emptyset\}$.
As $n$ increases, the residual set $\mathcal R=\BRep_w(D_n)$ can only grow.
Therefore the clause-admission test in Algorithm~\ref{alg:construct-gamma} can only reject additional non-fact candidates; it cannot make a previously rejected candidate admissible again.
The zero-body clauses are independent of $\mathcal R$.
Hence the sequence of clause sets $\Gamma_n$, for $n\ge n_1$, is descending inside a finite set of candidates.
Therefore there exists a stage $n_4\ge n_1$ such that $\Gamma_n=\Gamma_{n_4}$ for every $n\ge n_4$.
We show that $L_*\subseteq L(\Gamma_{n_4},p_{\rho_\emptyset})$.
Let $G\in L_*$. Since $G_1,G_2,\ldots$ is a positive presentation of $L_*$, there exists $j\ge n_4+1$ such that $G\in D_j$.
By the assumption of the present case, no update is performed at stage $j$.
Hence the update condition in Algorithm~\ref{alg:learning} fails at stage $j$, and therefore $D_j\subseteq L(\Gamma_{j-1},p_{\rho_\emptyset})$.
Since $j-1\ge n_4$, we have $\Gamma_{j-1}=\Gamma_{n_4}$.
Thus $G\in L(\Gamma_{n_4},p_{\rho_\emptyset})$.
Because $G\in L_*$ was arbitrary, $L_*\subseteq L(\Gamma_{n_4},p_{\rho_\emptyset})$.
Since $\Gamma_n=\Gamma_{n_4}$ for all $n\ge n_4$, it follows that $L_*\subseteq L(\Gamma_n,p_{\rho_\emptyset})$ for every $n\ge n_4$.
On the other hand, since $n_1$ is a stage after which the set $\mathcal F$ is not updated, Lemma~\ref{lem:eventual-language-inclusion} gives a stage $n_5\ge n_1$ such that, for every $n\ge n_5$, $L(\Gamma_n,p_{\rho_\emptyset})\subseteq L_*$.
Taking $n_0=\max\{n_4,n_5\}$, we obtain $L(\Gamma_n,p_{\rho_\emptyset})=L_*$ for every $n\ge n_0$.
In the first case, put $n_0=n_3$. In the second case, take $n_0=\max\{n_4,n_5\}$ as above.
Hence, in both cases, there exists an integer $n_0$ such that $L(\Gamma_n,p_{\rho_\emptyset})=L_*$ for all $n\ge n_0$.

Therefore Algorithm~\ref{alg:learning} identifies $L_*$ in the limit from positive data and membership queries.
\hfill$\Box$
\subsection{Polynomial-time update}\label{subsec:complexity}

We next analyze the update complexity of Algorithm~\ref{alg:learning}.
The main point is that, under the bounded-degree assumption, the fixed interface-rank bound, and the finite head-frame condition, only polynomially many ordered boundary representations and clause candidates need to be considered at each stage.

The analysis proceeds as follows. We first bound the size and construction time of $\mathrm{BRep}_w(D_n)$.
This immediately bounds the number of predicate symbols introduced from the current predicate basis.
We then bound the number of bounded fixed-interface clause candidates and the number of membership queries used in the observation table and clause-admission tests.
Finally, these estimates are combined with the membership bound of Proposition~\ref{prop:membership-bound} for testing membership in the current hypothesis, while membership queries to the target language are counted as oracle calls.

\begin{lemma}\label{lem:brep-size-bound}
Let $D_n=\{G_1,\ldots,G_n\}$ and let $S_n=\sum_{i=1}^{n}|V(G_i)|$.
Assume that every $G_i$ has maximum degree at most $\Delta$.
For fixed $w$ and $\Delta$, the size of $\mathrm{BRep}_w(D_n)$ is bounded by a polynomial in $S_n$. 
More precisely, there exists a function $\xi_1$, depending only on $w$ and $\Delta$, such that $|\mathrm{BRep}_w(D_n)|\leq\xi_1(w,\Delta) S_n^w$.
Moreover, $\mathrm{BRep}_w(D_n)$ can be constructed in time $O\!\left(\Delta\,2^{w\Delta} S_n^{w+1}\right)$.
\end{lemma}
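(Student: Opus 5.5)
The plan is to bound $|\mathrm{BRep}_w(D_n)|$ by the number of ordered boundary specifications and then sum over the graphs of $D_n$. Algorithm~\ref{alg:construct-obrep} adds one representation $\rho=(\beta,E_B)$ for each graph $X\in D_n$, each rank $r\le w$, each tuple $\beta$ of $r$ distinct vertices of $X$, and each subset $E_B\subseteq\mathrm{Inc}_X(B)$. So the number of representations contributed by a single $G_i$ is at most the number of ordered boundary specifications of rank at most $w$ in $G_i$.

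By Lemma~\ref{lem:number-of-boundary-specifications}, this number is at most $\sum_{r=0}^{w}N_i^r2^{r\Delta}$, where $N_i=|V(G_i)|$. Bounding each $N_i^r$ by $\max(1,N_i)^w$ gives at most $(w+1)2^{w\Delta}\max(1,N_i)^w$ representations from $G_i$.

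Summing over $i$ requires $\sum_i N_i^w\le S_n^w$, which holds by superadditivity of $x\mapsto x^w$ for $w\ge1$. The same argument was used in Corollary~\ref{cor:sample-boundary-enumeration-time}. The point needing care is the rank-$0$ term. Each graph contributes a rank-$0$ representation $((),\emptyset)$, so the total contains roughly $n$ such terms plus the terms coming from empty graphs.

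\begin{itemize}
\item If every $G_i$ is nonempty, then $N_i\ge1$, so $\max(1,N_i)=N_i$, and $n\le S_n\le S_n^w$ when $w\ge1$. The bound $|\mathrm{BRep}_w(D_n)|\le\xi_1(w,\Delta)S_n^w$ then holds with $\xi_1(w,\Delta)=(w+1)2^{w\Delta}$.
\item If $w=0$, or if $D_n$ contains the empty graph, then representations coming from different graphs may coincide. I would treat $\mathrm{BRep}_w(D_n)$ as a set, so at most one rank-$0$ representation survives, and absorb it into the constant. I would also state the degenerate case $S_n=0$ separately.
\end{itemize}

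This bookkeeping is the only delicate step. The conceptual content is just Lemma~\ref{lem:number-of-boundary-specifications}.

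For the construction time, I would note that Algorithm~\ref{alg:construct-obrep} spends, for each candidate, the time of Lemma~\ref{lem:boundary-specification-decision} to build $K_X(\beta,E_B)$. Validity is automatic here, because $E_B\subseteq\mathrm{Inc}_X(B)$. Listing $\mathrm{Inc}_X(B)$ and enumerating its subsets costs $O(r\Delta)$ time per subset, which is dominated by the $O(\Delta N_i)$ construction cost. The stated time bound $O(\Delta\,2^{w\Delta}S_n^{w+1})$ is then exactly Corollary~\ref{cor:sample-boundary-enumeration-time}, which applies Lemma~\ref{lem:boundary-specification-enumeration-time} to each $G_i$ and sums using $\sum_i N_i^{w+1}\le S_n^{w+1}$.
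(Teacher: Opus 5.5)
Your proposal is correct and follows essentially the paper's argument: bound the contribution of each $G_i$ by $\sum_{r=0}^{w}N_i^r2^{r\Delta}\le\xi_1(w,\Delta)N_i^w$, sum using $\sum_i N_i^w\le S_n^w$, and get the time bound from the $O(\Delta N_i)$ per-candidate cost summed via $\sum_i N_i^{w+1}\le S_n^{w+1}$ (which you cite as Corollary~\ref{cor:sample-boundary-enumeration-time} rather than re-deriving). Your extra bookkeeping for $N_i=0$, $w=0$ and $S_n=0$ is a genuine refinement, since the paper's proof silently assumes $N_i\ge 1$ and $w\ge 1$ in those two inequalities. It does not, however, change the approach.
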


\begin{proof}
Fix $i\in\{1,\ldots,n\}$, and put $N_i=|V(G_i)|$.
For a fixed boundary rank $r\leq w$, the number of ordered boundary tuples $\beta=(b_1,\ldots,b_r)$ of distinct vertices of $G_i$ is at most $N_i^r$.
For each such tuple, since $G_i$ has maximum degree at most $\Delta$, the number of edges incident with the boundary vertices is at most $r\Delta$.
Hence the number of possible boundary edge sets $E_B$ is at most $2^{r\Delta}$.
Therefore the number of ordered boundary specifications of rank $r$ arising from $G_i$ is at most $N_i^r 2^{r\Delta}$.
Summing over $r=0,1,\ldots,w$, the number of ordered boundary representations of rank at most $w$ arising from $G_i$ is at most $\sum_{r=0}^{w} N_i^r 2^{r\Delta}$.
Thus
\[
|\mathrm{BRep}_w(D_n)|
\leq
\sum_{i=1}^{n}\sum_{r=0}^{w} N_i^r 2^{r\Delta}.
\]
Since $w$ and $\Delta$ are fixed, there is a function $\xi_1$ depending only on $w$ and $\Delta$ such that $\sum_{r=0}^{w} N_i^r 2^{r\Delta}\leq\xi_1(w,\Delta)N_i^w$for every $i$.
Hence
\[
|\mathrm{BRep}_w(D_n)|
\leq
\xi_1(w,\Delta)\sum_{i=1}^{n} N_i^w.
\]
Since $\sum_{i=1}^{n} N_i^w\leq\left(\sum_{i=1}^{n} N_i\right)^w=S_n^w$, we obtain
\[
|\mathrm{BRep}_w(D_n)|
\leq
\xi_1(w,\Delta)S_n^w.
\]

The construction time bound follows from Algorithm~\ref{alg:construct-obrep}.
For each candidate $(\beta,E_B)$, validity can be tested and the corresponding subgraph with interface can be constructed in $O(|V(G_i)|+|E(G_i)|)$ time.
Since $G_i$ has maximum degree at most $\Delta$, this is $O(\Delta N_i)$.
Therefore the total construction time is bounded by
\[
\sum_{i=1}^{n}
O\!\left(
\sum_{r=0}^{w} N_i^r2^{r\Delta}\cdot \Delta N_i
\right)
=
O\!\left(
\Delta\,2^{w\Delta}\sum_{i=1}^{n} N_i^{w+1}
\right).
\]
Since $\sum_{i=1}^{n} N_i^{w+1}\leq\left(\sum_{i=1}^{n}N_i\right)^{w+1}=S_n^{w+1}$, we obtain
\[
O\!\left(\Delta\,2^{w\Delta}S_n^{w+1}\right).
\]
This proves the lemma.
\end{proof}

\begin{lemma}\label{lem:polynomially-many-predicate-symbols}
Let $D_n=\{G_1,\ldots,G_n\}$ and let $S_n=\sum_{i=1}^{n}|V(G_i)|$.
Assume that every $G_i$ has maximum degree at most $\Delta$.
For fixed $w$ and $\Delta$, the number of predicate symbols introduced by Algorithm~\ref{alg:construct-gamma} at stage $n$ is bounded by a polynomial in $S_n$.
\end{lemma}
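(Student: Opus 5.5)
The plan is to read off the predicate symbols directly from Algorithm~\ref{alg:construct-gamma}. That algorithm introduces exactly one predicate symbol $p_\rho$ for each $\rho\in\mathcal F\cup\{\rho_\emptyset\}$ and no others. So the number of predicate symbols at stage $n$ is at most $|\mathcal F|+1$, and the whole task is to bound $|\mathcal F|$ in terms of $S_n$.

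First I will record what $\mathcal F$ can be at stage $n$. By Algorithm~\ref{alg:learning}, $\mathcal F$ is either the initial value $\emptyset$ or equals $\mathrm{BRep}_w(D_{n'})$, where $n'\le n$ is the most recent update stage. Since $D_{n'}\subseteq D_n$, each ordered boundary representation produced from $D_{n'}$ by Algorithm~\ref{alg:construct-obrep} is also produced from $D_n$. Hence $\mathcal F\subseteq\mathrm{BRep}_w(D_n)$. Alternatively, I can apply Lemma~\ref{lem:brep-size-bound} to $D_{n'}$ and use $S_{n'}\le S_n$ directly; this avoids any question of whether representations from different samples are identified.

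Next I will apply Lemma~\ref{lem:brep-size-bound}, which uses the degree bound $\Delta$ on every $G_i$. It gives $|\mathcal F|\le\xi_1(w,\Delta)S_{n'}^w\le\xi_1(w,\Delta)S_n^w$. The number of predicate symbols is therefore at most $\xi_1(w,\Delta)S_n^w+1$, which is a polynomial in $S_n$ for fixed $w$ and $\Delta$.

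The one point that needs care is the case where $\mathcal F$ was last set at an earlier stage. This is settled by the monotonicity $S_{n'}\le S_n$. I do not expect any other obstacle.
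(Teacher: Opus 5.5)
Your proposal is correct and follows essentially the same route as the paper: one predicate symbol per element of $\mathcal F\cup\{\rho_\emptyset\}$, then $\mathcal F\subseteq\mathrm{BRep}_w(D_n)$ because $\mathcal F$ was last set from some $D_j\subseteq D_n$, then Lemma~\ref{lem:brep-size-bound}. Your alternative of applying that lemma to $D_{n'}$ and using $S_{n'}\le S_n$ is a harmless variant that gives the same bound $\xi_1(w,\Delta)S_n^w+1$.
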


\begin{proof}
Algorithm~\ref{alg:construct-gamma} introduces one predicate symbol $p_\rho$ for each $\rho\in \mathcal F\cup\{\rho_\emptyset\}$.
At stage $n$, the set $\mathcal F$ is either unchanged from an earlier stage or is set to $\mathrm{BRep}_w(D_j)$ for some $j\le n$.
Since $D_j\subseteq D_n$, we have $\mathrm{BRep}_w(D_j)\subseteq \mathrm{BRep}_w(D_n)$.
Hence $|\mathcal F|\le |\mathrm{BRep}_w(D_n)|$.
Therefore the number of predicate symbols introduced is at most $|\mathrm{BRep}_w(D_n)|+1$.
By Lemma~\ref{lem:brep-size-bound}, this quantity is bounded by a polynomial in $S_n$.
\end{proof}

\begin{lemma}\label{lem:polynomially-many-clause-candidates}
Let $D_n=\{G_1,\ldots,G_n\}$ and let $S_n=\sum_{i=1}^{n}|V(G_i)|$.
Assume that every $G_i$ has maximum degree at most $\Delta$.
For fixed $s,t,w,d,\Delta$, the number of bounded fixed-interface clause candidates considered by Algorithm~\ref{alg:construct-gamma} at stage $n$ is bounded by a polynomial in $S_n$.
This count includes the zero-body case $\ell=0$.
\end{lemma}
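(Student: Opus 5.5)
The plan is to count a clause candidate $p_\rho(H)\leftarrow p_{\rho_1}(H_1),\ldots,p_{\rho_\ell}(H_\ell)$ by the independent choices that determine it up to renaming of variable labels. There are four kinds of choice: the head frame of $H$, the assignment of variable labels to the hyperedge occurrences of that frame, the head predicate $p_\rho$, and the body predicates $p_{\rho_1},\ldots,p_{\rho_\ell}$. Each will be bounded separately, and the bounds will be multiplied. Throughout, put $F=|\mathcal F\cup\{\rho_\emptyset\}|$. By Lemma~\ref{lem:polynomially-many-predicate-symbols} and Lemma~\ref{lem:brep-size-bound}, $F\le \xi_1(w,\Delta)S_n^w+1$.

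First, the head frame of $H$ ranges over the fixed finite family $\mathcal H_d$, so it contributes a constant factor $|\mathcal H_d|$ depending only on $d$ and the alphabets. Each frame has at most $s$ hyperedge occurrences, each of rank at most $w$.

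Second, we fix a frame with $k\le s$ hyperedge occurrences. Since only equality of labels among occurrences matters, the variable labeling $\lambda$ is determined up to renaming by a set partition of these occurrences, of which there are at most the Bell number $B_s$. The ranks are forced by the port lists, so this choice is consistent. By condition (3) of Definition~\ref{def:fixed-interface-clause}, the body atoms are in bijection with the hyperedge occurrences, so $\ell=k\le \min(s,t)$. Each star pattern $H_i$ is determined up to isomorphism by the label and rank of its occurrence, and the body order contributes at most a factor $t!$, which can also be absorbed by fixing a canonical order.

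Third, the head predicate must satisfy $\irank(p_\rho)=|\iota_H|$, which gives at most $F$ choices. Each body predicate $p_{\rho_i}$ must have interface rank equal to the rank of its hyperedge, which gives at most $F$ choices each.

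Multiplying these factors gives at most
\[
|\mathcal H_d|\cdot B_s\cdot t!\cdot F^{1+\min(s,t)}
\le \xi(s,t,w,d,\Delta)\,S_n^{w(1+s)}
\]
candidates. This is a polynomial in $S_n$ for fixed parameters. The case $\ell=0$ corresponds to frames with no hyperedges, where the count is $|\mathcal H_d|\cdot F$, and it is covered by the same bound.

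The main obstacle is justifying that the candidate set $\mathcal C$ in Algorithm~\ref{alg:construct-gamma} is enumerated up to isomorphism and variable renaming, since otherwise infinitely many variable names would give infinitely many syntactic clauses. I would make this explicit by taking $\mathcal C$ to consist of one canonical representative per isomorphism class of head frame, with variable labels drawn from a fixed pool $x_1,\ldots,x_s$. I would then argue that renaming variables changes neither the admission test nor the generated language, so the enumeration in the algorithm agrees with this finite count.
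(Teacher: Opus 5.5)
Your proposal is correct and follows essentially the same route as the paper: you multiply a polynomial number of choices of head and body predicate representatives from $\mathcal F\cup\{\rho_\emptyset\}$ (via Lemma~\ref{lem:brep-size-bound}) by constants for the head frame in $\mathcal H_d$, the equality pattern of variable labels, and the body star patterns; your explicit handling of variable renaming and the sharper exponent $1+\min(s,t)$ refine points the paper leaves implicit. One small correction: a star pattern is not determined by its variable label and rank alone, because its port vertices carry labels from $\Sigma_V$ that the fixed-interface clause definition leaves unconstrained (the paper speaks of ``finitely many choices depending only on the corresponding variable ranks and the finite label alphabets''), but this adds only a constant factor of at most $|\Sigma_V|^{wt}$ and does not affect the conclusion.
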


\begin{proof}
Let $M_n=|\mathrm{BRep}_w(D_n)|$.
By Lemma~\ref{lem:brep-size-bound}, $M_n$ is bounded by a polynomial in $S_n$.
At stage $n$, the current predicate basis $\mathcal F$ is either unchanged from an earlier stage or is equal to $\mathrm{BRep}_w(D_j)$ for some $j\le n$.
Hence $|\mathcal F|\le M_n$.
Thus the number of available predicate representatives is at most $|\mathcal F|+1\le M_n+1$, where the additional one is $\rho_\emptyset$.

A clause candidate considered by Algorithm~\ref{alg:construct-gamma} has the form $p_\rho(H)\leftarrow p_{\rho_1}(H_1),\ldots,p_{\rho_\ell}(H_\ell)$, where $0\le \ell\le t$, and all predicate representatives are chosen from $\mathcal F\cup\{\rho_\emptyset\}$.
For a fixed $\ell$, the number of choices of the head and body predicate representatives is at most $(M_n+1)^{\ell+1}\le (M_n+1)^{t+1}$.
By Algorithm~\ref{alg:construct-gamma} and Definition~\ref{def:bounded-fics}, the head frame of $H$ is required to belong to $\mathcal H_d$.
For fixed parameters $s,t,w,d$ and finite alphabets $\Sigma_V,\Sigma_E$, the number of possible head-frame types is bounded by $|\mathcal H_d|$, which is a constant depending only on $d$ and the fixed label alphabets.
For each such head frame, the number of possible assignments of variable labels to at most $s$ variable-hyperedge occurrences, up to equality pattern and subject to the rank bound $w$, is bounded by a constant depending only on $s$ and $w$.
The body star patterns are then determined up to finitely many choices depending only on the corresponding variable ranks and the finite label alphabets.

Therefore the total number of bounded fixed-interface clause candidates, including the case $\ell=0$, is bounded by $c\sum_{\ell=0}^{t}(M_n+1)^{\ell+1}$ for some constant $c$ depending only on $s,t,w,d,\Sigma_V,\Sigma_E$.
This is bounded by a polynomial in $M_n$, and hence by a polynomial in $S_n$.
\end{proof}

\begin{lemma}\label{lem:polynomially-many-membership-queries-per-update}
Let $D_n=\{G_1,\ldots,G_n\}$ and let $S_n=\sum_{i=1}^{n}|V(G_i)|$.
Assume that every $G_i$ has maximum degree at most $\Delta$.
For fixed $s,t,w,d,\Delta$, the total number of membership queries used by Algorithm~\ref{alg:construct-gamma} at stage $n$ is bounded by a polynomial in $S_n$.
\end{lemma}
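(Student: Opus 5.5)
We count the membership queries issued inside Algorithm~\ref{alg:construct-gamma} at stage $n$ by splitting them into three groups, bound each group separately, and add the bounds. Throughout, put $M_n=|\mathrm{BRep}_w(D_n)|$. By Lemma~\ref{lem:brep-size-bound}, $M_n\le\xi_1(w,\Delta)S_n^w$. As in the proof of Lemma~\ref{lem:polynomially-many-predicate-symbols}, the current sets satisfy $|\mathcal F|\le M_n$ and $|\mathcal R|=M_n$, since $\mathcal R=\mathrm{BRep}_w(D_n)$.

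\textbf{Observation-table queries.} Algorithm~\ref{alg:construct-observation-table}, called from Algorithm~\ref{alg:construct-gamma}, issues at most one query per pair $(\rho,\sigma)\in\mathcal F\times\mathcal R$. This gives at most $M_n^2$ queries.

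\textbf{Zero-body clause candidates.} Each candidate with $\ell=0$ triggers at most one query. By Lemma~\ref{lem:polynomially-many-clause-candidates}, the number of candidates, including these, is at most $c\sum_{\ell=0}^{t}(M_n+1)^{\ell+1}$.

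\textbf{Non-fact clause candidates.} For a candidate $\gamma$ with $\ell\ge1$ body atoms, the inner loop ranges over $\mathcal R^{X(\gamma)}$. Each body atom carries exactly one variable label, so $|X(\gamma)|\le\ell\le t$. The loop therefore has at most $M_n^{t}$ iterations, and each iteration issues at most one query. Multiplying by the candidate count gives at most
\[
c\,(M_n+1)^{t+1}\cdot(t+1)\cdot M_n^{t}
\]
queries, which is polynomial in $M_n$ and hence in $S_n$, with exponent $O(tw)$.

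The main obstacle is the size requirement in the remark after Definition~\ref{def:poly-update}: each queried graph must have size polynomial in $S_n$, not merely the number of queries. To handle this, observe that $\mathrm{can}(\rho)$ and $\mathrm{can}(\sigma)$ are subgraphs of observed examples, so each has at most $S_n$ vertices. A realization $\Real(H,\theta)$ replaces at most $s$ hyperedges by such graphs, and its head frame comes from the finite family $\mathcal H_d$, so its ordinary vertices number at most a constant $h_d$. Consequently every queried graph $\mathrm{can}(\rho)\odot K$ has at most $S_n+h_d+sS_n$ vertices, which is linear in $S_n$. Summing the three groups proves the lemma.
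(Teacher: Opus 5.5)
Your proposal is correct and follows the paper's proof essentially step for step. It uses the same split into observation-table queries (at most $M_n^2$), zero-body queries (one per candidate, via Lemma~\ref{lem:polynomially-many-clause-candidates}), and non-fact queries (at most $M_n^{t}$ per candidate), summed to a polynomial in $S_n$. Your final paragraph on the size of the queried graphs is not needed here, since the statement concerns only the number of queries; the paper treats that point separately in the proof of Lemma~\ref{lem:polynomial-time-one-update}.
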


\begin{proof}
Let $M_n=|\mathrm{BRep}_w(D_n)|$.
By Lemma~\ref{lem:brep-size-bound}, $M_n$ is bounded by a polynomial in $S_n$.
At stage $n$, the current predicate basis $\mathcal F$ is either unchanged from an earlier stage or is equal to $\mathrm{BRep}_w(D_j)$ for some $j\le n$.
Hence $|\mathcal F|\le M_n$.
Moreover, Algorithm~\ref{alg:learning} sets $\mathcal R=\mathrm{BRep}_w(D_n)$ at stage $n$, and therefore $|\mathcal R|=M_n$.

First, consider the construction of the observation table $\mathrm{Obs}_{\mathcal F,\mathcal R}$.
For each pair $(\rho,\sigma)\in \mathcal F\times\mathcal R$, Algorithm~\ref{alg:construct-observation-table} issues at most one
membership query, namely for $\mathrm{can}(\rho)\odot \mathrm{can}(\sigma)\in L_*$, provided that the composition is defined. Hence the number of membership queries used for the observation table is at most $|\mathcal F|\,|\mathcal R|\le M_n^2$.

Next, consider the clause-admission tests in Algorithm~\ref{alg:construct-gamma}.
By Lemma~\ref{lem:polynomially-many-clause-candidates}, the number of bounded fixed-interface clause candidates considered at stage $n$, including the zero-body case $\ell=0$, is bounded by a polynomial in $S_n$.
For a zero-body candidate $p_\rho(H)\leftarrow$, Algorithm~\ref{alg:construct-gamma} issues at most one membership query, namely for $\mathrm{can}(\rho)\odot H\in L_*$, provided that the composition is defined.
Therefore the total number of membership queries used for zero-body candidates is bounded by the number of clause candidates, and hence by a polynomial in $S_n$.
Now consider a non-fact candidate $\gamma = p_\rho(H)\leftarrow p_{\rho_1}(H_1),\ldots,p_{\rho_\ell}(H_\ell)$, where $1\le \ell\le t$.
For each $i=1,\ldots,\ell$, let $x_i$ be the unique variable label occurring in the star graph pattern $H_i$, and put $X(\gamma)=\{x_i\mid 1\le i\le \ell\}$.
Algorithm~\ref{alg:construct-gamma} checks families $(\sigma_x)_{x\in X(\gamma)}\in\mathcal R^{X(\gamma)}$.
Since $|\mathcal R|=M_n$, the number of such families is $M_n^{|X(\gamma)|}$.
Since $|X(\gamma)|\le \ell\le t$, this number is at most $M_n^t$.

For each such family, Algorithm~\ref{alg:construct-gamma} issues at most one head membership query, namely for $\mathrm{can}(\rho)\odot \Real(H,\{x:=\mathrm{can}(\sigma_x)\}_{x\in X(\gamma)})\in L_*$, whenever the body observations are positive and the realization is defined.
Hence the number of membership queries used for a single non-fact candidate is at most $M_n^t$.
Multiplying this bound by the polynomially many clause candidates from Lemma~\ref{lem:polynomially-many-clause-candidates}, we obtain a polynomial bound in $S_n$ for all non-fact clause-admission queries.

Combining the observation-table queries, the zero-body clause queries, and the non-fact clause-admission queries, the total number of membership queries used by Algorithm~\ref{alg:construct-gamma} at stage $n$ is bounded by a polynomial in $S_n$.
\end{proof}

\begin{lemma}\label{lem:polynomial-time-one-update}
At each stage $n$, the total time required to construct the hypothesis $(\Gamma_n,p_{\rho_\emptyset})$ from $G_1,\ldots,G_n$ is bounded by a polynomial in $S_n=\sum_{i=1}^{n}|V(G_i)|$.
\end{lemma}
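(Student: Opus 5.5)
The plan is to decompose stage $n$ of Algorithm~\ref{alg:learning} into its constituent operations and bound each one by a polynomial in $S_n$. Membership queries count as unit-cost oracle calls, but the time to build each query graph is charged. The operations at stage $n$ are:
\begin{enumerate}[label=\textup{(\roman*)}]
\item constructing $\Gamma'=\Gamma(\mathcal F,\mathcal R)$ from the previous pair;
\item testing $D_n\subseteq L(\Gamma',p_{\rho_\emptyset})$;
\item possibly recomputing $\mathcal F=\mathrm{BRep}_w(D_n)$;
\item recomputing $\mathcal R=\mathrm{BRep}_w(D_n)$;
\item constructing $\Gamma_n=\Gamma(\mathcal F,\mathcal R)$.
\end{enumerate}
By Lemma~\ref{lem:brep-size-bound}, steps (iii) and (iv) take $O(\Delta\,2^{w\Delta}S_n^{w+1})$ time. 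In (i) and (v) every representative set is contained in $\mathrm{BRep}_w(D_n)$, because $D_j\subseteq D_n$. So $M_n=|\mathrm{BRep}_w(D_n)|\le\xi_1(w,\Delta)S_n^w$ bounds $|\mathcal F|$ and $|\mathcal R|$ throughout.

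For Algorithm~\ref{alg:construct-gamma}, I would bound three quantities separately.
\begin{itemize}
\item The number of predicate symbols, by Lemma~\ref{lem:polynomially-many-predicate-symbols}.
\item The number of clause candidates, by Lemma~\ref{lem:polynomially-many-clause-candidates}. Each candidate is generated in constant time: the frame is picked from the fixed family $\mathcal H_d$, the variable labels come from a constant-size equality pattern, and each predicate representative is one of at most $M_n+1$ choices.
\item The number of membership queries, by Lemma~\ref{lem:polynomially-many-membership-queries-per-update}, which covers both the observation table and the admission loops over $\mathcal R^{X(\gamma)}$.
\end{itemize}
The work around each query also needs bounding. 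Forming $\mathrm{can}(\rho)\odot\mathrm{can}(\sigma)$, or $\mathrm{can}(\rho)\odot\Real(H,\{x:=\mathrm{can}(\sigma_x)\})$, takes linear time in the sizes of the pieces. Each $\mathrm{can}(\sigma)$ is a subgraph of some $G_i$, so it has at most $S_n$ vertices. A realization glues at most $s$ such pieces into a head frame of constant size. Hence every query graph has $O(sS_n)$ vertices, which also meets the requirement that query sizes be polynomially bounded. Checking label compatibility for defined compositions is linear as well. Multiplying a polynomial count by a polynomial per-item cost gives a polynomial bound for (i) and (v).

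The main obstacle is step (ii), testing $D_n\subseteq L(\Gamma',p_{\rho_\emptyset})$. Proposition~\ref{prop:membership-bound} cannot be applied verbatim, because $\Gamma'$ is not $(m,s,t,w,d)$-bounded: its number of clauses is polynomial in $S_n$ rather than at most $m$. I would redo the counting in that proof with $m$ replaced by $|\Gamma'|$.

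For each $G_i$ with $N_i\le S_n$ vertices, reject immediately if $G_i$ has degree greater than $\Delta$. Otherwise enumerate $\Sub_w(G_i)$, which has size at most $\xi_1 N_i^{w+1}$. Instantiate each clause of $\Gamma'$ with at most $s$ choices from $\Sub_w(G_i)$, giving at most $|\Gamma'|\,|\Sub_w(G_i)|^s$ ground clauses, and run the least-fixpoint computation. Each atom comparison is an interface-preserving isomorphism test on graphs of bounded degree. Luks' algorithm, as cited in Corollary~\ref{cor:membership-bound}, makes this polynomial because the fixed-rank interface adds only constantly many constraints.

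The delicate point to verify is that the head frames in $\mathcal H_d$ have constant size. Then a ground instance of a head is determined by its substituted pieces, and the number of derivable ground atoms stays polynomial, namely $|\Gamma'|\cdot O(S_n^{s(w+1)})$. Summing over $i\le n\le S_n$ then yields a bound polynomial in $S_n$ for fixed $m,s,t,w,d,\Delta$. Adding the bounds for (i) through (v) proves the lemma.
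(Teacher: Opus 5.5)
Your proposal is correct and follows essentially the same route as the paper's proof. It uses the same stage-by-stage decomposition, relies on Lemmas~\ref{lem:brep-size-bound}, \ref{lem:polynomially-many-predicate-symbols}, \ref{lem:polynomially-many-clause-candidates} and \ref{lem:polynomially-many-membership-queries-per-update} to bound the construction of $\Gamma(\mathcal F,\mathcal R)$ and the sizes of the query graphs, and handles the update test with the membership-bound argument. Your treatment of that test is more careful than the paper's, which invokes Proposition~\ref{prop:membership-bound} for $\Gamma'$ directly even though $\Gamma'$ may have polynomially many clauses rather than at most $m$, and $L(\Gamma',p_{\rho_\emptyset})$ need not be $\Delta$-bounded (so your ``reject if degree exceeds $\Delta$'' step is harmless only because every $G_i$ lies in $L_*$); rerunning the count with $|\Gamma'|$ in place of $m$, using only the degree bound on the input $G_i$, is what is actually needed, and you also explicitly account for building $\Gamma'$, which the paper leaves implicit.
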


\begin{proof}
We estimate the running time of stage $n$ of Algorithm~\ref{alg:learning}.
At this stage, the algorithm constructs the hypothesis $(\Gamma_n,p_{\rho_\emptyset})$ from the positive examples $G_1,\ldots,G_n$.

First consider the update test of Algorithm~\ref{alg:learning}.
It tests whether $D_n\nsubseteq L(\Gamma',p_{\rho_\emptyset})$, where $\Gamma'$ is the clause system $\Gamma(\mathcal F,\mathcal R)$ constructed by Algorithm~\ref{alg:construct-gamma}.
The membership tests used in the update condition are tests for the current hypothesis $(\Gamma',p_{\rho_\emptyset})$ and are ordinary computations, not membership queries to the target language.
For this test, we check whether $G_i\in L(\Gamma',p_{\rho_\emptyset})$ for each graph $G_i\in D_n$.
For each such graph, the membership problem for $(\Gamma',p_{\rho_\emptyset})$ is solvable in polynomial time by Proposition~\ref{prop:membership-bound} and Corollary~\ref{cor:membership-bound}, since the structural parameters are fixed.
Since $|D_n|=n\le S_n$, the total cost of the update test is bounded by a polynomial in $S_n$.

Next, if the update test succeeds, Algorithm~\ref{alg:learning} constructs $\mathrm{BRep}_w(D_n)$ and assigns it to $\mathcal F$.
Regardless of whether $\mathcal F$ is updated, the algorithm also constructs $\mathcal R=\mathrm{BRep}_w(D_n)$ at stage $n$.
By Lemma~\ref{lem:brep-size-bound}, $\mathrm{BRep}_w(D_n)$ can be constructed in time polynomial in $S_n$.
Moreover, its size is also bounded by a polynomial in $S_n$.

It remains to estimate the cost of constructing $\Gamma_n=\Gamma(\mathcal F,\mathcal R)$ by Algorithm~\ref{alg:construct-gamma}.
By Lemma~\ref{lem:polynomially-many-predicate-symbols}, the number of predicate symbols introduced by Algorithm~\ref{alg:construct-gamma} is bounded by a polynomial in $S_n$.
By Lemma~\ref{lem:polynomially-many-clause-candidates}, the number of bounded fixed-interface clause candidates considered by Algorithm~\ref{alg:construct-gamma}, including the zero-body case $\ell=0$, is bounded by a polynomial in $S_n$.
The construction of the observation table and the clause-admission tests use membership queries to $L_*$.
By Lemma~\ref{lem:polynomially-many-membership-queries-per-update}, the total number of membership queries used by Algorithm~\ref{alg:construct-gamma} at stage $n$ is bounded by a polynomial in $S_n$.
For each such membership query, the queried graph is either of the form $\mathrm{can}(\rho)\odot \mathrm{can}(\sigma)$ for the observation table, or of the form $\mathrm{can}(\rho)\odot H$ for a zero-body clause candidate, or of the form $\mathrm{can}(\rho)\odot \Real(H,\{x:=\mathrm{can}(\sigma_x)\}_{x\in X(\gamma)})$ for a non-fact clause candidate $\gamma$.
Since the structural parameters are fixed and all representatives used at stage $n$ come from $\mathrm{BRep}_w(D_n)$, the size of each queried graph is bounded by a polynomial in $S_n$.
By the convention on membership queries, the answers to these queries are provided by the oracle. Lemma~\ref{lem:polynomially-many-membership-queries-per-update} shows that the number of queries is polynomial in $S_n$, and the queried graphs constructed above have size polynomially bounded in $S_n$.

All remaining internal operations in Algorithm~\ref{alg:construct-gamma} and Algorithm~\ref{alg:learning}, such as checking whether compositions and realizations are defined and constructing the corresponding graphs with interfaces, are performed over polynomially many candidates and graphs of polynomial size.
Hence their total cost is polynomial in $S_n$.

Combining the above bounds, every component of the learner's internal construction of $(\Gamma_n,p_{\rho_\emptyset})$ at stage $n$ is bounded by a polynomial in $S_n$.
Therefore the internal time required to compute $(\Gamma_n,p_{\rho_\emptyset})$ at stage $n$ is bounded by a polynomial in $S_n$.
\end{proof}

\begin{theorem}[Polynomial-time update]\label{thm:polynomial-time-update}
Algorithm~\ref{alg:learning} has polynomial-time update in the sense of Definition~\ref{def:poly-update}.
\end{theorem}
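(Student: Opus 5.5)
The theorem follows essentially by assembling Lemma~\ref{lem:polynomial-time-one-update} with Definition~\ref{def:poly-update}. The one subtlety is uniformity: Definition~\ref{def:poly-update} asks for a \emph{single} polynomial $P_{\mathrm{upd}}$ that works for every target language and every presentation. The plan is therefore to revisit the bounds in Lemmas~\ref{lem:brep-size-bound}--\ref{lem:polynomial-time-one-update} and check that each depends only on $S_n$ and the fixed tuple $(\Delta,m,s,t,w,d)$, together with $\mathcal H_d$ and the label alphabets. None of them may depend on the particular target system $\Gamma_*$. For a fixed parameter tuple the learner is a single algorithm, and its running time at stage $n$ is a function of the observed sample only, since membership queries are oracle calls that are not charged.

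The stage-$n$ work splits into three parts, each bounded separately.
\begin{enumerate}[label=\textup{(\roman*)}]
\item \emph{Update test.} This test decides $D_n\subseteq L(\Gamma',p_{\rho_\emptyset})$ for the hypothesis $\Gamma'$ carried over from the previous stage. I would invoke Proposition~\ref{prop:membership-bound} with the clause count of $\Gamma'$ in place of $m$. That count is polynomial in $S_n$ by Lemma~\ref{lem:polynomially-many-clause-candidates}, so the factor $\xi$ must be read as linear in the number of clauses. This is justified because the ground-clause count in that proof is the number of clauses times $|\Sub_w(G)|^s$. Isomorphism tests are then handled by Luks' algorithm, as in Corollary~\ref{cor:membership-bound}, since all graphs involved have degree at most $\Delta$ plus bounded interface constraints.
\item \emph{Boundary representations.} Computing $\mathrm{BRep}_w(D_n)$ is bounded by Lemma~\ref{lem:brep-size-bound}.
\item \emph{Hypothesis construction.} Constructing $\Gamma(\mathcal F,\mathcal R)$, done twice per stage, is bounded by Lemmas~\ref{lem:polynomially-many-clause-candidates} and \ref{lem:polynomially-many-membership-queries-per-update}. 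Queried graphs have size at most $|\mathrm{can}(\rho)|+|V(H)|+s\max_\sigma|\mathrm{can}(\sigma)|$. This is $O(S_n)$ because every $\mathrm{can}(\cdot)$ is a subgraph of an observed example and the head frames come from the finite family $\mathcal H_d$.
\end{enumerate}
Summing these gives the polynomial $P_{\mathrm{upd}}(S_n)$.

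The main obstacle is part (i). The hypothesis $\Gamma'$ is not $(m,s,t,w,d)$-bounded in its number of clauses, because it has polynomially many clauses rather than at most $m$, so Corollary~\ref{cor:membership-bound} cannot be applied verbatim. I would handle this by rerunning the counting in the proof of Proposition~\ref{prop:membership-bound} with $|\Gamma'|$ explicit. The ground clause set has size at most $|\Gamma'|\cdot|\Sub_w(G_i)|^{s}$, and least-fixpoint computation is polynomial in that size. The degree bound on $G_i\in D_n$ holds because the sample comes from $L_*$. The remaining parameters $s,t,w,d$ and $\mathcal H_d$ are respected by every candidate clause in $\Gamma'$. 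The result is a bound polynomial in $S_n$, with exponent depending only on $s$ and $w$, which completes the argument.
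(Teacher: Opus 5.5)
Your proposal is correct and follows the paper's route: the theorem is obtained from Lemma~\ref{lem:polynomial-time-one-update} together with Definition~\ref{def:poly-update}, and your three-part split of the stage-$n$ work (the update test, computing $\mathrm{BRep}_w(D_n)$, and the two constructions of $\Gamma(\mathcal F,\mathcal R)$) is exactly the split used in the proof of that lemma. Your treatment of the update test is more careful than the paper's. The paper simply cites Corollary~\ref{cor:membership-bound}, even though $\Gamma'$ has polynomially many clauses and predicate symbols rather than at most $m$, and rerunning the counting of Proposition~\ref{prop:membership-bound} with $|\Gamma'|$ explicit is the right repair. Whether that dependence on $|\Gamma'|$ comes out linear or quadratic depends on how the least fixpoint is computed, but either way the bound is polynomial in $S_n$.
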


\begin{proof}
Let $L_* \in \FICSLFCP_{\Delta}(m,s,t,w,d)$, and let $G_1,G_2,\ldots$ be a positive presentation of $L_*$.
For each stage $n$, Lemma~\ref{lem:polynomial-time-one-update} shows that the total time required to construct the $n$-th hypothesis $(\Gamma_n,p_{\rho_\emptyset})$ from the initial segment $G_1,\ldots,G_n$ is bounded by a polynomial in $S_n=\sum_{i=1}^{n}|V(G_i)|$.
Therefore there exists a polynomial $P_{\mathrm{upd}}$ such that, for every $n\ge 1$, the time required by Algorithm~\ref{alg:learning} to output $(\Gamma_n,p_{\rho_\emptyset})$ after reading $G_1,\ldots,G_n$ is at most $P_{\mathrm{upd}}(S_n)$.
Hence Algorithm~\ref{alg:learning} has polynomial-time update in the sense of Definition~\ref{def:poly-update}.
\end{proof}

The preceding bounds show that the learner stores only polynomially many ordered boundary representations in $\mathrm{BRep}_w(D_n)$, generates only polynomially many predicate symbols and bounded fixed-interface clause candidates, and performs only polynomially many membership queries at each stage.
Combined with the parameterized polynomial bound for the membership problem, this yields polynomial-time update for Algorithm~\ref{alg:learning}.

\section{Conclusions}\label{sec:conclusions}

We have studied the learnability of graph languages generated by fixed-interface clause systems from the viewpoint of distributional learning.
This paper is a full journal version of our earlier conference paper~\cite{shoudai2016ilp}, in which all proofs were omitted. 
It extends that work in four respects: complete proofs are provided, the framework is reformulated using fixed-interface clause systems and ordered boundary representations that make the boundary structure explicit, the bounded treewidth assumption of~\cite{shoudai2016ilp} is shown to be unnecessary for the present learning argument, and an explicit parameter tuple is introduced to trace the structural restrictions responsible for learnability and polynomial-time update.

A key point of the present approach is that, although the target language consists of plain graphs, the learner operates on graphs with interfaces obtained by cutting observed graphs along designated boundary vertices.
This explicit boundary structure is what makes it possible to define context candidates in a graph setting and to carry out a parameterized complexity analysis of the update process.
In this sense, the present formulation makes explicit a structural ingredient that is only implicit in fragment-based graph learning formalisms.

Within this framework, we considered a bounded class of graph languages satisfying the finite context property under a bounded-degree assumption.
The bounds were made explicit through the degree bound $\Delta$ and the structural parameter tuple $(m,s,t,w,d)$, which together control the generated graph class, the shape of the target clause systems, and the size of the candidate hypothesis space.
In particular, the finite head-frame condition should be viewed as the fixed-interface counterpart of a bounded rule-shape assumption.
We presented an oracle-guided learning algorithm based on ordered boundary representations induced by the observed positive examples and on membership-query-based clause admission.
The correctness proof was organized into separate steps: target contexts eventually appear in the observed sample, target clauses are reconstructed over the corresponding predicate representatives, and spurious non-fact clauses are eventually excluded by membership queries.
From these ingredients, we obtained identification in the limit from positive data and membership queries, together with polynomial-time update, for the class $\FICSLFCP_{\Delta}(m,s,t,w,d)$.

Several directions remain open.
First, the present analysis is restricted to unary predicate symbols.
It remains to investigate whether analogous parameterized bounds can be obtained for clause systems with higher-arity predicates.
Second, it would be desirable to make more explicit the formal relationship between fixed-interface clause systems and regular formal graph systems~\cite{uchida1995parallel,shoudai2016ilp}, and to clarify in a precise form how the current formulation extends the regular-FGS generation mechanism.
Third, the interface-degree-safety condition described in Remark~\ref{rem:interface-degree-safe} gives a sufficient but not necessary syntactic condition for bounded degree.
Identifying weaker conditions that still guarantee bounded degree under recursive derivations is an important direction. Fourth, and most intriguingly, the present work suggests a connection to factor-critical graphs.
It remains open whether factor-critical graphs can be generated by fixed-interface clause systems, and in particular whether they can be generated in the more restricted regular-FGS setting.
Although more expressive non-regular FGS variants can express such graphs, it is not known whether the fixed-interface setting, or even the more restricted regular-FGS setting, suffices.
Identifying the precise boundary of factor-critical graph classes expressible within $\FICSL_{\Delta}(m,s,t,w,d)$ is a challenging and promising direction for future work.

\section*{Acknowledgments}
This work was partially supported by JSPS KAKENHI Grant Numbers JP24K15074 and JP24K15090.

\bibliographystyle{fundam}
\bibliography{main}

\end{document}